\tikzstyle{max}=[shape=rectangle,draw,inner sep=0pt,minimum size=6mm,thick]
\tikzstyle{ran}=[shape=circle,draw,inner sep=0pt,minimum size=6mm,thick]
\newcommand{\Zset}{\mathbb{Z}}
\newcommand{\Rset}{\mathbb{R}}
\newcommand{\A}{\mathcal{A}}
\newcommand{\C}{\mathcal{C}}
\newcommand{\E}{\mathbb{E}}
\newcommand{\G}{\mathcal{G}}
\newcommand{\M}{\mathcal{M}}
\newcommand{\calO}{\mathcal{O}}
\newcommand{\Nset}{\mathbb{N}}
\newcommand{\Qset}{\mathbb{Q}}
\newcommand{\eps}{\varepsilon}
\newcommand{\size}[1]{||#1||}
\newcommand{\B}{\mathcal{B}}
\newcommand{\genTran}[2]{%
    {}\mathchoice%
    {\stackrel{#1}{#2}}
    {\mathop {\smash{#2}}\limits^{\vrule width 0pt height 0pt depth 4pt\smash{#1}}}
    {\stackrel{#1}{#2}}
    {\stackrel{#1}{#2}}
{}}
\newcommand{\btran}[1]{\genTran{#1}{\leadsto}}
\newcommand{\ltran}[1]{\genTran{#1}{\longrightarrow}}
\newcommand{\stochS}[1]{#1_0}  %
\newcommand{\pOneS}[1]{#1_1}  %
\newcommand{\conf}[2]{#1(#2)}  %
\newcommand{\cf}[1]{\mathcal{C}}  %
\newcommand{\len}[1]{len(#1)}
\newcommand{\runs}[1]{Run(#1)}  %
\newcommand{\Prb}[2]{\mathbb{P}^{#1}_{#2}}
\renewcommand{\Pr}[3]{\Prb{#1}{#2}\hspace{-0.16em}\left({#3}\right)}   %
\newcommand{\counter}[1]{C^{(#1)}}
\newcommand{\cstate}[1]{S^{(#1)}}
\newcommand{\Prob}{\mathit{Prob}}
\newcommand{\val}{\mathrm{Val}}
\newcommand{\fast}{}
\newcommand{\height}[1]{\mathit{height}(#1)}
\newcommand{\MEC}{\mathit{MEC}}
\newcommand{\primsol}{\left(\bar{x},(\bar{\poten}_q)_{q\in Q}\right)}
\newcommand{\dualsol}{\left((\bar{\lpfreq}_q)_{q\in Q_0},(\bar{\lpfreq}_{(q,i,q')})_{q \in Q_1,  (q,i,q') \in \delta}\right)}
\newcommand{\lpfreq}{y}
\newcommand{\lowerrem}[3]{B\left(#1,#2,#3\right)}
\newcommand{\trancard}[1]{n_{#1}}
\newcommand{\tleave}{T_{\rightarrow}}
\newcommand{\maralt}[1]{{\hat{m}}^{(#1)}}
\newcommand{\switch}{W}
\newcommand{\reach}{\mathit{hit}}
\newcommand{\leave}{\mathit{lv}}
\newcommand{\fin}{\mathit{fin}}
\newcommand{\sL}{\mathcal{L}}
\newcommand{\mar}[1]{m^{(#1)}}
\newcommand{\poten}{z} %
\newcommand{\ifApp}[2]%
{\ifthenelse{\isundefined{\showappendix}}{#2}{#1}}
\spnewtheorem{fact}[theorem]{Fact}{\bfseries}{\itshape}
\title{Minimizing Expected Termination Time in One-Counter Markov 
Decision Processes}
\titlerunning{Expected Termination Time in OC-MDPs}
\author{
Tom\'{a}\v{s} Br\'{a}zdil\inst{1}\thanks{Tom\'{a}\v{s} Br\'{a}zdil and Petr
Novotn\'{y} are supported by the Czech Science Foundation, 
grant No.~P202/12/G061. Anton\'{\i}n Ku\v{c}era is supported 
by the Czech Science Foundation, grant No.~P202/10/1469. Dominik 
Wojtczak is supported by EPSRC grant EP/G050112/2.} \and
Anton\'{\i}n Ku\v{c}era\inst{1}$^\star$ \and
Petr Novotn\'{y}\inst{1}$^\star$ \and
Dominik Wojtczak\inst{2}$^\star$
}
\authorrunning{
Br\'{a}zdil \and
Ku\v{c}era \and
Novotn\'{y} \and
Wojtczak
}
\institute{
Faculty of Informatics, Masaryk University \\
\email{\{xbrazdil,kucera\}@fi.muni.cz, petr.novotny.mail@gmail.com}
\and
Department of Computer Science, University of Liverpool \\
\email{d.wojtczak@liv.ac.uk}
}
\begin{document}

\maketitle

\begin{abstract}
  We consider the problem of computing the value and an optimal strategy
  for minimizing the expected termination time in one-counter Markov
  decision processes. Since the value may be irrational and an optimal
  strategy may be rather complicated, we concentrate on the problems
  of approximating the value up to a given error $\varepsilon >0$ and
  computing a finite representation of an $\varepsilon$-optimal strategy.
  We show that these
  problems are solvable in exponential time for a given configuration,
  and we also show that they are computationally
  hard in the sense that a polynomial-time approximation algorithm
  cannot exist unless P=NP.
\end{abstract}

\section{Introduction}
\label{sec-intro}

In recent years, a lot of research work has been devoted to the study
of stochastic extensions of various automata-theoretic models such as
pushdown automata, Petri nets, lossy channel systems, and many
others. In this paper we study the class of \emph{one-counter Markov
  decision processes (OC-MDPs)}, which are infinite-state MDPs 
\cite{Puterman:book,FV:book}
generated by finite-state automata operating over a single unbounded 
counter. Intuitively, an OC-MDP is specified by a
finite directed graph $\A$ where the nodes are control states and the
edges correspond to transitions between control states. Each control
state is either stochastic or non-deterministic, which means that the
next edge is chosen either randomly (according to a fixed
probability distribution over the outgoing edges) or by a
controller. Further, each edge either increments, decrements, or
leaves unchanged the current counter value. A \emph{configuration}
$\conf{q}{i}$ of an OC-MDP $\A$ is given by the current control state
$q$ and the current counter value~$i$ (for technical convenience,
we also allow negative counter values, although we are only interested 
in runs where the counter stays non-negative).   
The outgoing transitions of $\conf{q}{i}$ are determined by the edges 
of $\A$ in the natural way.

Previous works on OC-MDPs 
\cite{BBEKW:OC-MDP,BBE:OC-games,BBEK:OC-games-termination-approx} 
considered mainly the objective of \emph{maximizing/minimizing termination
probability}.  We say that a run initiated in a configuration
$\conf{q}{i}$ \emph{terminates} if it visits a configuration with 
zero counter. The goal of the controller
is to play so that the probability
of all terminating runs is maximized (or minimized). In this paper,
we study a related objective of \emph{minimizing the expected termination
time}. Formally, we define a random variable $T$ over the runs of $\A$
such that $T(\omega)$ is equal either to $\infty$ (if the run $\omega$ is 
non-terminating) or
to the number of transitions need to reach a configuration with 
zero counter (if $\omega$ is terminating). The goal of the controller
is to minimize the expectation $\E(T)$. The \emph{value} of 
$\conf{q}{i}$ is the infimum of $\E(T)$ over all 
strategies. It is easy to see that the
controller has a memoryless deterministic strategy which is optimal
(i.e., achieves the value) in every configuration.
However, since OC-MDPs have infinitely many configurations, this does 
not imply that an 
optimal strategy is finitely representable and computable. Further, 
the value itself can be irrational. Therefore, we concentrate on 
the problem of \emph{approximating} the value of a given configuration
up to a given (absolute or relative) error $\varepsilon > 0$, and computing
a strategy which is \emph{$\varepsilon$-optimal} (in both absolute 
and relative sense). Our main results can be summarized as follows:

\begin{itemize}
\item \emph{\bfseries 
  The value and optimal strategy can be effectively approximated up to 
  a given relative/absolute error
  in exponential time.} More precisely, we show that given a 
  OC-MDP $\A$, a configuration $\conf{q}{i}$ of $\A$ where $i \geq 0$,
  and $\varepsilon >0$,
  the value of $\conf{q}{i}$ up to the (relative or absolute) error
  $\varepsilon$ is computable in time exponential in the encoding size
  of $\A$, $i$, and $\varepsilon$, where all numerical constants are 
  represented as fractions of binary numbers. Further, there is 
  a history-dependent deterministic strategy $\sigma$ computable
  in exponential time such that the absolute/relative difference 
  between the value of $\conf{q}{i}$ and the outcome of $\sigma$ in 
  $\conf{q}{i}$ is bounded by $\varepsilon$.
\item \emph{\bfseries The value is not approximable in polynomial time 
  unless P=NP.}
  This hardness result holds even if we restrict ourselves to 
  configurations with counter value equal to~$1$ and to
  OC-MDPs where every outgoing edge of a stochastic 
  control state has probability $1/2$. The result is valid for absolute
  as well as relative approximation.
\end{itemize}

\noindent
Let us sketch the basic ideas behind these results. The upper bounds 
are obtained in two steps. In the first step (Section~\ref{sec:scMDP}), 
we analyze the special case when the underlying graph of~$\A$ is strongly 
connected.
We show that minimizing the expected termination time is closely related
to minimizing the expected increase of the counter per transition, at 
least for large counter values. We start by 
computing the minimal expected increase of the counter per transition
(denoted by $\bar{x}$) achievable by the controller, and the
associated strategy~$\sigma$. This is done by standard linear 
programming techniques developed for optimizing the long-run average 
reward in finite-state MDPs (see, e.g.,~\cite{Puterman:book}) applied 
to the underlying finite graph of~$\A$. Note that $\sigma$ depends only
on the current control state and ignores the current counter value
(we say that $\sigma$ is \emph{counterless}). Further, the encoding size
of $\bar{x}$ is \emph{polynomial} in $\size\A$. Then, we distinguish 
two cases.
\smallskip

\noindent
\textit{Case~(A), $\bar{x} \geq 0$}. Then the counter does not have a
tendency to decrease \emph{regardless} of the controller's strategy,
and the expected termination time value is infinite in all
configurations $\conf{q}{i}$ such that $i \geq |Q|$, where $Q$ is the
set of control states of~$\A$ (see
Proposition~\ref{prop:charact}.~A). For the finitely many remaining
configurations, we can compute the value and optimal strategy
precisely by standard methods for finite-state MDPs.  
\smallskip

\noindent
\textit{Case~(B), $\bar{x} < 0$}. Then, one intuitively expects that
applying the strategy $\sigma$ in an initial configuration
$\conf{q}{i}$ yields the expected termination time about
$i/|\bar{x}|$. Actually, this is \emph{almost} correct; we show
(Proposition~\ref{prop:charact}.~B.2) that this expectation is bounded
by $(i+U)/|\bar{x}|$, where $U \geq 0$ is a constant depending only on
$\A$ whose size is at most exponential in~$\size\A$. Further, we show
that an \emph{arbitrary} strategy $\pi$ applied to $\conf{q}{i}$
yields the expected termination time \emph{at least}
$(i-V)/|\bar{x}|$, where $V \geq 0$ is a constant depending only on
$\A$ whose size is at most exponential in~$\size\A$
(Proposition~\ref{prop:charact}.~B.1). In particular, this applies to
the \emph{optimal} strategy $\pi^*$ for minimizing the expected
termination time. Hence, $\pi^*$ can be more efficient than $\sigma$,
but the difference between their outcomes is bounded by a constant
which depends only on $\A$ and is at most exponential in $\size\A$.
We proceed by computing a sufficiently large $k$ so that the
probability of increasing the counter to $i+k$ by a run initiated in
$\conf{q}{i}$ is inevitably (i.e., under any optimal strategy) 
so small that the controller can safely switch to the
strategy~$\sigma$ when the counter reaches the value $i+k$. Then, we
construct a \emph{finite-state} MDP $\M$ and a reward function $f$
over its transitions such that
\begin{itemize}
\item the states are all configurations $\conf{p}{j}$ where $0 \leq j
  \leq i+k$;
\item all states with counter values less than $i+k$ ``inherit'' their
  transitions from $\A$; configurations of the form $\conf{p}{i+k}$
  have only self-loops;
\item the self-loops on configurations where the counter equals
  $0$ or $i+k$ have zero reward, transitions leading to configurations
  where the counter equals $i+k$ have reward  $(i+k+U)/|\bar{x}|$,
  and the other transitions have reward~$1$.
\end{itemize}
In this finite-state MDP $\M$, we compute an optimal memoryless
deterministic strategy $\varrho$ for the total accumulated reward
objective specified by~$f$. Then, we consider another strategy 
$\hat{\sigma}$ for $\conf{q}{i}$ which behaves like
$\varrho$ until the point when the counter reaches $i+k$, and from
that point on it behaves like~$\sigma$. It turns out that the absolute as
well as relative difference between the outcome of $\hat{\sigma}$ in
$\conf{q}{i}$ and the value of $\conf{q}{i}$ is bounded
by~$\varepsilon$, and hence $\hat{\sigma}$ is the desired
$\varepsilon$-optimal strategy.
\smallskip

In the general case when $\A$ is not necessarily strongly connected
(see Section~\ref{sec:nscMDP}),
we have to solve additional difficulties. Intuitively, we split
the graph of~$\A$ into maximal end components (MECs), where each MEC
can be seen seen as a strongly connected \mbox{OC-MDP} and analyzed by 
the techniques discussed above. In particular, for every MEC~$C$ we
compute the associated $\bar{x}_C$ (see above).
Then, we consider a strategy which tries 
to reach a MEC as quickly as possible so that the expected value
of the fraction $1/|\bar{x}_C|$ is \emph{minimal}. After reaching
a target MEC, the strategy starts to behave as the strategy $\sigma$ discussed
above. It turns out that this particular strategy cannot be much worse
than the optimal strategy (a proof of this claim requires new 
observations), and the rest of the argument is similar as in the
strongly connected case.

The lower bound, i.e., the result saying that the value cannot be
efficiently approximated unless P=NP (see Section~\ref{sec-lower}), 
seems to be the first result of
this kind for OC-MDPs. Here we combine the technique of encoding
propositional assignments presented in
\cite{Kucera:OC-FS-weak-bisimilarity-TCS} (see also
\cite{JKMS:one-counter-generic-IC}) with some new gadgets constructed
specifically for this proof (let us note that we did not manage to
improve the presented lower bound to PSPACE by adapting other known
techniques
\cite{GL:one-counter-branching,Serre:Parity-games-OC,JS:AFA-one-letter-IPL}).
As a byproduct, our proof also reveals that the optimal strategy for
minimizing the expected termination time \emph{cannot} ignore the
precise counter value, even if the counter becomes very large. In our
example, the (only) optimal strategy is \emph{eventually periodic} in
the sense that for a sufficiently large counter value~$i$, it is only
``$i$ modulo $c$'' which matters, where $c$ is a fixed (exponentially
large) constant.  The question whether there \emph{always} exists an
optimal eventually periodic strategy is left open. Another open
question is whether our results can be extended to stochastic games
over one-counter automata.  \smallskip

\noindent
\textbf{Related work:} One-counter automata can also be seen as
pushdown automata with one letter stack alphabet. Stochastic games and
MPDs generated by pushdown automata and stateless pushdown automata
(also known as BPA) with termination and reachability objectives have
been studied in \cite{EY:RMC-RMDP,EY:RMDP-efficient,%
  BBFK:BPA-games-reachability-IC,BBKO:BPA-games-reachability-IC}.  To
the best of our knowledge, the only prior work on the expected
termination time (or, more generally, total accumulated reward)
objective for a class of infinite-state MDPs or stochastic games is
\cite{EWY:RSG-Positive-Rewards}, where this problem is studied for
stochastic BPA games. The termination objective for one-counter MDPs
and games has been examined in
\cite{BBEKW:OC-MDP,BBE:OC-games,BBEK:OC-games-termination-approx},
where it was shown (among other things) that the equilibrium
termination probability (i.e., the termination value) can be
approximated up to a given precision in exponential time, but no
lower bound was provided. The games over one-counter automata are
also known as ``energy games''
\cite{CHD:energy-games,CHDHR:energy-mean-payoff}.  Intuitively, the
counter is used to model the amount of currently available energy, and
the aim of the controller is to optimize the energy
consumptions. Finally, let us note that OC-MDPs can be seen as
discrete-time Quasi-Birth-Death Processes (QBDs, see, e.g.,
\cite{LR:book,EWY:one-counter-PE}) extended with a control. Hence, the theory of
one-counter MDPs and games is closely related to queuing theory, where
QBDs are considered as a fundamental model.

\section{Preliminaries}
\label{sec-defs}
Given a set $A$, we use $|A|$ to denote the cardinality of $A$.  We
also write $|x|$ to denote the absolute value of a given $x \in \Rset$,
but this should not cause any confusions.  The encoding
size of a given object $B$ is denoted by $\size B$. The set of
integers is denoted by $\Zset$, and the set of positive integers 
by~$\Nset$.

We assume familiarity with  basic notions of probability theory.
In particular, we call a probability distribution
$f$ over a discrete set $A$ \emph{positive} if $f(a) > 0$ for all
$a \in A$, and \emph{Dirac} if $f(a) = 1$ for some $a \in A$.

\begin{definition}[MDP]
A \emph{Markov decision process (MDP)} is a tuple 
$\M = (S,(S_0,S_1),\btran{},\Prob)$,
consisting of a countable set of \emph{states} $S$
partitioned into the sets $S_0$ and $S_1$ of \emph{stochastic}
and \emph{non-deterministic} states, respectively.
The \emph{edge relation} ${\btran{}} \subseteq S \times S$
is total, i.e., for every $r\in S$ there is $s\in S$ such that 
$r \btran{} s$. Finally, $\Prob$ assigns to every $s \in S_0$
a positive probability distribution over its outgoing edges.
\end{definition}

\noindent
A \emph{finite path} is a sequence $w=s_0 s_1 \cdots s_n$ of states
such that $s_i \btran{} s_{i+1}$ for all $0\leq i < n$. We write
$\len{w}=n$ for the length of the path.  A \emph{run} is an
infinite sequence $\omega$ of states such that every finite prefix 
of~$\omega$ is a path.
For a finite path, $w$, we denote by $\runs{w}$ the set of runs having
$w$ as a prefix.  These generate the standard $\sigma$-algebra on the
set of runs.

\begin{definition}[OC-MDP]
  A \emph{one-counter MDP (OC-MDP)} is a tuple $\A =
  (Q,(Q_0,Q_1),\delta,P)$, where $Q$ is a finite non-empty set of
  \emph{control states} partitioned into stochastic and non-deterministic
  states (as in the case of MDPs), $\delta\subseteq
  Q\times\{+1,0,-1\}\times Q$ is a set of \emph{transition rules} such
  that $\delta(q)\coloneqq\{(q,i,r) \in \delta\}\neq\emptyset$ for all
  $q\in Q$, and $P=\{P_q\}_{q\in\stochS{Q}}$ where $P_q$ is a positive
  rational probability distribution over $\delta(q)$ for all $q\in
  Q_0$.
\end{definition}

\noindent
In the rest of this paper we often write $q \ltran{i} r$ to indicate
that $(q,i,r)\in\delta$, and $q \ltran{i,x} r$ to indicate that
$(q,i,r)\in\delta$, $q$ is stochastic, and $P_q(q,i,r) = x$.
Without restrictions, we assume that for each pair
$q,r\in Q$ there is at most one $i$ such that $(q,i,r)\in\delta$.
The encoding size of $\A$ is denoted by 
$\size\A$, where all numerical constants are encoded as
fractions of binary numbers. The set of all \emph{configurations}
is $\cf\A \coloneqq \{\conf{q}{i} \mid q\in Q, i\in \Zset\}$.

To $\A$ we associate an infinite-state MDP
$\M^{\infty}_{\A} = (\cf\A,(\cf\A_0,\cf\A_1),\btran{},\Prob)$, where
the partition of $\cf\A$ is defined by
$\conf{q}{i}\in\stochS{\cf{\A}}$ iff $q\in\stochS{Q}$,
and similarly for $\cf\A_1$. The edges are defined by
$\conf{q}{i}\btran{}\conf{r}{j}$ iff $(q,j-i,r)\in\delta$.
The probability assignment $Prob$ is derived naturally
from~$P$.

By forgetting the counter values,
the OC-MDP $\A$ also defines a finite-state MDP
$\M_\A = (Q,(Q_0,Q_1), \btran{},\Prob')$.
Here $q\btran{} r$ iff $(q,i,r)\in\delta$ for some $i$,
and $Prob'$ is derived in the obvious way from $P$
by forgetting the counter changes.
\smallskip

\noindent
\textbf{Strategies and Probability.}\quad Let $\M$ be an MDP.
A \emph{history} is a finite path in $\M$, and a \emph{strategy} 
(or \emph{policy}) is a function assigning to each history
ending in a state from $S_1$ a distribution on edges leaving the 
last state of the history.
A strategy $\sigma$ is \emph{pure} (or \emph{deterministic}) if it 
always assigns $1$ to one edge and $0$ to the others, and 
\emph{memoryless} if $\sigma(w) = \sigma(s)$ where $s$ is the last state
of a history $w$.

Now consider some OC-MDP $\A$. A strategy $\sigma$ over the histories 
in $\M^{\infty}_{\A}$ is \emph{counterless} if it is memoryless and 
$\sigma(\conf{q}{i})=\sigma(\conf{q}{j})$ for all $i,j$.
Observe that every strategy $\sigma$ for $\M^{\infty}_{\A}$ gives a unique 
strategy $\sigma'$ for $\M_{\A}$ which just forgets the counter 
values in the history and plays as $\sigma$. This correspondence is
bijective when restricted to memoryless strategies in $\M_{\A}$
and counterless strategies in $\M^{\infty}_{\A}$, and it is used 
implicitly throughout the paper.

Fixing a strategy $\sigma$ and an initial state $s$,
we obtain in a standard way a probability measure $\Pr{\sigma}{s}{\cdot}$
on the subspace of runs starting in $s$.
For MDPs of the form $\M^{\infty}_{\A}$ for some OC-MDP $\A$, we consider 
two sequences of random variables,
$\{\counter{i}\}_{i\geq0}$
and
$\{\cstate{i}\}_{i\geq0}$,
returning the current counter value and the current control state 
after completing $i$ transitions.
\smallskip

\noindent
\textbf{Termination Time in OC-MDPs.}\quad
Let $\A$ be a OC-MDP. A run $\omega$ in $\M^\infty_{\A}$
\emph{terminates} if $\omega(j) = q(0)$ for some $j \geq 0$ and $q \in
Q$. The associated \emph{termination time}, denoted by $T(\omega)$, is
the least $j$ such that $\omega(j) = q(0)$ for some $q \in Q$. If
there is no such~$j$, we put $T(\omega) = \infty$, where the symbol
$\infty$ denotes the ``infinite amount'' with the standard
conventions, i.e., \mbox{$c < \infty$} and $\infty + c = \infty +
\infty = \infty \cdot d = \infty$ for arbitrary real numbers~$c,d$
where $d > 0$. %

For every strategy
$\sigma$ and a configuration $\conf{q}{i}$, we use $\E^\sigma
\conf{q}{i}$ to denote the expected value of $T$ in the probability
space of all runs initiated in $\conf{q}{i}$ where
$\Pr{\sigma}{\conf{q}{i}}{\cdot}$ is the underlying probability
measure.  The \emph{value} of a given configuration $\conf{q}{i}$ is
defined by $\val(\conf{q}{i})\coloneqq \inf_{\sigma} \E^\sigma
\conf{q}{i}$.  Let $\varepsilon \geq 0$ and $i \geq 1$. 
We say that a constant $\nu$
approximates $\val(\conf{q}{i})$ up to the absolute or relative
error $\eps$ if $|\val(\conf{q}{i}) - \nu| \leq \eps$ or
$|\val(\conf{q}{i}) - \nu|/\val(\conf{q}{i}) \leq \eps$, respectively.
Note that if $\nu$ approximates $\val(\conf{q}{i})$ up to the absolute
error $\eps$, then it also approximates $\val(\conf{q}{i})$ up to
the relative error $\eps$ because $\val(\conf{q}{i}) \geq 1$.
A strategy $\sigma$ is (absolutely or relatively)
\emph{$\varepsilon$-optimal} if $\E^\sigma \conf{q}{i}$ approximates
$\val(\conf{q}{i})$ up to the (absolute or relative) error~$\eps$.
A $0$-optimal strategy is called \emph{optimal}.

It is easy to see that there is a memoryless deterministic strategy
$\sigma$ in $\M^\infty_{\A}$ which is optimal in every configuration
of $\M^\infty_{\A}$.
First, observe that for all $q \in Q_0$, $q' \in Q_1$, 
and $i \neq 0$ we have that
\[
\begin{array}{lcl}
   \val(\conf{q}{i})  & = & 1 + \sum_{\conf{q}{i} \btran{x} \conf{r}{j}} x \cdot 
          \val(\conf{r}{j})\\[1ex]
   \val(\conf{q'}{i}) & = & 1 + \min\{\val(\conf{r}{j}) \mid \conf{q'}{i} 
          \btran{} \conf{r}{j}\}.
\end{array}
\]
We put $\sigma(\conf{q}{i}) = \conf{r}{j}$ where 
$\conf{q}{i} \btran{} \conf{r}{j}$ and 
$\val(\conf{q}{i}) = 1 + \conf{r}{j}$ (if there are several candidates for
$\conf{r}{j}$, any of them can be chosen). Now we can easily verify 
that $\sigma$ is indeed optimal in every configuration.

\section{Upper Bounds}
\label{sec-upper}

The goal of this section is to prove the following:
\begin{theorem}\label{thm:upper-main}\ 
  Let  $\A$ be a OC-MDP, $\conf{q}{i}$ a configuration of $\A$ where
  $i \geq 0$, and $\eps >0$.%
  \begin{enumerate}
  \item The problem whether $\val(\conf{q}{i})=\infty$ is decidable in 
     polynomial time.
  \item There is an algorithm that computes a rational number $\nu$ 
    such that \mbox{$|\val(\conf{q}{i})-\nu|\leq\eps$}, and a strategy $\sigma$ 
    that is absolutely $\eps$-optimal starting in $\conf{q}{i}$.
    The algorithm  runs in time exponential in $\size\A$ and polynomial 
    in $i$ and $1/\varepsilon$. (Note that $\nu$ then approximates 
    $\val(\conf{q}{i})$ also up to the relative error~$\eps$, and 
    $\sigma$ is also relatively $\eps$-optimal in $\conf{q}{i}$).
\end{enumerate}
\end{theorem}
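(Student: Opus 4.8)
\noindent
The plan is to prove both parts by first treating the case in which the underlying graph of $\A$ is strongly connected (Section~\ref{sec:scMDP}) and then lifting the analysis to arbitrary OC-MDPs via a decomposition into maximal end components (Section~\ref{sec:nscMDP}). For a strongly connected $\A$, the first step is to compute, by the standard linear program for minimizing the long-run average reward in the finite-state MDP $\M_\A$ with the counter change as the per-step reward, the least achievable expected counter change per transition $\bar{x}$ together with an optimal counterless strategy $\sigma$; both objects have size polynomial in $\size\A$. The argument then splits on the sign of $\bar{x}$.

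If $\bar{x}\geq 0$, then under every strategy the counter is a process with non-negative drift, so an optional-stopping argument shows that the expected time to reach counter value $0$ from any configuration $\conf{q}{i}$ with $i\geq|Q|$ is infinite, giving $\val(\conf{q}{i})=\infty$ there; for the polynomially many remaining configurations with $i<|Q|$ one truncates $\M^\infty_{\A}$ to a finite-state MDP and computes $\val$ and an optimal strategy exactly by finite-MDP methods. If $\bar{x}<0$, the heart of the matter is the two-sided estimate recorded in Proposition~\ref{prop:charact}: every strategy from $\conf{q}{i}$ yields expected termination time at least $(i-V)/|\bar{x}|$, while $\sigma$ yields at most $(i+U)/|\bar{x}|$, where $U,V\geq 0$ depend only on $\A$ and have size at most exponential in $\size\A$. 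The lower bound follows by applying optional stopping to the counter corrected by the potential (bias) vector returned by the mean-payoff LP, which turns it into a submartingale with per-step drift $\bar{x}$, together with a uniform bound on the correction term at the termination time; the upper bound is a matching analysis of the hitting time of $0$ under $\sigma$, the delicate point being the behaviour while the counter is small. Given this sandwich, one chooses $k$ exponentially large so that under any optimal strategy the probability that the counter ever reaches $i+k$ before terminating, times the penalty $(i+k+U)/|\bar{x}|$, is at most $\eps$ (an escape-probability estimate for one-counter chains), builds the finite-state MDP $\M$ on the states $\conf{p}{j}$ with $0\leq j\leq i+k$ and the reward function $f$ described in the introduction, computes an optimal memoryless deterministic strategy $\varrho$ for the total accumulated reward, and lets $\hat\sigma$ follow $\varrho$ until the counter hits $i+k$ and $\sigma$ afterwards. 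Comparing the reward collected by $\varrho$ with $\val(\conf{q}{i})$ through the sandwich and the choice of $k$ gives $|\E^{\hat\sigma}\conf{q}{i}-\val(\conf{q}{i})|\leq\eps$. Since $\M$ has $O(|Q|\cdot(i+k))$ states with $k$ exponential in $\size\A$, the whole procedure runs in time exponential in $\size\A$ and polynomial in $i$ and $1/\eps$; and since $\val(\conf{q}{i})\geq1$, the absolute bound also yields the relative one.

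For a general $\A$ one computes the MEC decomposition of $\M_\A$ in polynomial time; each MEC $C$ is a strongly connected OC-MDP with its own constant $\bar{x}_C$ of polynomial size. For part~1, $\val(\conf{q}{i})$ is finite iff the controller can reach from $q$ a MEC $C$ with $\bar{x}_C<0$, or $i<|Q|$ and the counter can be driven to $0$ directly; both conditions are decidable in polynomial time, and otherwise (in particular whenever $i\geq|Q|$) the value is $\infty$ by the $\bar{x}\geq0$ argument applied along every strategy. For part~2 one analyses the strategy that races to enter a MEC minimizing the expected value of $1/|\bar{x}_C|$ and then switches to the within-MEC strategy; a new argument shows this strategy is worse than optimal by at most an additive constant of size exponential in $\size\A$, so $\val(\conf{q}{i})$ is again linear in $i$ up to an exponential additive slack, and the same truncation-and-stitching construction as above yields the approximation and the $\eps$-optimal strategy within the stated resources.

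I expect the main obstacle to be Proposition~\ref{prop:charact} — the two-sided estimate $\val(\conf{q}{i})\approx i/|\bar{x}|$ with an at-most-exponential additive error — and, in the non-strongly-connected setting, the near-optimality of rushing to the best reachable MEC. These require careful optional-stopping / martingale arguments that simultaneously control the overshoot when the counter first reaches $0$ (or first reaches level $i+k$), the mean-payoff-blind behaviour while the counter is small, and the trade-off between reaching some MEC quickly versus reaching a better one; the exponential size bounds on $U$, $V$, $k$ and the MEC-related constants all ultimately trace back to first-return and escape probability estimates for one-counter Markov chains.
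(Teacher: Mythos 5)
Your proposal follows essentially the same route as the paper: the linear program for the minimal long-run counter drift $\bar{x}$ with its bias vector, the submartingale $\counter{i}+\bar{\poten}_{\cstate{i}}-i\bar{x}$ and optional stopping for the two-sided sandwich $\val(\conf{q}{i})\approx i/|\bar{x}|$ up to exponential additive slack, the truncation of $\M^\infty_\A$ at level $i+k$ with the reward function penalizing escape, the switching strategy $\hat\sigma$, and in the general case the MEC decomposition with the strategy that races to minimize the expected $1/|\bar{x}_C|$ and then plays the within-MEC strategy. The key lemmas you flag as the main obstacles (Proposition~\ref{prop:charact} and the near-optimality of rushing to the best MEC, i.e.\ Propositions~\ref{prop:char-fin} and~\ref{prop:lower-main}) are exactly the technical core of the paper's argument.
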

For the rest of this section, we fix an OC-MDP $\A =
(Q,(Q_0,Q_1),\delta,P)$. First, we prove Theorem~\ref{thm:upper-main}
under the assumption that $\M_\A$ is \emph{strongly connected}
(Section~\ref{sec:scMDP}).  A generalization to arbitrary OC-MDP is
then given in Section~\ref{sec:nscMDP}.

\subsection{Strongly connected OC-MDP}\label{sec:scMDP}

\begin{figure}[t]
\begin{align*}
\makebox[2em]{~~~~~~~~\textit{maximize} $x$, \textrm{ subject to }}\\
\poten_q
&\leq
-x
+
k
+
\poten_r
&
\text{for all $q\in\pOneS{Q}$ and $(q,k,r)\in\delta$,}
\\
\poten_q
&\leq
-x
+
\textstyle\sum_{(q,k,r)\in\delta}
P_q((q,k,r))\cdot
(k+\poten_r)
&
\text{for all $q\in\stochS{Q}$,}%
\end{align*}
\caption{The linear program $\sL$ over $x$ and $\poten_q$, $q\in Q$.}
\label{fig:system-L}
\end{figure}

Let us assume that $\M_\A$ is strongly connected, i.e., for all $p,q
\in Q$ there is a finite path from $p$ to $q$ in~$\M_\A$.  Consider
the linear program of~Figure~\ref{fig:system-L}. Intuitively, the
variable $x$ encodes a lower bound on the long-run trend of the
counter value.  More precisely, the maximal value of~$x$ corresponds
to the \emph{minimal} long-run average change in the counter value
achievable by some strategy. The program corresponds to the one used
for optimizing the long-run average reward in Sections~8.8 and~9.5
of~\cite{Puterman:book}, and hence we know it has a solution.

\begin{lemma}[\cite{Puterman:book}]
\label{lem:sol}
  There is a rational solution
  $\left(\bar{x},(\bar{\poten}_q)_{q\in Q}\right)\in \Qset^{|Q|+1}$ to
  $\sL$, and the encoding size\footnote{Recall that rational numbers 
  are represented as fractions of binary numbers.}
  of the solution is polynomial in
  $\size{\A}$.
\end{lemma}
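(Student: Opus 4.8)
The plan is to reduce the claim to the classical theory of average-reward optimization in finite-state MDPs. First I would observe that the linear program $\sL$ in Figure~\ref{fig:system-L} is, up to a change of sign, exactly the dual (or bias/gain) linear program associated with the finite-state MDP $\M_\A$ equipped with the reward that assigns to each transition rule $(q,k,r)\in\delta$ the reward $k\in\{+1,0,-1\}$. Here $x$ plays the role of the gain (long-run average reward) and the $\poten_q$ play the role of the bias (relative value / potential) terms. Since $\M_\A$ is strongly connected, the underlying MDP is \emph{unichain} (indeed communicating), so the optimal gain is a single constant independent of the starting state; this is precisely why $\sL$ has a single scalar variable $x$ rather than one per control state. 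By the standard results on the average-reward LP for unichain finite MDPs (Sections~8.8 and~9.5 of~\cite{Puterman:book}), this program is feasible and its optimum is attained, and an optimal solution is given by an optimal gain together with a bias vector; this yields a feasible $\left(\bar{x},(\bar{\poten}_q)_{q\in Q}\right)$ maximizing $x$.

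Second, I would argue rationality and the size bound. The LP $\sL$ has $|Q|+1$ variables and $|\delta|\le 3|Q|^2$ linear inequality constraints, and all coefficients (the entries $k\in\{-1,0,1\}$ and the probabilities $P_q((q,k,r))$) are rationals whose encoding size is bounded by $\size\A$. A feasible and bounded rational LP attains its optimum at a vertex of its feasible polyhedron, i.e.\ at a basic feasible solution; such a vertex is the unique solution of a square subsystem of the constraints, and hence its coordinates are rationals given by ratios of determinants of submatrices of the constraint matrix (Cramer's rule). By the standard bound on determinants of matrices with small rational entries, the bit-size of each such ratio is polynomial in the number of rows/columns and in the bit-size of the entries — hence polynomial in $\size\A$. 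This gives the desired rational solution of polynomial encoding size. (One subtlety: one must check boundedness of $x$, i.e.\ that the maximum is finite; this follows because summing the stochastic constraint along any run, or more directly noting that the counter change per step is at most $1$, forces $x\le 1$, so the program is bounded above.)

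The main obstacle is essentially bookkeeping rather than a genuine mathematical difficulty: one must verify carefully that $\sL$ really is (equivalent to) the Puterman average-reward LP for $\M_\A$ in the strongly connected / unichain case — in particular matching the sign conventions (the counter-\emph{minimization} objective becomes a \emph{maximization} of $x = {}$(minus the long-run average counter increase)) and confirming that strong connectivity of $\M_\A$ places us in the unichain regime where a scalar gain suffices. Once that identification is made, feasibility and attainment of the optimum are quoted directly from~\cite{Puterman:book}, and the polynomial size bound is the routine vertex/Cramer's-rule argument for LPs with small rational data. I would therefore structure the proof as: (i) recall the average-reward LP and its solvability for unichain MDPs; (ii) exhibit the reward on $\M_\A$ and check $\sL$ coincides with that LP; (iii) invoke solvability to get $\left(\bar{x},(\bar{\poten}_q)_{q\in Q}\right)$; (iv) invoke the LP vertex bound to conclude rationality and polynomial size.
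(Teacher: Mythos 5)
Your plan coincides with the paper's own treatment, which offers no proof beyond identifying $\sL$ with the average-reward LP of Sections~8.8 and~9.5 of \cite{Puterman:book} (reward equal to the counter change on each rule) and citing that source; the rationality and polynomial-size claims are, as you say, routine once one recognizes $\sL$ as a feasible, bounded LP with rational data of encoding size polynomial in $\size\A$. One small correction: strong connectivity of $\M_\A$ makes the MDP \emph{communicating}, not necessarily unichain (communicating is the weaker of the two), but Puterman's results for communicating MDPs still yield a state-independent optimal gain, so your argument is unaffected; likewise your parenthetical ``$x={}$minus the average counter increase'' should read ``$x={}$the minimal long-run average counter change,'' which again does not alter the reasoning.
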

Note that $\bar{x}\geq -1$, because for any fixed $x \leq -1$ the
program $\sL$ trivially has a feasible solution.  
Further,  we put $V := \max_{q \in
  Q}{\bar{\poten}_q} - \min_{q \in Q}{\bar{\poten}_q}$. 
Observe that $V \in\exp\left(\size{\A}^{\calO(1)} \right)$ and $V$ is
computable in time polynominal in $\size{\A}$.
\begin{proposition}\label{prop:charact}
Let $\left(\bar{x},(\bar{\poten}_q)_{q\in Q}\right)$ be a solution
of~$\sL$. 
\begin{enumerate}
\item[(A)] If $\bar{x}\geq 0$, then $\val(\conf{q}{i})=\infty$ for
   all $q \in Q$ and $i \geq |Q|$.
\item[(B)] If $\bar{x}<0$, then the following holds:
  \begin{itemize}
    \item[(B.1)] For \emph{every} strategy $\pi$ and all $q \in Q$, $i \geq 0$
      we have that 
      $
        \E^\pi\conf{q}{i} \ \geq \ 
        (i-V)/|\bar{x}|.
      $
    \item[(B.2)] There is a counterless strategy $\sigma$ and a number 
      $U\in\exp\left(\size{\A}^{\calO(1)} \right)$ such that for
      all $q \in Q$, $i \geq 0$  we have that    
      $
        \E^\sigma\conf{q}{i} \leq (i+U)/|\bar{x}|.
      $
      Moreover, $\sigma$ and $U$ are computable in time polynomial 
      in $\size{\A}$.
  \end{itemize}
\end{enumerate}
\end{proposition}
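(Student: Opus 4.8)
\medskip
\noindent\textbf{Proof plan.}
The plan is to attach to each history of $\M^{\infty}_{\A}$ a real-valued \emph{balance function} and read off all three claims from martingale inequalities. Having fixed a solution $(\bar{x},(\bar{\poten}_q)_{q\in Q})$ of $\sL$, I would work with $m^{(n)} := \counter{n} + \bar{\poten}_{\cstate{n}} - n\cdot\bar{x}$. The two families of inequalities defining $\sL$ say exactly that $\{m^{(n)}\}$ is a submartingale under \emph{every} strategy: a step from $\conf{q}{\counter{n}}$ via $(q,k,r)\in\delta$ with $q\in\pOneS{Q}$ changes $m$ by $k+\bar{\poten}_r-\bar{\poten}_q-\bar{x}\ge 0$ (first inequality), and a step with $q\in\stochS{Q}$ changes it in conditional expectation by $\sum_{(q,k,r)\in\delta}P_q(q,k,r)(k+\bar{\poten}_r)-\bar{\poten}_q-\bar{x}\ge 0$ (second inequality); one also notes $|m^{(n+1)}-m^{(n)}|\le 1+V+|\bar{x}|$, a constant depending only on~$\A$. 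For part~(B) I would additionally take the solution to satisfy the corresponding optimality equations with \emph{equality} ($\bar{\poten}_q=-\bar{x}+\min_{(q,k,r)\in\delta}(k+\bar{\poten}_r)$ for $q\in\pOneS{Q}$, and the averaged version for $q\in\stochS{Q}$; this is the standard average-reward Bellman system for the communicating MDP $\M_\A$, and by \cite{Puterman:book} it has a rational solution of size polynomial in $\size{\A}$, computable in polynomial time), and let $\sigma$ be the counterless strategy picking at each $q\in\pOneS{Q}$ an edge minimising $k+\bar{\poten}_r$; then the increment of $m^{(n)}$ is identically $0$ on $\pOneS{Q}$-steps and has zero conditional expectation on $\stochS{Q}$-steps, so $\{m^{(n)}\}$ becomes a \emph{martingale} under $\sigma$.

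For part~(B.1) I would fix any strategy $\pi$ and configuration $\conf{q}{i}$; if $\E^\pi\conf{q}{i}=\infty$ there is nothing to prove, so assume it is finite, whence $T<\infty$ a.s. Since $\{m^{(n)}\}$ is a submartingale with uniformly bounded increments and $\E^\pi\conf{q}{i}<\infty$, optional stopping gives $\Ex{\pi}{\conf q i}{m^{(T)}}\ge m^{(0)}=i+\bar{\poten}_q$; on $\{T<\infty\}$ we have $\counter{T}=0$, so $m^{(T)}=\bar{\poten}_{\cstate{T}}+T\,|\bar{x}|$, hence $i+\bar{\poten}_q\le\max_{p\in Q}\bar{\poten}_p+|\bar{x}|\,\E^\pi\conf{q}{i}$, i.e.\ $\E^\pi\conf{q}{i}\ge (i-V)/|\bar{x}|$. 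For part~(B.2) I would use $\sigma$ and the equality solution and set $U:=V$: for each $n$ the stopped process $\{m^{(T\wedge n)}\}$ is a martingale, so $\Ex{\sigma}{\conf q i}{m^{(T\wedge n)}}=i+\bar{\poten}_q$; since the counter stays nonnegative until termination, $m^{(T\wedge n)}\ge\min_{p\in Q}\bar{\poten}_p+(T\wedge n)\,|\bar{x}|$, so $\Ex{\sigma}{\conf q i}{T\wedge n}\le (i+V)/|\bar{x}|$, and monotone convergence as $n\to\infty$ yields $\E^\sigma\conf{q}{i}\le(i+U)/|\bar{x}|$. Here $\sigma$ is counterless and $U=V\in\exp(\size{\A}^{\calO(1)})$ is polynomial-time computable, as required.

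For part~(A), with $\bar{x}\ge 0$, already $M^{(n)}:=\counter{n}+\bar{\poten}_{\cstate{n}}$ is a submartingale under every strategy with uniformly bounded increments (each increment now having conditional expectation $\ge\bar{x}\ge 0$): if some $\pi$ had $\E^\pi\conf{q}{i}<\infty$ then $T<\infty$ a.s.\ and optional stopping as above would force $i+\bar{\poten}_q\le\max_{p\in Q}\bar{\poten}_p$, hence $i\le\max_p\bar{\poten}_p-\min_p\bar{\poten}_p$, so $\val(\conf{q}{i})=\infty$ once $i$ exceeds the spread of $(\bar{\poten}_p)_{p\in Q}$. It then remains to check that for $\bar{x}\ge 0$ the solution can be chosen with $\max_p\bar{\poten}_p-\min_p\bar{\poten}_p\le|Q|-1$ (a short combinatorial argument on the finite graph $\M_\A$), which gives $\val(\conf{q}{i})=\infty$ for all $i\ge|Q|$; the finitely many remaining configurations with $i<|Q|$ --- and thus the polynomial-time test for $\val(\conf{q}{i})=\infty$ --- are handled by running standard finite-state MDP algorithms on $\M_\A$ restricted to them. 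I expect the main obstacle to be not the (routine) martingale identities but making the optional-stopping and monotone-convergence steps rigorous --- which is why the bounded-increment bound is recorded and the infinite-expectation cases are dispatched first --- and, for part~(B), securing a polynomial-size solution of $\sL$ that meets the Bellman equations with \emph{equality}: with only the inequalities, $\sigma$ yields a submartingale rather than a martingale and the upper bound of~(B.2) collapses.
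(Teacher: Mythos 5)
Your (B.1) argument is essentially the paper's own: the LP constraints make $m^{(n)}=\counter{n}+\bar{\poten}_{\cstate{n}}-n\bar{x}$ a bounded-increment submartingale, optional stopping at $T$ (after dispatching the infinite-expectation case) gives the lower bound. No issues there.

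Your (B.2) takes a genuinely different and cleaner route than the paper. The paper does \emph{not} assume an equality solution; it takes the LP optimum, passes to the dual, uses complementary slackness to identify the set $D$ of recurrent states, obtains the martingale property only on $D$, and then has to separately bound (via exponential tail bounds on the time to reach $D$ and on the counter value on arrival) the contribution of the transient prefix — which is why it ends up with $U=V+2|Q|+4/(1-p)^2$ rather than $U=V$. Your route replaces all of that by taking a solution of the average-reward optimality equations (AROE) with equality, so that under the greedy counterless $\sigma$ the process $m^{(n)}$ is a genuine martingale \emph{everywhere}, and $U=V$ falls out of the stopped-martingale plus monotone-convergence argument directly. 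This does work for strongly connected $\M_\A$ (the unichain AROE is solvable for weakly communicating MDPs, Puterman Ch.~9), but you should say so explicitly: Lemma~\ref{lem:sol}, which you implicitly lean on, only asserts a polynomial-size solution of the LP $\sL$, not of the AROE with equality, and not every optimal LP point is an AROE solution. That is a citation/justification gap, not a wrong step.

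Part (A) has a genuine gap. You correctly derive $\val(\conf{q}{i})=\infty$ for $i>V$, but then claim that when $\bar{x}\ge 0$ the solution can be chosen with $V=\max_p\bar{\poten}_p-\min_p\bar{\poten}_p\le|Q|-1$ ``by a short combinatorial argument.'' This is false. The LP constraints at a \emph{stochastic} state only bound $\bar{\poten}_q$ by a probability-weighted average of $k+\bar{\poten}_r$, not by each $k+\bar{\poten}_r$ individually, so the step-by-step bound $\bar{\poten}_q\le 1+\bar{\poten}_r$ along a path fails at stochastic states. Concretely, take $Q=\{s_0,s_1\}$, both stochastic, $s_0\ltran{+1}s_0$, $s_0\ltran{+1}s_1$ each with probability $1/2$, and $s_1\ltran{-1}s_0$, $s_1\ltran{-1}s_1$ each with probability $1/2$. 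Then $\bar{x}=0$, and the two constraints reduce to $\bar{\poten}_{s_0}-\bar{\poten}_{s_1}\le 2$ and $\bar{\poten}_{s_0}-\bar{\poten}_{s_1}\ge 2$, forcing $V=2>|Q|-1=1$ for \emph{every} feasible $\bar{\poten}$. In general $V$ can be exponential in $\size\A$. The correct way to get from ``$\val=\infty$ for $i>V$'' down to the threshold $|Q|$ is the monotonicity argument of Lemma~\ref{lem:infty-low-boundary}: the sets $\mathcal{B}_k=\{p:\val(\conf{p}{k})<\infty\}$ are nested and stabilise after at most $|Q|$ steps, so finiteness at height $|Q|$ is equivalent to finiteness at any larger height. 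Without that (or an equivalent) the $i\ge|Q|$ bound in (A) is not established.
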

First, let us realize that Proposition~\ref{prop:charact} 
implies Theorem~\ref{thm:upper-main}. To see this, we
consider the cases $\bar{x}\geq 0$ and $\bar{x}<0$ separately.
In both cases, we resort to analyzing a finite-state MDP  
$\mathcal{G}_K$, where $K$ is a suitable natural number, obtained
by restricting $\M^{\infty}_{\A}$ to configurations with counter value at
most $K$, and by substituting all transitions leaving each $p(K)$ with
a self-loop of the form $p(K) \btran{} p(K)$.

First, let us assume that $\bar{x}\geq 0$. By 
Proposition~\ref{prop:charact}~(A), we have that 
$\val(\conf{q}{i})=\infty$ for all $q\in Q$ and $i\geq |Q|$. Hence, it 
remains to approximate the value and compute \mbox{$\varepsilon$-optimal}
strategy for all configurations $\conf{q}{i}$ where $i \leq |Q|$.
Actually, we can even compute these values precisely and construct a 
strategy $\hat{\sigma}$ which is optimal in each such $\conf{q}{i}$.
This is achieved simply by considering the finite-state MDP $\mathcal{G}_{|Q|}$
and solving the objective of minimizing the expected number of transitions
needed to reach a state of the form $\conf{p}{0}$, which can be done 
by standard methods in time polynomial in $\size\A$.

If $\bar{x} < 0$, we argue as follows. 
The strategy $\sigma$
of Proposition~\ref{prop:charact}~(B.2) is not necessarily 
$\eps$-optimal in~$\conf{q}{i}$, so we cannot use it directly. 
To overcome this problem,
consider an \emph{optimal} strategy $\pi^*$ in~$\conf{q}{i}$, and let
$x_\ell$ be the probability that a run initiated in~$\conf{q}{i}$ (under
the strategy~$\pi^*$) visits a configuration of the form $\conf{r}{i+\ell}$.
Obviously,  $x_\ell \cdot \min_{r \in Q} \{\E^{\pi^*} \conf{r}{i{+}\ell}\}
\leq \E^\sigma \conf{q}{i}$, because otherwise $\pi^*$ would not be
optimal in $\conf{q}{i}$. Using the lower/upper bounds for 
$\E^{\pi^*} \conf{r}{i{+}\ell}$ and $\E^\sigma \conf{q}{i}$ given
in Proposition~\ref{prop:charact}~(B), we obtain
$x_\ell \leq (i+U)/(i+\ell-V)$. Then, we compute $k \in \Nset$ such that 
\[
   x_k \cdot \left(\max_{r \in Q} \left\{(i+k+U)/|\bar{x}| - 
              \E^{\pi^*} \conf{r}{i{+}k}\right\}\right) 
   \quad \leq \quad \eps
\]
A simple computation reveals that it suffices to put 
\[
   k \quad \geq \quad \frac{(i+U)(U+V)}{\eps |\bar{x}|} + V -i
\]
Now, consider $\mathcal{G}_{i+k}$, and let $f$ be a reward function 
over the transitions of $\mathcal{G}_{i+k}$ such that the loops on
configurations where the counter equals $0$ or $i+k$ have zero
reward, a transition leading to a state $\conf{r}{i{+}k}$ has 
reward $(i+k+U)/|\bar{x}|$, and all of the remaining transitions 
have reward~$1$.
Now we solve the finite-state MDP $\mathcal{G}_{i+k}$ with the 
objective of minimizing the total accumulated reward. 
Note that an optimal strategy 
$\varrho$ in $\mathcal{G}_{i+k}$ is computable in time polynomial
in the size of $\mathcal{G}_{i+k}$ \cite{Puterman:book}. 
Then, we define the corresponding
strategy $\hat{\sigma}$ in $\M^{\infty}_{\A}$, which behaves like 
$\varrho$ until the counter reaches $i+k$, and from that point on
it behaves like the counterless strategy~$\sigma$. It is easy to see
that $\hat{\sigma}$ is indeed $\eps$-optimal in~$\conf{q}{i}$.
\smallskip

\noindent
\textbf{Proof of Proposition~\ref{prop:charact}.}
Similarly as in \cite{BBEK:OC-games-termination-approx}, 
we use the solution $(\bar{x},(\bar{\poten}_q)_{q\in Q})\in \Qset^{|Q|+1}$ of 
$\sL$ to define a suitable submartingale, which is then used to derive
the required bounds. In \cite{BBEK:OC-games-termination-approx}, 
Azuma's inequality was applied to the submartingale to prove exponential tail 
bounds for termination probability. In this paper, we need to use
the optional stopping theorem rather than Azuma's inequality, and
therefore we need to define the submartingale relative to a suitable 
filtration so that we can introduce an appropriate stopping time (without the
filtration, the stopping time would have to depend just on numerical
values returned by the martingale, which does not suit our purposes). 

Recall the random variables $\{\counter{i}\}_{i\geq0}$ and
$\{\cstate{i}\}_{i\geq0}$ 
returning the height of the counter, and the control state
after completing $i$ transitions, respectively.
Given the solution
$(\bar{x},(\bar{\poten}_q)_{q\in Q})\in \Qset^{|Q|+1}$
from Lemma~\ref{lem:sol},
we define a sequence of random variables
$\{\mar{i}\}_{i\geq0}$
by setting
\[
\mar{i}
\coloneqq
\begin{cases}
\counter{i} +
\bar{\poten}_{\cstate{i}} - i\cdot \bar{x}
&
\text{if $\counter{j}>0$ for all $j,\ 0\leq j< i$,}\\
\mar{i-1} & \text{otherwise.}
\end{cases}
\]
Note that for every history $u$ of length $i$ and every $0 \leq j \leq i$,
the random variable $\mar{j}$ returns the same value for every
$\omega \in \runs{u}$. The same holds for variables $\cstate{j}$ and
$\counter{j}$. We will denote these common values $\mar{j}(u)$,
$\cstate{j}(u)$ and $\counter{j}(u)$, respectively.  Using the same
arguments as in~Lemma~3 of~\cite{BBEK:OC-games-termination-approx},
one may show that for every history $u$ of length $i$ we
have %
$\E(\mar{i+1}\mid \runs{u}) \geq \mar{i}(u)$.
This shows that $\{\mar{i}\}_{i\geq0}$ is a \emph{submartingale
  relative to the filtration} $\{\mathcal{F}_{i}\}_{i \geq 0}$, where
for each $i \geq 0$ the \mbox{$\sigma$-algebra} $\mathcal{F}_i$ is the 
\mbox{$\sigma$-algebra} generated by all $\runs{u}$ where
$\len{u}=i$. Intuitively, this means that value $\mar{i}(\omega)$ is
uniquely determined by prefix of $\omega$ of length $i$ and that the
process $\{\mar{i}\}_{i\geq0}$ has nonnegative average change.  For
relevant definitions of (sub)martingales see, e.g., \cite{Williams:book}.
Another important observation is that $|\mar{i+1}-\mar{i}| \leq
1+\bar{\poten}+V$ for every $i \geq 0$, i.e., the differences of the
submartingale are bounded.
\begin{lemma}
\label{lem:submart}
  Under an arbitrary strategy $\tau$ and with an
  arbitrary initial configuration $\conf{q}{j}$ where $j \geq 0$, the process
  $\{\mar{i}\}_{i\geq0}$ is a submartingale (relative to the
  filtration $\{\mathcal{F}_i\}_{i \geq 0}$) with bounded differences.
\end{lemma}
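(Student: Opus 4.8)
The plan is to verify directly the defining properties of a bounded-difference submartingale for $\{\mar{i}\}_{i\ge0}$ — adaptedness, integrability, the conditional inequality $\E(\mar{i+1}\mid\mathcal{F}_i)\ge\mar{i}$, and a uniform bound on $|\mar{i+1}-\mar{i}|$ — several of which have already been observed in the discussion above; the last two come out of a single case analysis. Adaptedness is immediate: for a history $u$ of length $i$, the value $\mar{i}$ is a function of $\counter{0}(u),\dots,\counter{i}(u)$ and $\cstate{0}(u),\dots,\cstate{i}(u)$ only, hence is constant on the cylinder $\runs{u}$, and such cylinders generate $\mathcal{F}_i$. Integrability then follows from the increment bound established below, since it yields $|\mar{i}|\le|\mar{0}|+iM$ on every run while $\mar{0}$ is the constant $j+\bar{\poten}_q$.

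For the submartingale inequality I would fix a history $u$ of length $i$, put $q'\coloneqq\cstate{i}(u)$ and $c\coloneqq\counter{i}(u)$, and split into two cases. If $\counter{l}(u)=0$ for some $0\le l\le i$, then on every one-step extension $us$ the counter has already vanished at a step $\le i$, so $\mar{i+1}$ takes the ``otherwise'' branch of its definition and equals $\mar{i}(us)=\mar{i}(u)$; hence $\E(\mar{i+1}\mid\runs{u})=\mar{i}(u)$ and the increment on $\runs{u}$ is $0$. Otherwise $\counter{l}(u)>0$ for all $0\le l\le i$; the one-step extensions are exactly the histories $us$ with $s=\conf{r}{c+k}$ for $(q',k,r)\in\delta$, and for each of them the first branch of the definition applies, so $\mar{i+1}(us)=(c+k)+\bar{\poten}_r-(i+1)\bar{x}$. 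Writing $w$ for the distribution over the rules in $\delta(q')$ governing the next step — namely $P_{q'}$ when $q'\in Q_0$ and the distribution $\tau(u)$ when $q'\in Q_1$ — and using $\sum w=1$, we get $\E(\mar{i+1}\mid\runs{u})=c-i\bar{x}+\bigl(-\bar{x}+\sum_{(q',k,r)\in\delta}w(q',k,r)(k+\bar{\poten}_r)\bigr)$. By the stochastic inequality of $\sL$ for $q'$ (if $q'\in Q_0$), respectively by the $w$-weighted convex combination of the non-deterministic inequalities $\bar{\poten}_{q'}\le-\bar{x}+k+\bar{\poten}_r$ over the support of $\tau(u)$ (if $q'\in Q_1$), the bracketed term is $\ge\bar{\poten}_{q'}$, whence $\E(\mar{i+1}\mid\runs{u})\ge c+\bar{\poten}_{q'}-i\bar{x}=\mar{i}(u)$. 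Since these cylinders generate $\mathcal{F}_i$, this gives $\E(\mar{i+1}\mid\mathcal{F}_i)\ge\mar{i}$ almost surely.

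For the uniform bound on increments: in the first case above the increment is $0$, and in the second $\mar{i+1}-\mar{i}=(\counter{i+1}-\counter{i})+(\bar{\poten}_{\cstate{i+1}}-\bar{\poten}_{\cstate{i}})-\bar{x}$, where $|\counter{i+1}-\counter{i}|\le1$ since a rule shifts the counter by at most one, $|\bar{\poten}_{\cstate{i+1}}-\bar{\poten}_{\cstate{i}}|\le V=\max_q\bar{\poten}_q-\min_q\bar{\poten}_q$, and $|\bar{x}|$ is the fixed rational constant from Lemma~\ref{lem:sol} (recall $\bar{x}\ge-1$). Thus $|\mar{i+1}-\mar{i}|\le M\coloneqq1+V+|\bar{x}|$, a bound independent of $i$, of the strategy $\tau$, and of the initial configuration $\conf{q}{j}$, which is exactly why the statement holds uniformly. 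I do not expect a serious obstacle; the only point requiring care is the interplay between the increment index and the indicator condition ``$\counter{l}>0$ for all $0\le l<i$'', especially the boundary case where the counter first hits $0$ at step $i$ — where $\mar{i}$ still lies in the first branch but $\mar{i+1}$ has switched to the frozen one — together with applying the convex combination of the $\sL$-constraints with the distribution actually used by $\tau$ in the non-deterministic case.
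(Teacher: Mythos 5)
Your proof is correct and follows essentially the same route the paper takes: it is a direct verification of the submartingale property from the constraints of $\sL$, which is exactly what the paper defers to (via the citation of Lemma~3 in \cite{BBEK:OC-games-termination-approx}), plus the elementary increment bound. You are careful about the boundary case where the counter first hits $0$ at step $i$, which is the only place where the indicator conditions for $\mar{i}$ and $\mar{i+1}$ disagree, and you correctly observe that $\mar{i+1}$ is already frozen there so the increment is $0$. One small aside: the paper states the increment bound as $1+\bar{\poten}+V$, which is evidently a typo (there is no scalar $\bar{\poten}$); your bound $1+V+|\bar{x}|$ is the intended one.
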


\paragraph{Part (A) of Proposition~\ref{prop:charact}.} 
This part can be proved 
by a routine application of the
optional stopping theorem to the martingale $\{\mar{i}\}_{i\geq0}$.
Let $\bar{z}_{\max}\coloneqq \max_{q\in Q} \bar{z}_q$, and
consider a configuration $p(\ell)$ where
$\ell+\bar{z}_r>\bar{z}_{\max}$. Let $\sigma$ be a strategy which is
optimal in every configuration. Assume, for the sake of contradiction,
that $\val(p(\ell))<\infty$. 

Let us fix $k\in \Nset$ such that $\ell+\bar{z}_r<\bar{z}_{\max}+k$
and define a stopping time $\tau$ which returns the first point in
time in which either $\mar{\tau}\geq \bar{z}_{\max}+k$, or
$\mar{\tau}\leq \bar{z}_{\max}$. To apply the optional stopping
theorem, we need to show that the expectation of $\tau$ is finite.

We argue that every configuration $q(i)$ with $i\geq 1$ satisfies the
following: under the optimal strategy $\sigma$, a configuration
with counter height $i-1$ is reachable from $q(i)$ in at most $|Q|^2$ 
steps (i.e., with a bounded probability). To see this, realize
that for every configuration $r(j)$ there is a successor, say $r'(j')$, 
such that $\val(r(j))>\val(r'(j'))$. Now consider a run $w$
initiated in $q(i)$ obtained by subsequently choosing successors 
with smaller and smaller values. Note that whenever $w(j)$ and
$w(j')$ with $j<j'$ have the same control state, the counter height of
$w(j')$ must be strictly smaller than the one of $w(j)$ because
otherwise the strategy $\sigma$ could be improved (it suffices to
behave in $w(j)$ as in $w(j')$). It follows that there
must be $k\leq |Q|^2$ such that the counter height of $w(k)$ is $i-1$.
From this we obtain that the expected value of $\tau$ is finite because the
probability of terminating from any configuration with bounded counter
height is bounded from zero. 
Now we apply the optional stopping
theorem and obtain $\Prb{\sigma}{p(\ell)}(\mar{\tau}\geq
\bar{z}_{\max}+k)\geq c/(k+d)$ for suitable constants
$c,d>0$. As $\mar{\tau}\geq \bar{z}_{\max}+k$ implies
$\counter{\tau}\geq k$, we obtain that
\[
\Prb{\sigma}{p(\ell)}(T\geq k)\quad\geq\quad
\Prb{\sigma}{p(\ell)}(\counter{\tau}\geq k)\quad\geq\quad
\Prb{\sigma}{p(\ell)}(\mar{\tau}\geq \bar{z}_{\max}+k)\quad\geq\quad
\frac{c}{k+d}
\]
and thus
\[
\E^\sigma \conf{p}{\ell}\quad = \quad \sum_{k=1}^{\infty}
\Prb{\sigma}{p(\ell)}(T\geq k)\quad \geq \quad \sum_{k=1}^{\infty}
\frac{c}{k+d}\quad = \quad \infty
\]
which contradicts our assumption that $\sigma$ is optimal and
$\val(p(\ell))<\infty$. 

It remains to show that $\val(p(\ell)) = \infty$ 
even for $\ell = |Q|$. This follows from the following simple observation:

\begin{lemma}
\label{lem:infty-low-boundary}
For all $q \in Q$ and $i \geq |Q|$ we have that 
$\val(\conf{q}{i}) < \infty$ iff $\val(\conf{q}{|Q|})< \infty$. 
\end{lemma}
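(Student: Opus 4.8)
The plan is to split the biconditional. For the forward implication ($\val(\conf{q}{i})<\infty \Rightarrow \val(\conf{q}{|Q|})<\infty$ when $i\ge|Q|$) I would first prove that the value is monotone nondecreasing in the counter, i.e. $\val(\conf{q}{j})\le\val(\conf{q}{j+1})$ for all $q\in Q$ and $j\ge 0$. This follows from a shift argument: let $\sigma$ be a memoryless optimal strategy and consider, from $\conf{q}{j}$, the strategy $\pi$ that in a configuration $\conf{p}{h}$ plays $\sigma(\conf{p}{h+1})$. The run of $\pi$ from $\conf{q}{j}$ mirrors the run of $\sigma$ from $\conf{q}{j+1}$ with the counter shifted down by one, so it terminates exactly when the $\sigma$-run first brings its counter to $1$; since the $\sigma$-run must pass through counter $1$ before reaching $0$, we get $\val(\conf{q}{j})\le\Ex{\pi}{\conf{q}{j}}{T}\le\Ex{\sigma}{\conf{q}{j+1}}{T}=\val(\conf{q}{j+1})$. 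Given monotonicity, the forward implication is immediate, because $i\ge|Q|$ gives $\val(\conf{q}{|Q|})\le\val(\conf{q}{i})$.

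For the backward implication I would work with the quantity $d_r:=\val(\conf{r}{1})$, which equals the least expected number of steps to decrease the counter by one starting in control state $r$; by translation invariance of the dynamics above counter $0$, this does not depend on the current counter height, and the same translation argument shows that the control state reached at the first decrease has a fixed distribution $\mu_r$. Two facts are needed. The easy one: if $d_r<\infty$ for all $r\in Q$, then $\val(\conf{q}{i})<\infty$ for all $q$ and $i$, since the strategy ``decrease the counter by one optimally, then forget the past and repeat'' reaches counter $0$ from $\conf{q}{i}$ in expected time at most $i\cdot\max_{r\in Q}d_r$, each of its $i$ phases being a fresh, translation-invariant instance of the decrease-by-one task. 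The hard one: $\val(\conf{q}{|Q|})<\infty$ implies $d_r<\infty$ for every $r\in Q$. Combining the two yields the backward implication.

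To prove the hard fact I would start from the observation that, under a memoryless optimal strategy, the run from $\conf{q}{|Q|}$ terminates almost surely with finite expectation, so every configuration it visits has finite value, and hence every control state occurring on it has finite $d$; in particular we obtain such a control state at each of the counter levels $|Q|,|Q|{-}1,\dots,1$ along the descent. The remaining task — extending ``finite $d$'' to all control states — is where strong connectivity of $\M_\A$ and the threshold $|Q|$ are used jointly: any two control states are linked by a path of length $<|Q|$ in $\M_\A$, and lifting such a path into $\M^{\infty}_{\A}$ from a configuration of height $\ge|Q|$ keeps the counter strictly positive, so it can be traversed without premature termination. The genuine obstacle, which I expect to be the technical heart, is that a minimizing controller may leave such a path at a stochastic state; I would handle this by a pumping argument on the descending run: if the first-passage control states on the way down from $\conf{q}{|Q|}$ are not pairwise distinct, some control state $s$ recurs at two different levels, so the run segment between the two visits strictly decreases the counter by a fixed positive amount in finite expected time while returning to $s$; iterating these ``pumpable'' segments together with the short connecting paths lets one exhibit, for every control state, a finite-expectation strategy that decreases the counter by one. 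Making this bookkeeping precise — tracking lengths and probabilities, and ensuring the constructed strategies never drive the counter to $0$ before the intended step — is the part I expect to require the most care.
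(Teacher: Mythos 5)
Your forward direction (monotonicity of $\val$ in the counter, obtained by the shift argument) is correct and matches the paper's observation that the ``only if'' direction is trivial. The backward direction, however, is where you diverge, and there are serious problems.

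First, a structural problem: you invoke strong connectivity of $\M_\A$. Although the lemma appears inside Section~3.1, the paper explicitly remarks at the end of its proof that Lemma~\ref{lem:infty-low-boundary} holds for \emph{general} OC-MDPs, and it is in fact used in that generality in Section~3.2 (the characterization of $Q_{\fin}$ via Proposition~\ref{prop:char-fin}). A proof that genuinely relies on strong connectivity therefore does not establish what is needed. Your ``hard fact'' --- $\val(\conf{q}{|Q|})<\infty$ implies $d_r<\infty$ for \emph{every} $r\in Q$ --- is simply false without strong connectivity (take two disjoint components, one of which cannot terminate). Even granted strong connectivity, the hard fact is not a statement one can get for free from first principles here; it is roughly as strong as the combination of this lemma with Proposition~\ref{prop:charact}, so using it risks circularity.

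Second, the pumping sketch is not sound as stated. The ``first-passage control states at levels $|Q|,\dots,1$'' are random, and a repeat is only one of the cases (with $|Q|$ levels and $|Q|$ states you may get an injection, and you may get a different pattern on each sample path). More importantly, a repeated state $s$ gives you a random segment with a certain law, not a deterministic loop that can be iterated at will; excursions may climb arbitrarily high; and the ``short connecting paths'' in $\M_\A$ from an arbitrary $r$ to $s$ pass through stochastic states, so they cannot be forced --- one only gets them with some probability, and one must then control both the probability of failure and the expected time and counter displacement accumulated while retrying, all while keeping the counter from hitting $0$ prematurely. Making this work would be a substantial piece of analysis in its own right.

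The paper's proof of the backward direction is far more economical and avoids all of this. Set $\mathcal{B}_k=\{p\in Q \mid \val(\conf{p}{k})<\infty\}$. Then $\mathcal{B}_0=Q$, the chain is weakly decreasing by monotonicity, and the key observation is that $\mathcal{B}_{k}=\mathcal{B}_{k+1}$ implies $\mathcal{B}_{k+1}=\mathcal{B}_{k+2}$: if $\val(\conf{p}{k+1})<\infty$ under a memoryless optimal $\sigma$, the first passage to counter level $k$ has finite expected time and lands on control states $r$ with $\val(\conf{r}{k})<\infty$, i.e.\ in $\mathcal{B}_k=\mathcal{B}_{k+1}$; shifting this by one gives a finite-expectation strategy from $\conf{p}{k+2}$. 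A pigeonhole on the decreasing chain of subsets of $Q$ then forces stabilization by index $|Q|$, which is precisely the lemma. This argument uses neither strong connectivity nor any global claim about all $d_r$. You should replace your backward-direction plan with this (or an equivalent) local, chain-stabilization argument.
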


\noindent
The ``only if'' direction of Lemma~\ref{lem:infty-low-boundary} is trivial.
For the other direction, let  $\mathcal{B}_k$ denote
the set of all $p\in Q$ such that $\val(\conf{p}{k})<\infty$.
Clearly, $\mathcal{B}_0=Q$, $\mathcal{B}_{k}\subseteq \mathcal{B}_{k-1}$, 
and one can easily verify that $\mathcal{B}_{k}=\mathcal{B}_{k+1}$
implies $\mathcal{B}_{k}=\mathcal{B}_{k+\ell}$ for all $\ell\geq 0$. Hence,
$\mathcal{B}_{|Q|} = \mathcal{B}_{|Q|+\ell}$ for all $\ell$. Note that
Lemma~\ref{lem:infty-low-boundary} holds for general OC-MDPs (i.e., 
we do not need to assume that $\M_\A$ is strongly connected).

\paragraph{Part (B1) of Proposition~\ref{prop:charact}.}
Let $\pi$ be a strategy and $\conf{q}{i}$ a configuration where 
$i \geq 0$. If $\E^{\pi}\conf{q}{i} = \infty$, we are done.
Now assume $\E^{\pi}\conf{q}{i} < \infty$.
Observe that for every $k \geq 0$ and every run~$\omega$, the
membership of $\omega$ into $\{T \leq k\}$ depends only on the finite 
prefix of $\omega$ of length~$k$. This means that $T$ is a stopping time
relative to filtration $\{\mathcal{F}_{n}\}_{n \geq 0}$. Since 
$\E^{\pi}\conf{q}{i} < \infty$ and the submartingale 
$\{\mar{n}\}_{n \geq 0}$ has bounded differences, we
can apply the optional stopping theorem and obtain $\E^{\pi}(\mar{0})\leq
\E^{\pi}(\mar{T})$. But $\E^{\pi}(\mar{0})=i+\bar{z}_{q}$ and
$\E^{\pi}(\mar{T}) = \E^{\pi}\bar{\poten}_{\cstate{T}} +
\E^{\pi}\conf{q}{i} \cdot |\bar{x}|$. Thus, we get $\E^{\pi}\conf{q}{i}
\geq (i + \bar{z}_{q}- \E^{\pi}\bar{\poten}_{\cstate{T}})/|\bar{x}|
\geq (i-V)/|\bar{x}|$.

\paragraph{Part (B2) of Proposition~\ref{prop:charact}.}
First we show how to construct the desired strategy~$\sigma$.  
Recall again the linear program $\sL$ of Figure~\ref{fig:system-L}. 
We have already shown that this program has an
optimal solution $\left(\bar{x},(\bar{\poten}_q)_{q\in Q}\right)\in
\Qset^{|Q|+1}$, and we assume that $\bar{x}<0$. By the strong duality
theorem, this means that the linear program dual to $\sL$ also has a
feasible solution $\dualsol$. Let
\[
 D = \{ q \in Q_0 \mid \bar{\lpfreq}_q > 0 \} \cup \{ q \in Q_1 \mid \bar{\lpfreq}_{(q,i,q')} > 0 \text{ for some } (q,i,q') \in \delta \}.
\]
By Corollary~8.8.8 of \cite{Puterman:book}, the solution 
$\dualsol$ can be chosen so that for every $q \in Q_1$ there is at 
most one transition $(q,i,q')$ with $\bar{\lpfreq}_{(q,i,q')}>0$.
Following the construction given in Section~8.8 of~\cite{Puterman:book},
we define a counterless deterministic strategy $\sigma$ such that 
\begin{itemize}
 \item in a state $q \in D \cap Q_1$, the strategy $\sigma$ selects
    the transition $(q,i,q')$ with $\bar{\lpfreq}_{(q,i,q')}>0$;
 \item in the states outside $D$, the strategy $\sigma$  behaves like an
   optimal strategy for the objective of reaching the set~$D$.
\end{itemize}
Clearly, the strategy $\sigma$ is computable in time polynomial in $\size{\A}$.
To show that $\sigma$ indeed satisfies Part~(B.2) of 
Proposition~\ref{prop:charact}, we need to prove a series of auxiliary 
inequalities, which can be found in Appendix~\ref{app-B2}.

\subsection{General OC-MDP}
\label{sec:nscMDP}

In this section we prove Theorem~\ref{thm:upper-main} for general
OC-MDPs, i.e., we drop the assumption that $\M_{\A}$ is strongly
connected.  We say that $C\subseteq Q$ is an \emph{end component of
  $\A$} if $C$ is strongly connected and for every $p\in C\cap Q_0$ we
have that $\{q\in Q \mid p\btran{} q\} \subseteq C$. A \emph{maximal
  end component (MEC) of $\A$} is an end component of~$\A$ which is
maximal w.r.t.{} set inclusion.  The set of all MECs of $\A$ is
denoted by $\MEC(\A)$. Every $C \in \MEC(\A)$ determines a strongly
connected OC-MDP $\A_C=(C,(C\cap Q_0,C\cap Q_1),\delta\cap (C\times
\{+1,0,-1\}\times C), \{P_q\}_{q\in C\cap Q_0})$. Hence, we may apply
Proposition~\ref{prop:charact} to $\A_C$, and we use $\bar{x}_C$ and
$V_C$ to denote the constants of Proposition~\ref{prop:charact}
computed for~$\A_C$.

\paragraph{Part 1. of Theorem~\ref{thm:upper-main}.}
We show how to compute, in time polynomial in $\size\A$, 
the set $Q_{\fin} = \{p \in Q \mid \val(p(k))<\infty 
\text{ for all } k \geq 0\}$. From this we easily obtain
Part~1. of Theorem~\ref{thm:upper-main}, because for every 
configuration $\conf{q}{i}$ where $i \geq 0$ we have the following:
\begin{itemize}
\item if $i \geq |Q|$, then $\val(\conf{q}{i}) < \infty$ iff
   $q \in Q_{\fin}$ (see  Lemma~\ref{lem:infty-low-boundary});
\item if $i < |Q|$, then $\val(\conf{q}{i}) < \infty$ iff 
   the set 
  $\{\conf{p}{0}\mid p \in Q\} \cup \{\conf{p}{|Q|}\mid p \in Q_{\fin}\}$ 
  can be reached from $\conf{q}{i}$ with probability~$1$
  in the finite-state MDP $\mathcal{G}_{|Q|}$ defined in 
  Section~\ref{sec:scMDP} (here we again use 
  Lemma~\ref{lem:infty-low-boundary}).
\end{itemize}
So, it suffices to show how to compute the set $Q_{\fin}$ in polynomial
time. 

\begin{proposition}\label{prop:char-fin}
  Let $Q_{<0}$ be the set of all states from which the set \mbox{$H = \{q\in
  Q\mid q\text{ belongs to a MEC }C\text{ satisfying }\bar{x}_C<0\}$}
  is reachable with probability~$1$.  Then \mbox{$Q_{\fin} =
  Q_{<0}$}. Moreover, the membership to $Q_{<0}$ is decidable in time
  polynomial in~$\size\A$.
\end{proposition}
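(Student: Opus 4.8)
The plan is to prove $Q_{\fin} = Q_{<0}$ by showing both inclusions, and then to argue that $Q_{<0}$ is computable in polynomial time by a reduction to standard reachability questions in the finite-state MDP $\M_\A$.

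For the inclusion $Q_{<0} \subseteq Q_{\fin}$: suppose $p \in Q_{<0}$, so from $p$ one can reach the set $H$ with probability~$1$. Fix such a strategy $\varrho$ in $\M_\A$ and use it as a counterless strategy in $\M^\infty_\A$. Starting from a configuration $p(k)$ for arbitrary $k \geq 0$, under $\varrho$ a MEC $C$ with $\bar{x}_C < 0$ is reached almost surely; moreover, since $H$ is reached in finitely many steps a.s., the counter value upon entering $C$ is some random variable $N$ which is a.s.\ finite and, in fact, has finite expectation because the number of steps to reach $H$ has finite expectation (a.s.\ reachability of a set of states in a finite-state MDP under a fixed strategy can be achieved with expected time bounded by a polynomial in the number of states times a geometric factor). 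Once inside $C$, we switch to the counterless strategy $\sigma_C$ of Proposition~\ref{prop:charact}~(B.2) applied to $\A_C$; this yields expected termination time at most $(j + U_C)/|\bar{x}_C|$ from a configuration with counter value $j$. Taking expectations over $N$ and the (finite) expected prefix length, we get $\E \conf{p}{k} < \infty$, hence $p \in Q_{\fin}$.

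For the inclusion $Q_{\fin} \subseteq Q_{<0}$, I would prove the contrapositive: if $p \notin Q_{<0}$, then $\val(p(k)) = \infty$ for some $k$ (in fact for all sufficiently large $k$). If $p \notin Q_{<0}$, then under \emph{every} strategy there is a positive probability of generating a run that never reaches $H$. Such a run, after finitely many steps, stays forever in the part of $\M_\A$ consisting of states not in $H$ — more precisely, it eventually enters and remains in some MEC $C$ with $\bar{x}_C \geq 0$, or wanders among transient states (but a.s.\ a run eventually enters some MEC). The key point is that on the event of never reaching $H$, the run eventually settles in a MEC $C$ with $\bar{x}_C \geq 0$; by Proposition~\ref{prop:charact}~(A) applied to $\A_C$ (and using Lemma~\ref{lem:infty-low-boundary}), the expected termination time from any configuration inside such a $C$ with counter value $\geq |C|$ is infinite. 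One then needs to lift this to $\conf{p}{k}$: if $k$ is large enough (say $k \geq |Q|$), then on this positive-probability event either the counter hits $0$ before entering $C$ — but that can only reduce the counter by at most the length of the prefix, and with positive probability the counter is still high when $C$ is entered — or termination never happens inside $C$ with positive probability, in either case contributing an infinite amount to the expectation. Making "the counter is still sufficiently high when $C$ is entered with positive probability" precise is the main obstacle; I expect to handle it by noting that the prefix before entering a bottom MEC can be taken of bounded length with bounded-below probability (again a finite-MDP argument), so for $k$ larger than this bound the counter is guaranteed positive upon entry, and then Proposition~\ref{prop:charact}~(A) together with Lemma~\ref{lem:infty-low-boundary} finishes it. Combined with $\val(\conf{p}{k})<\infty \Leftrightarrow \val(\conf{p}{|Q|})<\infty$ from Lemma~\ref{lem:infty-low-boundary}, this gives $p \notin Q_{\fin}$.

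Finally, for the complexity claim: computing $\MEC(\A)$ is polynomial; for each MEC $C$ we solve the linear program $\sL$ for $\A_C$ (polynomial by Lemma~\ref{lem:sol}) and check the sign of $\bar{x}_C$, which determines $H$; and deciding the set of states from which $H$ is reachable with probability~$1$ in a finite-state MDP is a standard polynomial-time graph/attractor computation (see \cite{Puterman:book}). Hence $Q_{<0}$, and therefore $Q_{\fin}$, is computable in time polynomial in $\size\A$. The main technical obstacle, as noted, is the second inclusion — specifically arguing that a non-$Q_{<0}$ state forces, with positive probability under every strategy, a trajectory that accumulates infinite expected termination time, which requires carefully combining the positive escape probability from $H$ with Proposition~\ref{prop:charact}~(A) inside a trapping MEC and the boundary lemma.
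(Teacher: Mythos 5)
Your first inclusion $Q_{<0} \subseteq Q_{\fin}$ is essentially the paper's argument: reach $H$ in finite expected time using a memoryless a.s.-reaching strategy lifted to a counterless strategy, observe that the counter value upon arrival has a distribution with exponentially decaying tail (so the linear-in-counter bound from Proposition~\ref{prop:charact}~(B.2) integrates to a finite value), and conclude. The complexity argument is also the paper's.

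The second inclusion is where you have a real gap, and it is not just a matter of ``making the counter height precise.'' Your plan is: condition on the positive-probability event of never reaching $H$; on that event the run eventually settles in a MEC $C$ with $\bar{x}_C \geq 0$; then invoke Proposition~\ref{prop:charact}~(A) for $\A_C$ to get an infinite contribution. This does not go through for two reasons. First, ``the run eventually settles in $C$'' is a tail event, not a stopping time, and the strategy conditioned on that event is not a strategy of $\A_C$ --- the strategy may well prescribe moves that leave $C$ with positive probability (for example, to a state in $Q_{<0}$ from which termination is cheap), so Proposition~\ref{prop:charact}~(A), which quantifies over strategies \emph{of $\A_C$}, gives you nothing about the conditional expectation $\E[T \mid A]$. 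Second, your fallback ``or termination never happens inside $C$ with positive probability'' is unjustified: when $\bar{x}_C = 0$ the counter can return to zero almost surely (like an unbiased random walk), so non-termination may have probability zero even though the expected time is infinite; the content of Proposition~\ref{prop:charact}~(A) is precisely the harder statement about expectations, not a statement about positive probability of non-termination.

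The paper avoids both problems with a structural lemma that you are missing (Lemma~\ref{lem:comps}): if $Q_{\fin} \smallsetminus Q_{<0} \neq \emptyset$, then it contains a MEC $C$ which is a \emph{trap} in the sense that every transition $s \btran{} t$ with $s \in C$, $t \notin C$ lands in $Q \smallsetminus Q_{\fin}$. This is proved by a BSCC argument on the chain induced by an auxiliary memoryless strategy. With such a $C$ in hand, starting at $p(\ell + |Q|)$ for $p \in C$ and $\ell$ chosen so that $\val(q(\ell)) = \infty$ for all $q \notin Q_{\fin}$, the dichotomy is clean and unconditional: either the strategy exits $C$ while the counter is still $\geq \ell$, in which case it hits a state of infinite value, or it is forced to stay in $C$ until the counter drops below $\ell$, in which case the relevant portion of the strategy \emph{is} a strategy of $\A_C$ and Proposition~\ref{prop:charact}~(A) applies directly. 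No conditioning on a tail event is needed. Without something playing the role of Lemma~\ref{lem:comps}, the ``escape from $C$'' possibility is unaccounted for in your argument.
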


\paragraph{Part 2. of Theorem~\ref{thm:upper-main}.} 

First, we generalize Part~(B) of Proposition~\ref{prop:charact}
into the following:
\begin{proposition}
  \label{prop:lower-main} For every $q \in Q_{\fin}$ there is a
  number $t_q$ computable in time polynomial in $\size{\A}$
  such that $-1 \leq t_q < 0$,  
  $1/|t_q| \in\exp\left(\size{\A}^{\calO(1)} \right)$, and the following
  holds:
\begin{itemize}
\item[(A)] There is a counterless strategy $\sigma$ and a number $U \in
  \exp(\size{\A}^{\calO(1)})$ such that for every configuration
  $\conf{q}{i}$ where $q\in Q_{\fin}$ and $i \geq 0$ we have that 
  $\E^{\sigma} \conf{q}{i} \leq i/|t_q|+U$. Moreover, both 
  $\sigma$ and $U$ are computable in time
  polynomial in $\size{\A}$.
\item[(B)] There is a number $L\in \exp(\size{\A}^{\calO(1)})$ such that
  for every strategy $\pi$ and every configuration
  $\conf{q}{i}$ where $i \geq |Q|$ we have that $\E^{\pi}\geq
  i/|t_q|-L$. Moreover, $L$ is computable in time polynomial in~$\size{\A}$.
\end{itemize}
\end{proposition}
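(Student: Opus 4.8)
The plan is to reduce the general case to the strongly connected case treated in Proposition~\ref{prop:charact} by ``collapsing'' each MEC and tracking the best long-run counter trend reachable from $q$. For $q \in Q_{\fin} = Q_{<0}$, every run under a reasonable strategy eventually enters some MEC $C$ with $\bar{x}_C < 0$; the relevant quantity is how small $1/|\bar{x}_C|$ can be made in expectation over the target MEC. Concretely, I would define $t_q$ by solving an auxiliary reachability-type optimization on $\M_\A$: among all strategies that, from $q$, reach the set $H$ with probability one, consider the distribution on the first MEC $C$ entered and set $1/|t_q| := \inf_\pi \E^\pi_q[\,1/|\bar{x}_{C}|\,]$ where $C$ ranges over MECs hit. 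Since the values $\bar{x}_C$ are rational of polynomial encoding size (Lemma~\ref{lem:sol} applied to each $\A_C$), and there are at most $|Q|$ MECs, this is an ordinary finite-state MDP optimization (minimize expected terminal reward $1/|\bar{x}_C|$ on entry into a MEC), solvable by linear programming in time polynomial in $\size\A$. From standard bounds on LP solutions, $-1 \le t_q < 0$ and $1/|t_q| \le \max_C 1/|\bar{x}_C| \in \exp(\size\A^{\calO(1)})$, since $|\bar{x}_C|$ can be exponentially small but is bounded below by the reciprocal of an exponential (its denominator is at most exponential).

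For part (A), I would build the strategy $\sigma$ in two phases. Phase one: from $q$, play an optimal (memoryless deterministic) strategy for the MDP defining $t_q$, i.e., head towards $H$ so as to minimize the expected value of $1/|\bar{x}_C|$ for the MEC $C$ first entered; this reaches some MEC $C$ in finitely many expected steps (bounded by a polynomial, call the bound $W_1$, using that reachability with probability one gives expected hitting time at most exponential — actually polynomial in the number of states times the reciprocal of the minimal transition probability, hence $\exp(\size\A^{\calO(1)})$). Phase two: once inside $C$, switch to the counterless strategy $\sigma_C$ of Proposition~\ref{prop:charact}(B.2) for $\A_C$, which from a configuration with counter value $j$ terminates in expected time at most $(j+U_C)/|\bar{x}_C|$. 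The subtlety is that the counter value at the moment of entering $C$ may exceed $i$: in the at-most-$W_1$ transitions of phase one the counter grows by at most $W_1$, so entering $C$ at $\conf{r}{j}$ with $j \le i + W_1$. Averaging over the entry MEC $C$, the total expected termination time is at most $W_1 + \E^\pi_q[(i+W_1+U_C)/|\bar{x}_C|]$. Here is where I have to be careful: $i$ is multiplied by $1/|\bar{x}_C|$ inside the expectation, and I need this to come out as $i/|t_q| + (\text{constant})$. This works because $\E^\pi_q[\,i/|\bar{x}_C|\,] = i \cdot \E^\pi_q[1/|\bar{x}_C|] = i/|t_q|$ by the very definition of $t_q$ (the optimal $\pi$ for $t_q$ is used in phase one), and the remaining terms $W_1 + \E^\pi_q[(W_1+U_C)/|\bar{x}_C|]$ are bounded by a single constant $U \in \exp(\size\A^{\calO(1)})$. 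All ingredients — the MEC decomposition, $\bar{x}_C$, $U_C$, the phase-one strategy — are computable in polynomial time, so $\sigma$ and $U$ are too.

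For part (B), the matching lower bound, I would argue that no strategy can do essentially better than the two-phase recipe. Fix any strategy $\pi$ and a configuration $\conf{q}{i}$ with $i \ge |Q|$; assume $\E^\pi\conf{q}{i} < \infty$ (else done). Under $\pi$ the run almost surely enters some MEC $C$ with $\bar{x}_C < 0$ (runs entering a MEC with $\bar{x}_C \ge 0$, or never entering a MEC that it stays in, have infinite expected termination time by Proposition~\ref{prop:charact}(A) and a MEC-decomposition argument, contradicting finiteness). Let $\conf{r}{j}$ be the configuration at the first entry into such a $C$; note $j \ge i - (\text{steps so far})$, but more usefully $j \ge 0$ and the counter has changed by at most the number of transitions taken. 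Once inside $C$, Proposition~\ref{prop:charact}(B.1) applied to $\A_C$ gives that the remaining expected termination time is at least $(j - V_C)/|\bar{x}_C|$, hence at least $(j-V)/|\bar{x}|_{\min}$ crudely, but I need the sharper MEC-specific bound. Summing the transitions before entry plus $(C_{\text{entry}}\text{-value} - V_C)/|\bar{x}_C|$ and taking expectations, the dominant term is $\E^\pi_q[\,(\text{counter at entry})/|\bar{x}_C|\,] \ge \E^\pi_q[\,(i - N)/|\bar{x}_C|\,]$ where $N$ is the number of pre-entry transitions; since each pre-entry transition changes the counter by at most $1$ and also contributes $1$ to the termination time already counted, the $-N/|\bar{x}_C|$ loss is compensated. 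Careful bookkeeping yields $\E^\pi\conf{q}{i} \ge i \cdot \E^\pi_q[1/|\bar{x}_C|] - L \ge i/|t_q| - L$, where the last inequality is exactly the defining minimality of $1/|t_q|$ and $L \in \exp(\size\A^{\calO(1)})$ collects the $V_C$-terms and entry-overhead; $L$ is computable in polynomial time since all $V_C$ are. The main obstacle is precisely this bookkeeping in part (B): making the compensation of the $N$ pre-entry transitions against the $-N/|\bar{x}_C|$ term rigorous, and ensuring that the bound $i \cdot \E^\pi_q[1/|\bar{x}_C|] \ge i/|t_q|$ is legitimate even though $\pi$ need not equal the optimal strategy for $t_q$ — this holds because $t_q$ was defined as an \emph{infimum} over all strategies reaching $H$ with probability one, and the entry-MEC distribution induced by $\pi$ is one feasible point. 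I would isolate the counter-evolution-versus-time accounting into a short lemma before assembling the final inequality.
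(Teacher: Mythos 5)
Your definition of $t_q$ and your plan for part~(A) closely track the paper's construction (modulo the distinction between the \emph{first} MEC entered and the MEC in which the run eventually \emph{settles}; these coincide under an optimal phase-one strategy, but the first-entry formulation is the wrong one to base part~(B) on, since an arbitrary $\pi$ need not remain in the first MEC it hits).

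The genuine gap is in part~(B). You propose to apply Proposition~\ref{prop:charact}~(B.1) to $\A_C$ to bound the post-entry time from $\conf{r}{j}$ below by $(j-V_C)/|\bar{x}_C|$. But (B.1) is proved for the strongly connected OC-MDP $\A_C$, where the process cannot leave~$C$. In the original $\A$, a strategy $\pi$ may exit $C$ immediately after entering it --- for instance towards a MEC $C'$ with a more favourable trend --- in which case the expected termination time from $\conf{r}{j}$ can be strictly \emph{smaller} than $(j-V_C)/|\bar{x}_C|$, so the inequality you need simply does not follow. More generally a run can bounce between several MECs, so a decomposition tracking only the first entry cannot carry the argument, and the ``careful bookkeeping'' you flag as the main obstacle is not a routine accounting step.

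The paper closes exactly this gap with dedicated machinery: Lemma~\ref{lem:general-mec-azuma} gives a martingale lower bound on the time until the run either terminates or \emph{leaves} the current MEC, expressed in terms of the counter value at the exit point; Lemma~\ref{lem:switches-bound} shows that the number $W$ of switches between MEC and non-MEC regions has geometrically decaying tail, hence bounded expectation; and Proposition~\ref{prop:upper-general-slowbound} establishes $\E^\pi \conf{q}{i} \geq i/|t_q| - K\cdot\E^\pi W$ by induction on a MEC ``height'' after reducing to MEC-acyclic OC-MDPs via an unrolling construction $\A(k)$ and a limit $k\to\infty$. Your part~(B) needs some version of all three of these ingredients.
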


\noindent
Once the Proposition \ref{prop:lower-main} is proved, we can compute
an $\varepsilon$-optimal strategy for an arbitrary configuration 
$\conf{q}{i}$ where $q \in Q_{\fin}$ and $i \geq |Q|$ in exactly the same 
way (and with the same complexity) as in the strongly connected case.
Actually, it can also be used to compute the approximate values and
$\varepsilon$-optimal strategies for configurations $\conf{q}{j}$
such that  $q\not\in Q_{\fin}$ or $1 \leq j < |Q|$. Observe that
\begin{itemize}
\item if $q \not \in Q_{\fin}$ and $j\geq |Q|$, the value is infinite 
  by Part~1;
\item otherwise, we construct the finite-state MDP $\G_{|Q|}$ 
  (see Section~\ref{sec:scMDP}) where the loops on
  configurations with counter value~$0$ have reward~$0$, the loops on
  configurations of the form $r(|Q|)$ have reward~$0$ or $1$,
  depending on whether $r \in Q_{fin}$ or not, transitions leading to
  $r(|Q|)$ where $r \in Q_{fin}$ are rewarded with some
  $\varepsilon$-approximation of $\val{(\conf{r}{|Q|})}$, and all
  other transitions have reward~$1$. The reward function can be
  computed in time exponential in $\size{\A}$ by
  Proposition~\ref{prop:lower-main}, and the minimal total accumulated
  reward from $\conf{q}{j}$ in $\G_{|Q|}$, which can be computed by
  standard algorithms, is an $\varepsilon$-approximation of
  $\val({\conf{q}{j}})$. The corresponding $\varepsilon$-optimal
  strategy can be computed in the obvious way.
\end{itemize}

\noindent
The proof of Propositions~\ref{prop:char-fin} and~\ref{prop:lower-main}
can be found in Appendices~\ref{app:proof:prop:char-fin} and~\ref{app:proof-lower-main}, respectively.

\section{Lower Bounds}
\label{sec-lower}

In this section, we show that approximating $\val(\conf{q}{i})$ is
computationally hard, even if $i=1$ and the edge
probabilities in the underlying OC-MDP are all equal to $1/2$.
More precisely, we prove the following:

\begin{theorem}
\label{thm-hard}
  The value of a given configuration $\conf{q}{1}$ cannot be 
  approximated up to a given absolute/relative error $\varepsilon > 0$ 
  unless P=NP, even if all outgoing edges of all stochastic control states 
  in the underlying OC-MDP have probability $1/2$.
\end{theorem}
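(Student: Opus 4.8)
The plan is to reduce from a suitable NP-hard problem — I would use SAT (or more conveniently Subset-Sum / knapsack-like problems, which mesh well with counter arithmetic) — and build, from an instance $\varphi$, an OC-MDP $\A_\varphi$ together with a designated configuration $\conf{q}{1}$ such that the value $\val(\conf{q}{1})$ is ``small'' if $\varphi$ is satisfiable and ``large'' otherwise, with a gap large enough that any $\varepsilon$-approximation (absolute or relative) would decide satisfiability in polynomial time. The key device is the technique of \emph{encoding propositional assignments in the counter value}, as in \cite{Kucera:OC-FS-weak-bisimilarity-TCS,JKMS:one-counter-generic-IC}: a truth assignment to $n$ variables is encoded as a residue class of the counter modulo $c = p_1 \cdots p_n$ (a product of the first $n$ primes, hence at most exponential but with polynomial encoding size), via the Chinese Remainder Theorem, where the $j$-th bit is read off as ``counter $\bmod\ p_j$''. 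First I would construct a ``guessing gadget'': a part of the OC-MDP in which the non-deterministic controller, starting from counter value~$1$, pumps the counter up by a controlled amount, thereby committing to an assignment encoded in the residues; this must be done so that \emph{every} reachable large counter value corresponds to some assignment, and the controller genuinely chooses.

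Next I would construct a ``verification gadget'': a collection of stochastic control states, each associated with a clause (or with the checking of one arithmetic constraint), such that when the run enters this part the environment picks a clause uniformly (using the hypothesis that all stochastic edges have probability $1/2$, which forces me to build small binary branching trees to realize a uniform choice over clauses — this costs only $O(\log m)$ extra states per branching and a careful accounting of the expected time it adds). Inside the gadget, the counter value is repeatedly decremented by the relevant primes to test divisibility/residue conditions encoding ``this clause is satisfied by the committed assignment''; if the test succeeds the run can terminate quickly, and if it fails the run is forced into a region where the counter has a non-negative drift (so by Proposition~\ref{prop:charact}~(A), or by an explicit sub-gadget that simply loops with a $+1$/$0$ stochastic choice, the expected termination time becomes infinite, or at least enormous). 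Thus if $\varphi$ is satisfiable the controller commits to a satisfying assignment, every clause-test succeeds, and the value is bounded by some polynomial (or small exponential) quantity $N$; if $\varphi$ is unsatisfiable, then whatever assignment the controller commits to, some clause-test fails with probability $\geq 1/m$, dragging the expected termination time to $\infty$ (or $\geq N / \varepsilon$). Since $\val \geq 1$ always, this gap defeats both absolute and relative $\varepsilon$-approximation, giving the theorem.

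The main obstacle — and the reason new gadgets beyond \cite{Kucera:OC-FS-weak-bisimilarity-TCS} are needed — is controlling the \emph{quantitative} behaviour rather than a mere yes/no reachability property: I must ensure that in the ``satisfiable'' case the expected termination time is genuinely small (so the arithmetic tests, which naturally require $\Theta(p_1 + \cdots + p_n)$ decrements, i.e.\ exponentially many steps, must be replaced by a more clever logarithmic-cost divisibility check, or the bound $N$ must be taken as a suitable exponential and the error $\varepsilon$ chosen accordingly in the reduction), while in the ``unsatisfiable'' case a \emph{single} failing clause among $m$, hit only with probability $1/m$, still forces the expectation up — here I rely on the fact that an infinite-valued branch taken with any positive probability makes the whole expectation infinite, so a pure reachability-of-a-positive-drift-MEC argument suffices and no delicate expectation estimate is needed on the bad side. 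A secondary technical point is the restriction that stochastic edges have probability exactly $1/2$: realizing a uniform distribution over $m$ clauses then requires padding $m$ to a power of two and adding ``retry'' loops for the unused leaves, and I must check these retries contribute only a bounded expected overhead. Finally, as a byproduct I would note that the constructed example witnesses that the optimal strategy cannot be counterless and is only eventually periodic with period $c$, since the correct move (which clause-residue to aim for) depends on the counter value modulo the exponentially large constant $c$.
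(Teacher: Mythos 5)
Your plan has a fatal structural flaw in the verification gadget. You write that ``if the test succeeds the run can terminate quickly, and if it fails the run is forced into a region where the counter has a non-negative drift'' — but the transition relation $\delta\subseteq Q\times\{+1,0,-1\}\times Q$ of an OC-MDP offers the \emph{same} set of moves at every positive counter height; the only counter-sensitive event is hitting~$0$. There is no way to branch the control flow on a residue test. Consequently the set of MECs reachable from the verification states is a purely graph-theoretic property of the finite MDP $\M_\A$ that does not depend on which residue class the controller ``committed'' to, and by Theorem~\ref{thm:upper-main}.1 (via Proposition~\ref{prop:char-fin}) the question whether $\val(\conf{q}{1})=\infty$ is decidable in polynomial time. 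A reduction making $\val(\conf{q}{1})$ infinite precisely when $\varphi$ is unsatisfiable would therefore put SAT in P, so under the hypothesis $\mathrm{P}\neq\mathrm{NP}$ no such construction can exist: the route you explicitly lean on (``a pure reachability-of-a-positive-drift-MEC argument suffices'') is self-defeating, and the hedge ``or $\geq N/\varepsilon$'' — the only route left — is exactly where the hard quantitative work would be, yet no mechanism for producing a large finite gap without residue-branching is proposed. A second, independent problem is the controller-driven pumping: a \emph{minimizing} controller starting at counter~$1$ has no incentive to raise the counter to an exponential height, and if it did, termination alone would already cost $\Theta(\text{height})$ steps, so ``quick termination after pumping'' is not small in any useful sense.

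The paper's proof (Lemmas~\ref{lem-lower} and~\ref{lem-increase}) sidesteps both obstacles. The residue test is turned into a \emph{timing} phenomenon, not a branching one: the variable gadget of Figure~\ref{fig:hard} is a deterministic cycle of stochastic states in which the counter is decremented at fixed offsets, so the \emph{time} needed to drive the counter to $0$ — not the control flow — depends on the counter value modulo $\pi_i$, and the value is always finite. The satisfiability gap at $\conf{p}{K}$ with $K=\prod_i\pi_i$ is then exactly $1$. To get the initial counter down to $1$, a purely \emph{stochastic} boosting gadget $\G_{m^2}$ is prepended (the controller makes no choices there), guaranteeing that the counter reaches at least $2^{m^2}\ge K$ with probability $>1/4$ before control is handed to the player; an explicit expected-value computation then shows the gap at $\conf{p_{m^2}}{1}$ is at least $1/4$ in absolute terms and at least $2^{-|Q|}$ in relative terms, which is what defeats any polynomial-time $\varepsilon$-approximation.
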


The proof of Theorem~\ref{thm-hard} is split into two phases, which
are relatively independent. First, we show that given a propositional
formula $\varphi$, one can efficiently compute an OC-MDP $\A$, a configuration
$\conf{p}{K}$ of $\A$, and a number $N$ such that the value of $p(K)$
is either $N-1$ or $N$ depending on whether $\varphi$ is satisfiable
or not, respectively. The numbers $K$ and $N$ are exponential in 
$\size\varphi$,
which means that their encoding size is polynomial (we represent all 
numerical constants in binary). Here we use the technique of encoding
propositional assignments into counter values presented in 
\cite{Kucera:OC-FS-weak-bisimilarity-TCS},
but we also need to invent some specific gadgets to deal with our specific
objective. 
The first part already implies that approximating $\val(\conf{q}{i})$
is computationally hard. In the second phase, we show that the same 
holds also for configurations where the counter is initiated to~$1$. 
This is achieved by employing another gadget which just increases the 
counter to an 
exponentially high value with a sufficiently large probability. 
The two phases are elaborated in Lemma~\ref{lem-lower} and 
Lemma~\ref{lem-increase} which can be found in Appendix~\ref{app-lower}.

\bibliography{str-long,concur}
\newpage
\appendix
\section{Appendix}
First, let us fix some additional notation that will be used throughout
the whole appendix.

Given a random variable $X$ we denote $\E^{\pi}_{q(i)} X$ the expected value of $X$ computed under strategy $\pi$ from initial configuration $q(i)$. Since the initial configuration will be fixed in most of the proofs, we will usually omit the subscript and write only $\E^{\pi} X$.

Also, given a random variable $X$ and an event $A$,
we use $\E (X\mid A)$ to denote the conditional expectation of $X$ 
given the event $A$.

We also use $p_{\min}^{\A}$ to denote the minimal positive transition
probability in $\A$. We will usually omit the superscript $\A$ if $\A$
is clear from the context.

We say that (finite or infinite) path $u$ in OC-MDP $\A$ \emph{hits} 
or \emph{reaches} a set $D \subseteq Q$ if $\cstate{i}(u)\in D$ for 
some $i \leq \len{u}$. We say that $u$ \emph{evades} $D$ 
if it does not hit $D$.

\subsection{Proof of part (B.2) of Proposition \ref{prop:charact}.}
\label{app-B2}

First, denote $\A^{\sigma}$ the finite one-counter Markov chain that results from application of counterless strategy $\sigma$ on $\A$. That is, $\A^{\sigma}=(Q,(Q,\emptyset),\delta,P^{\sigma})$ where $P_{q}^{\sigma}(q,i,r)=P(q,i,r)$ for every stochastic state $q$ of $\A$, while for every non-deterministic state $q$ of $\A$ we have that $P_{q}^{\sigma}$ is a Dirac distribution that gives probability $1$ to transition selected by $\sigma(q)$.

Theorem 8.8.6 of \cite{Puterman:book} now guarantees that the set $D$
is exactly the set of all recurrent states in $\A^{\sigma}$. In particular, in $\A^{\sigma}$ there is no transition leaving $D$.

Now, let us recall a fundamental result from theory of linear programming,
the Complementary slackness theorem. In essence,
this theorem states that whenever we have a pair of solutions $u$ and
$v$ of the primal and dual linear program, respectively, then the
following equivalence holds: The $j$-th component of $v$ is positive
iff $u$ satisfies the $j$-th inequality of the primal linear program
as an equality. We can apply this on our pair of solutions $\primsol$,
$\dualsol$ to obtain the following system of linear equations:

\begin{align*}
\bar{\poten_q}
&=
-\bar{x}
+
k
+
\bar{\poten_r}
&
\text{whenever $q\in\pOneS{Q} \cap D$ and $\sigma$ selects $(q,k,r)$,}
\\
\bar{\poten_q}
&=
-\bar{x}
+
\textstyle\sum_{(q,k,r)\in\delta}
P_q((q,k,r))\cdot
(k+\bar{\poten_r})
&
\text{for all $q\in\stochS{Q} \cap D$.}
\end{align*}

With the help of these equations, we can easily prove the following lemma:

\begin{lemma}
\label{lem:martingale-lemma}
Under strategy $\sigma$, for any initial configuration $q(i)$ with $q\in D$ and for any history $u$ of length $i$ we have $\E^{\sigma}(\mar{i+1}\mid \runs{u}) = \mar{i}(u)$. That is, $\{\mar{i}\}_{i\geq0}$ is a martingale relative to the filtration $\{\mathcal{F}_{i}\}_{i \geq 0}$. 
\end{lemma}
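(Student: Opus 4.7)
The plan is to verify the defining martingale identity $\E^{\sigma}(\mar{n+1}\mid \runs{u}) = \mar{n}(u)$ directly from the complementary slackness equalities displayed just above the lemma statement. The first step is to observe that, starting from a configuration $q(i)$ with $q \in D$, the process under $\sigma$ cannot leave $D$. This is because, by Theorem~8.8.6 of \cite{Puterman:book} (invoked at the start of this subsection), $D$ is exactly the set of recurrent states of the induced Markov chain $\A^{\sigma}$, so $\A^{\sigma}$ contains no transition from $D$ to $Q \setminus D$. Consequently, for every history $u$ occurring with positive probability under $\sigma$, the last control state $\cstate{n}(u)$ belongs to $D$, which is precisely the setting in which the slackness equalities (rather than only the LP inequalities used in Lemma~\ref{lem:submart}) are available at every step.

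Next I would split the verification into two cases according to the definition of $\mar{n+1}$. If the counter has already reached zero somewhere in $u$, then $\mar{n+1} \equiv \mar{n}(u)$ on $\runs{u}$ and there is nothing to prove. Otherwise $\counter{j}(u) > 0$ for all $j \leq n$, and setting $r = \cstate{n}(u) \in D$ and $j = \counter{n}(u)$, the one-step conditional expectation of $\counter{n+1} + \bar{\poten}_{\cstate{n+1}} - (n{+}1)\bar{x}$ splits into a deterministic contribution when $r \in Q_1 \cap D$ (since $\sigma$ picks the unique transition $(r,k,r')$ with $\bar{\lpfreq}_{(r,k,r')} > 0$) and a $P_r$-weighted contribution when $r \in Q_0 \cap D$. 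Plugging in the corresponding slackness equality $\bar{\poten}_r = -\bar{x} + k + \bar{\poten}_{r'}$ (respectively, its stochastic form) collapses this one-step expectation to $j + \bar{\poten}_r - n\bar{x}$, which is exactly $\mar{n}(u)$.

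The main obstacle is conceptual rather than computational: one must recognise that it is precisely the closure of $D$ under $\sigma$ that upgrades the LP inequalities underlying the submartingale property of Lemma~\ref{lem:submart} to the equalities needed for a true martingale. The one-step calculation itself is a direct re-run of the derivation used for Lemma~\ref{lem:submart}, with $\geq$ replaced by $=$ wherever complementary slackness supplies an equality, so no further quantitative estimate is required.
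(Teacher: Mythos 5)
Your proof is correct and matches the approach the paper intends (the paper itself only cites Lemma~22 of \cite{BKK:pOC-time-LTL-martingale}, but the complementary slackness equalities and the observation that $D$ is closed under $\sigma$ are set up immediately before the lemma precisely so that this argument goes through). The two key points you identify --- that $\A^\sigma$ never leaves $D$, so the slackness \emph{equalities} rather than the mere LP inequalities apply at every step, and that the resulting one-step computation simply repeats the derivation of Lemma~\ref{lem:submart} with $\geq$ replaced by $=$ --- are exactly what the cited proof does.
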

 \begin{proof}
  The proof is the same as the proof of Lemma 22 in \cite{BKK:pOC-time-LTL-martingale}. \qed
 \end{proof}

From results of \cite{BKK:pOC-time-LTL-martingale} (where termination time of one-counter Markov chains was studied) it follows that under strategy $\sigma$ the expected termination time is finite from every initial configuration of the form $q(i)$ with $q \in D$. To be more specific, we can prove the following:

\begin{lemma} %
 For every initial configuration $\conf{q}{i}$ with $q \in D$ there are numbers $N \in \Nset$, $0<a<1$ such that for every $n \geq N$ we have $\Pr{\sigma}{q(i)}{T=n} \leq a^n $.
\end{lemma}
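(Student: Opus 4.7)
The plan is to apply Azuma's inequality to the martingale $\{\mar{n}\}_{n\geq 0}$ to show that, under $\sigma$ from $\conf{q}{i}$ with $q\in D$, the probability of non-termination by step $n$ decays exponentially; the bound on $\Pr{\sigma}{\conf{q}{i}}{T=n}$ then follows immediately.

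By Lemma~\ref{lem:martingale-lemma}, the sequence $\{\mar{n}\}_{n\geq 0}$ is a genuine martingale under $\sigma$ from such an initial configuration, and we have already noted that its increments are bounded by $c := 1 + \max_{p\in Q}|\bar{\poten}_p| + V$. The key deterministic observation is that on the event $\{T>n\}$, the counter has remained at least $1$ throughout the first $n$ steps (since it starts at $i\geq 1$ and changes by at most $1$ per step), so using the definition of $\mar{n}$ and $\bar{x}<0$,
\[
  \mar{n} \;=\; \counter{n} + \bar{\poten}_{\cstate{n}} - n\bar{x}
  \;\geq\; 1 + \min_{p\in Q}\bar{\poten}_p + n|\bar{x}|.
\]
Combined with $\mar{0} = i + \bar{\poten}_q$, this gives $\mar{n} - \mar{0} \geq n|\bar{x}| - (i + V - 1)$; choosing $N_0 := \lceil 2(i+V)/|\bar{x}| \rceil$, we obtain the inclusion $\{T > n\} \subseteq \{\mar{n} - \mar{0} \geq n|\bar{x}|/2\}$ for every $n \geq N_0$.

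Azuma's inequality applied to the martingale at the deterministic horizon $n$ then yields
\[
  \Pr{\sigma}{\conf{q}{i}}{\mar{n} - \mar{0} \geq n|\bar{x}|/2}
  \;\leq\; \exp\!\left(-\frac{n|\bar{x}|^2}{8c^2}\right).
\]
Setting $a_0 := \exp(-|\bar{x}|^2/(8c^2)) \in (0,1)$, we conclude $\Pr{\sigma}{\conf{q}{i}}{T > n} \leq a_0^n$ for all $n \geq N_0$, and since $\{T=n\} \subseteq \{T > n-1\}$ we get $\Pr{\sigma}{\conf{q}{i}}{T = n} \leq a_0^{n-1}$ for $n \geq N_0 + 1$. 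Picking any $a \in (a_0,1)$ and any $N \geq N_0 + 1$ large enough so that $a_0^{n-1} \leq a^n$ for every $n \geq N$ absorbs the stray factor $1/a_0$ and produces the required bound.

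The main point of care will be applying Azuma at the deterministic horizon $n$ rather than at the random stopping time $T$; since the increment bound $c$ is a deterministic constant independent of $n$ and of the past history, this is precisely the textbook setting of Azuma's inequality and no strengthening is required. Beyond bookkeeping of the constants and the case $i = 0$ (which is trivial, as $T=0$ almost surely), I do not foresee any real obstacle.
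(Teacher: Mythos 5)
Your proof is correct and uses exactly the expected machinery: Azuma's inequality applied at a deterministic horizon $n$ to the martingale $\{\mar{n}\}_{n\ge 0}$ (which is a genuine martingale under $\sigma$ from a configuration in $D$ by Lemma~\ref{lem:martingale-lemma}), together with the observation that non-termination by step $n$ forces $\mar{n}-\mar{0}$ to grow linearly in $n$ because $\bar{x}<0$. The paper simply points to Proposition~7 of the cited one-counter-Markov-chain paper, which establishes this tail bound by the same martingale-concentration argument, so the routes coincide. One small bookkeeping slip: $c:=1+\max_{p}|\bar{\poten}_p|+V$ is not automatically an upper bound on the increments; the clean bound is $|\mar{n+1}-\mar{n}|\le 2+V$ (one unit from the counter step, one from $|\bar{x}|\le 1$, and $V$ from the range of $\bar{\poten}$), and you should use that instead, but this has no effect on the structure of the argument.
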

\begin{proof}
 The proof is the same as proof of Proposition 7 in \cite{BKK:pOC-time-LTL-martingale}. \qed 
\end{proof}

The finiteness of termination time easily follows because $$\E^{\sigma}q(i) = \sum_{k \in \Nset} k\cdot\Pr{\sigma}{q(i)}{T=k} \leq N + \sum_{k \geq N} k\cdot\Pr{\sigma}{q(i)}{T= k} \leq N + \sum_{k \in \Nset}k \cdot a^k < \infty.$$

\begin{corollary}
 For any initial configuration $\conf{q}{i}$ with $q \in D$ we have $\E^{\sigma}\conf{q}{i} < \infty$.
\end{corollary}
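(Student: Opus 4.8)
The plan is to obtain the corollary as an immediate consequence of the exponential tail bound just established. That lemma guarantees, for the fixed configuration $\conf{q}{i}$ with $q \in D$, a threshold $N \in \Nset$ and a constant $0 < a < 1$ such that $\Pr{\sigma}{q(i)}{T=n} \leq a^n$ for every $n \geq N$. Since $T$ is $\Nset$-valued on the terminating runs and non-negative, I would write $\E^{\sigma}\conf{q}{i} = \sum_{k \in \Nset} k \cdot \Pr{\sigma}{q(i)}{T=k}$ and split the sum at the threshold~$N$.

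For the initial segment $k < N$ I would bound each probability crudely by $1$, so those terms contribute at most the finite quantity $\sum_{k<N} k \leq N^2$. For the tail $k \geq N$ I would substitute $\Pr{\sigma}{q(i)}{T=k} \leq a^k$, giving $\sum_{k \geq N} k\, a^k \leq \sum_{k \in \Nset} k\, a^k = a/(1-a)^2 < \infty$, where convergence of this standard series is the only analytic input and follows from $0 < a < 1$. Adding the two contributions yields $\E^{\sigma}\conf{q}{i} < \infty$, which is precisely the estimate displayed immediately above the corollary.

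There is no real obstacle here: the corollary is a one-line deduction from the preceding tail bound, and all the substantive work has already been carried out — identifying (via Theorem~8.8.6 of \cite{Puterman:book}) that the states which $\sigma$ visits recurrently form exactly the class $D$ with no transitions leaving it, showing that $\{\mar{i}\}_{i \geq 0}$ is a genuine martingale under $\sigma$ (Lemma~\ref{lem:martingale-lemma}), and transferring the geometric decay of $T$ from the one-counter Markov chain analysis of \cite{BKK:pOC-time-LTL-martingale}. Recording the corollary separately is worthwhile because finiteness of $\E^{\sigma}\conf{q}{i}$ for $q \in D$ is exactly the hypothesis needed to justify applying the optional stopping theorem to the martingale $\{\mar{i}\}$ with the stopping time $T$ in the remaining steps of the proof of Part~(B.2) of Proposition~\ref{prop:charact}.
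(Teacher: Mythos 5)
Your argument is correct and is essentially identical to the paper's: split $\E^{\sigma}\conf{q}{i} = \sum_{k} k\cdot\Pr{\sigma}{q(i)}{T=k}$ at the threshold $N$ supplied by the preceding tail-bound lemma, bound the finite head by a constant, and dominate the tail by the convergent series $\sum_k k\,a^k$ with $0<a<1$. The only cosmetic difference is that you bound the head by $N^2$ (bounding each probability by $1$) while the paper uses the slightly sharper bound $N$ (bounding $k$ by $N$ and summing the probabilities); this is immaterial to the conclusion.
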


\begin{lemma}
\label{lem:recurrent-upper-bound}
 For any initial configuration $q(i)$ where $q \in D$ we have 
$
\E^{\sigma}\conf{q}{i} \leq (i+V)/|\bar{x}|.
$
\end{lemma}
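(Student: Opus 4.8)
The plan is to apply the optional stopping theorem to the martingale $\{\mar{i}\}_{i\geq0}$ (Lemma~\ref{lem:martingale-lemma}) with the stopping time $T$, and then read off the claimed inequality from the identity $\E^{\sigma}(\mar{0})=\E^{\sigma}(\mar{T})$. All the ingredients are already assembled: under $\sigma$ and from an initial configuration $\conf{q}{i}$ with $q\in D$, the process $\{\mar{i}\}_{i\geq0}$ is a martingale relative to the filtration $\{\mathcal{F}_i\}_{i\geq0}$ (Lemma~\ref{lem:martingale-lemma}) with bounded differences (Lemma~\ref{lem:submart}); $T$ is a stopping time with respect to $\{\mathcal{F}_i\}_{i\geq0}$, since for each $n$ the membership of a run in $\{T\leq n\}$ is determined by its prefix of length $n$; and by the Corollary above $\E^{\sigma}\conf{q}{i}=\E^{\sigma}(T)<\infty$, so in particular $T<\infty$ holds $\Prb{\sigma}{q(i)}$-almost surely. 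A martingale with uniformly bounded increments together with a stopping time of finite expectation satisfies the hypotheses of the optional stopping theorem, which then yields $\E^{\sigma}(\mar{0})=\E^{\sigma}(\mar{T})$.

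It then remains to evaluate the two sides. Since the initial configuration is fixed, $\mar{0}=\counter{0}+\bar{\poten}_{\cstate{0}}-0\cdot\bar{x}=i+\bar{\poten}_q$ deterministically. For $\mar{T}$: on $\{T<\infty\}$ we have $\counter{T}=0$ while $\counter{j}>0$ for all $0\leq j<T$, so the first clause in the definition of $\mar{}$ applies and $\mar{T}=\bar{\poten}_{\cstate{T}}-T\cdot\bar{x}=\bar{\poten}_{\cstate{T}}+T\cdot|\bar{x}|$, using $\bar{x}<0$; taking expectations (legitimate because $\E^{\sigma}(T)<\infty$ and $\bar{\poten}_{\cstate{T}}$ is bounded) gives $\E^{\sigma}(\mar{T})=\E^{\sigma}(\bar{\poten}_{\cstate{T}})+|\bar{x}|\cdot\E^{\sigma}(T)$. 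Equating both sides,
\[
i+\bar{\poten}_q \;=\; \E^{\sigma}(\bar{\poten}_{\cstate{T}}) + |\bar{x}|\cdot\E^{\sigma}(T),
\]
so $\E^{\sigma}\conf{q}{i}=\E^{\sigma}(T)=\bigl(i+\bar{\poten}_q-\E^{\sigma}(\bar{\poten}_{\cstate{T}})\bigr)/|\bar{x}|$. Finally I would use $\bar{\poten}_q\leq\max_{r\in Q}\bar{\poten}_r$ and $\E^{\sigma}(\bar{\poten}_{\cstate{T}})\geq\min_{r\in Q}\bar{\poten}_r$ to conclude $\bar{\poten}_q-\E^{\sigma}(\bar{\poten}_{\cstate{T}})\leq\max_{r\in Q}\bar{\poten}_r-\min_{r\in Q}\bar{\poten}_r=V$, whence $\E^{\sigma}\conf{q}{i}\leq(i+V)/|\bar{x}|$.

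The only delicate point is making the invocation of the optional stopping theorem fully rigorous; I would cite the variant whose hypotheses we have actually checked, namely martingales with uniformly bounded increments and an integrable stopping time, rather than the bounded-stopping-time or uniform-integrability variants. Everything else is a direct computation. Note that this lemma covers only configurations whose control state already lies in $D$ (the set of recurrent states of $\A^{\sigma}$); the bound $(i+U)/|\bar{x}|$ of Proposition~\ref{prop:charact}~(B.2) for an arbitrary $q\in Q$ will follow afterwards by separately accounting for the expected time $\sigma$ spends before reaching $D$, which is where the larger additive constant $U$ comes from.
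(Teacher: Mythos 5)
Your proof is correct and follows exactly the same route as the paper: apply the optional stopping theorem to the martingale $\{\mar{i}\}$ (using boundedness of increments and integrability of $T$), evaluate $\E^{\sigma}\mar{0}=i+\bar{\poten}_q$ and $\E^{\sigma}\mar{T}=\E^{\sigma}\bar{\poten}_{\cstate{T}}+|\bar{x}|\cdot\E^{\sigma}T$, and bound the difference of potentials by $V$. Your write-up is if anything slightly more explicit than the paper's about which variant of optional stopping is being invoked and why $\mar{T}$ falls under the first clause of its definition.
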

\begin{proof}
As in proof of part (B.1) of Proposition \ref{prop:charact} we want to use the Optional stopping theorem to prove that $\E^{\sigma}\mar{0} = \E^{\sigma}\mar{T}$. We just need to verify that the assumptions of this theorem hold. We have already argued that $\{\mar{i}\}_{i\geq 0}$ is a martingale and $T$ is a stopping time relative to the same filtration $\{\mathcal{F}_i\}_{i \geq 0}$. We have also observed that $\{\mar{i}\}_{i\geq 0}$ has bounded differences. From the previous corollary we also now, that the expectation of stopping time $T$ is finite. Thus, the Optional stopping theorem applies and we indeed have $\E^{\sigma}\mar{0} = \E^{\sigma}\mar{T}$. But $\E^{\sigma}\mar{0} = i+\bar{z}_{q}$ and $\E^{\sigma}\mar{T}=\E^{\sigma}\bar{\poten}_{\cstate{T}} + |\bar{x}|\cdot\E^{\sigma}q(i)$. This gives us $\E^{\sigma}\conf{q}{i} = (i + \bar{z}_{q}- \E^{\sigma}\bar{\poten}_{\cstate{T}})/|\bar{x}| \leq (i+V)/|\bar{x}|$.
\qed
\end{proof}

To prove part (B.2) of Proposition \ref{prop:charact} it remains to prove the upper bound for arbitrary initial state. Intuitively, every state outside $D$ is transient in $\A^{\sigma}$ and thus under $\sigma$ we must reach $D$ ``quickly''. Once $D$ is reached, we can apply the bound from previous lemma.

\begin{lemma}
\label{lem:sc-upper-bound-technical-lemma}
Let $q(i)$ be any initial configuration. Denote $p:=\exp(-{p_{\min}}^{|Q|} / |Q|)$ where $p_{\min}$ is the minimal nonzero probability in $\A$. %
 Then we have 
\[
 \E^{\sigma} q(i) \leq  \frac{i + V + 2|Q| + \frac{4}{(1-p)^2}}{|\bar{x}|}.
\]
\end{lemma}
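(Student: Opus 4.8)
The goal is to bound $\E^{\sigma}q(i)$ for an arbitrary starting control state $q$, possibly outside the recurrent set $D$. The strategy $\sigma$ is counterless, so in the induced one-counter Markov chain $\A^{\sigma}$ every state outside $D$ is transient, and from any such state the chain reaches $D$ along a path of length at most $|Q|$ with probability at least $p_{\min}^{|Q|}$ (since $\M_\A$ is strongly connected and $\sigma$ steers towards $D$ outside $D$). The plan is: (i) bound the expected number of steps $\E^\sigma[\tleave]$ before $D$ is first hit, together with the expected net counter change accumulated during that initial phase; (ii) condition on the configuration $r(j)$ at the moment $D$ is first entered and apply Lemma~\ref{lem:recurrent-upper-bound}, which gives $\E^\sigma r(j) \leq (j+V)/|\bar{x}|$; (iii) add the two contributions and simplify.

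For step (i), I would use the standard geometric-tail argument: partition time into blocks of $|Q|$ steps; in each block, as long as we are still outside $D$, we hit $D$ with probability at least $\rho := p_{\min}^{|Q|}$, so the number of blocks before hitting $D$ is stochastically dominated by a geometric random variable, and $\E^\sigma[\tleave] \leq |Q|/\rho$. A cleaner bound for the constant $p$ in the statement: writing $p = \exp(-p_{\min}^{|Q|}/|Q|)$ one gets $1-p \geq p_{\min}^{|Q|}/|Q| - O(p_{\min}^{2|Q|}/|Q|^2)$, and with the inequality $1-e^{-t}\geq t - t^2/2$ one can show the per-step (rather than per-block) "failure" probability is at most $p$, whence $\E^\sigma[\tleave] \leq \sum_{n\geq 0} p^n$-type sums controlled by $1/(1-p)$ and $1/(1-p)^2$. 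Each transition changes the counter by at most $1$ in absolute value, so the counter value $j$ at the entry to $D$ satisfies $j \leq i + \tleave$ pointwise, hence $\E^\sigma[j] \leq i + \E^\sigma[\tleave]$.

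Combining: by the tower property, $\E^\sigma q(i) = \E^\sigma[\tleave] + \E^\sigma\big[\E^\sigma r(j)\big] \leq \E^\sigma[\tleave] + (\E^\sigma[j] + V)/|\bar{x}| \leq \E^\sigma[\tleave] + (i + \E^\sigma[\tleave] + V)/|\bar{x}|$. Since $|\bar{x}| \leq 1$ (because $\bar{x}\geq -1$), the term $\E^\sigma[\tleave]$ outside the fraction is absorbed into $\E^\sigma[\tleave]/|\bar{x}|$, giving $\E^\sigma q(i) \leq (i + V + 2\,\E^\sigma[\tleave])/|\bar{x}|$. It then remains to check that the bound $\E^\sigma[\tleave] \leq |Q| + \tfrac{2}{(1-p)^2}$ holds for the chosen $p$, which yields exactly the claimed inequality $\E^\sigma q(i) \leq (i + V + 2|Q| + \tfrac{4}{(1-p)^2})/|\bar{x}|$.

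The main obstacle is step (i): getting the expected hitting-time bound into precisely the form $|Q| + \tfrac{2}{(1-p)^2}$ with $p = \exp(-p_{\min}^{|Q|}/|Q|)$, rather than a looser $|Q|/\rho$ bound, requires carefully setting up the geometric domination at the level of individual steps and summing $\sum_n n\, p^n = p/(1-p)^2$. One must also be slightly careful that the "within $|Q|$ steps we reach $D$ w.p.\ $\geq \rho$" claim uses that outside $D$ the strategy $\sigma$ is playing an optimal reachability strategy for $D$ in a strongly connected graph, so such a short path genuinely exists and has the claimed probability; this is where the earlier identification of $D$ with the recurrent states of $\A^{\sigma}$ (via Theorem~8.8.6 of \cite{Puterman:book}) and the strong connectivity of $\M_\A$ are both used. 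The rest is bookkeeping with $|\bar{x}|\leq 1$.
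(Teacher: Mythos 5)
Your proposal uses the same $T = T_1 + T_2$ decomposition as the paper (with $T_1$ the time to first hit $D$ or terminate, and $T_2$ the residual termination time), and the same hitting-time/geometric-tail ingredients, but it handles $T_2$ by a genuinely different and cleaner route. The paper conditions on the counter value $l = \counter{T_1}$ at the moment $D$ is entered, splits into $l \le i + |Q|$ and $l > i + |Q|$, and controls the tail by summing $\sum_{l} (l-i)\,p^{l-i}$, which is where the $1/(1-p)^2$ factor in $\E^\sigma T_2$ originates. You instead use the pointwise inequality $\counter{T_1} \le i + T_1$ (each step changes the counter by at most one) together with the tower property and Lemma~\ref{lem:recurrent-upper-bound} applied at the entry configuration: $\E^\sigma T_2 \le (\E^\sigma\counter{T_1} + V)/|\bar{x}| \le (i + \E^\sigma T_1 + V)/|\bar{x}|$, and then absorb the free $\E^\sigma T_1$ term into the $|\bar{x}|$-scaled fraction using $|\bar{x}| \le 1$. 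This sidesteps the polynomially-weighted geometric sum entirely — only $\sum_k p^k$ is ever needed. For this to be rigorous you should note explicitly that when the run terminates before reaching $D$ (i.e.\ $\counter{T_1}=0$) the bound $\E^\sigma r(0) = 0 \le V/|\bar{x}|$ is trivially true, so Lemma~\ref{lem:recurrent-upper-bound} is applicable at every sample point; the tower-property identity also implicitly uses that $\sigma$ is counterless, hence the process is Markov at the stopping time $T_1$.

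One part of your sketch is muddled: the claim that ``the per-step failure probability is at most $p$.'' That is not literally true — reaching $D$ may require up to $|Q|$ steps — and the $1/(1-p)^2$ you mention there is also not the form that arises naturally. The correct route (which the paper takes, citing Lemma~23 of \cite{BKK:pOC-time-LTL-martingale}) is the block argument: the probability of evading $D$ for $k$ steps is at most $(1-p_{\min}^{|Q|})^{\lfloor k/|Q|\rfloor}$, and since $(1-p_{\min}^{|Q|})^{1/|Q|} \le e^{-p_{\min}^{|Q|}/|Q|} = p$ this yields a bound of the form $2p^k$ (for $p_{\min}\le 1/2$), giving $\E^\sigma T_1 \le |Q| + 2/(1-p)$. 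This is actually \emph{stronger} than the $|Q| + 2/(1-p)^2$ you asked for, so the final inequality follows with room to spare; but the per-step heuristic as written would need to be replaced by the block estimate. With that repair, your proof is correct and somewhat shorter than the one in the paper.
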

 Before we prove Lemma \ref{lem:sc-upper-bound-technical-lemma}, we should mention that the Lemma directly implies inequality in part (B.2) of Proposition \ref{prop:charact}. Indeed, the desired inequality holds for $U=V +  2|Q| + \frac{4}{(1-p)^2}$. The required asymptotic bound on $U$ is easy to check: we just need to recall, that for every real number $x\in[0,1]$ we have $1-\exp(-x) \geq x/2$ and thus $1/(1-p)^2 \leq 4|Q|^2/{p_{\min}^{2|Q|}}$. This also shows that $U$ is computable in time polynomial in $\size{\A}$. %

\begin{proof}[Proof of Lemma \ref{lem:sc-upper-bound-technical-lemma}]
We can write
\begin{equation*}
  \E^{\sigma}q(i) = \E^{\sigma}( T_1 + T_2),
\end{equation*}
where $T_{1}(\omega)=k$ iff $k$ is the first point in time when either $\counter{k}(\omega)=0$ or $\cstate{k}(\omega)\in D$; and where $T_2$ returns the termination time measured from the first time when $D$ was hit (formally we have $T_2(\omega)=-T_1(\omega) + T(\omega)$ if $T_1(\omega)<\infty$ and $T_2(\omega)=0$ otherwise). We will bound expectations of $T_1$ and $T_2$ separately.

 Let's start with $T_1$. Any run $\omega$ with $T_1(\omega) \geq k$ must either terminate before hitting $D$ but after at least $k$ steps; or it has to hit $D$ after at least $k$ steps. In both cases $\omega$ has to evade $D$ for at least $k-1$ steps. %
 From e.g. Lemma 23 of \cite{BKK:pOC-time-LTL-martingale} we know, that probability of evading $D$ for at least $k-1 \geq |Q|-1$ steps is at most $2p^k$. %
 We get

\begin{align}
\E^{\sigma} T_1 = \sum_{k=1}^{\infty} k \cdot \Pr{\sigma}{q(i)}{T_1 = k} = \sum_{k=1}^{\infty} \Pr{\sigma}{q(i)}{T_1 \geq k} \leq |Q| + \sum_{k=|Q|+1}^{\infty} \Pr{\sigma}{q(i)}{T_1 \geq k} \nonumber \\
\leq |Q| + \sum_{k=|Q|+1}^{\infty} 2p^{k} \leq |Q| + \sum_{k=0}^{\infty} 2p^{{k}} \leq |Q| + \frac{2}{1-p}.  \label{scub1}
\end{align}

Let us now concentrate on $T_2$. For any $l > 0$ we denote $D_{l}$ the set of all runs that terminate after hitting $D$ and have a counter value $l$ when they hit $D$ for the first time. (Formally, $\omega \in D_l$ iff $T_1(\omega)<\infty$, $\cstate{T_1}(\omega)\in D$ and $\counter{T_1}(\omega)=l$.) We also denote $D_0$ the set of all runs that reach a configuration with zero counter before or simultaneously with hitting $D$ for the first time.
Then we have 
\begin{equation}
\label{scub2}
 \E^{\sigma} T_2 = \sum_{l=0}^{\infty} \E^{\sigma}(T_2\mid D_{l})\cdot\Pr{\sigma}{q(i)}{D_{l}}.
\end{equation}
Note that by Lemma \ref{lem:recurrent-upper-bound} we have for every $l \in \Nset$
\[
 \E^{\sigma}(T_2\mid D_{l}) \leq \frac{l+V}{|\bar{x}|}
\]
Particularly for every $l \leq i+|Q|$ we have 
\begin{align} 
 \E^{\sigma}(T_2\mid D_{l}) &\leq  \frac{i  + V}{|\bar{x}|} +\frac{|Q|}{|\bar{x}|} \label{scub3}.
\end{align}

On the other hand, for $l \geq i + |Q|$ we have $\Pr{\sigma}{q(i)}{D_{l}} \leq 2p^{l-i}$, since no run in $D_l$ can hit $D$ in less than $l-i$ steps. Moreover, for $l \geq i + |Q|$ we can write
\begin{align}
 \E^{\sigma}(T_2\mid D_{l}) \leq \frac{i+(l-i) + V}{|\bar{x}|} = \frac{i+V}{|\bar{x}|} + \frac{(l-i)}{|\bar{x}|}. \label{scub9}
\end{align}

Plugging \eqref{scub3} and \eqref{scub9} into \eqref{scub2} we can compute 
\begin{align}
 \E^{\sigma} T_2 &\leq \frac{i + V}{|\bar{x}|} +\frac{|Q|}{|\bar{x}|} + \sum_{l \geq i + |Q|}^{\infty} \frac{2\cdot(l-i)\cdot p^{l-i}}{|\bar{x}|}  \leq \frac{i + V}{|\bar{x}|} +\frac{|Q|}{|\bar{x}|} + \frac{2}{|\bar{x}|\cdot(1-p)^2}. \label{scub10}
\end{align}

Putting \eqref{scub1} and \eqref{scub10} together we obtain
\[
 \E^{\sigma} \leq \frac{i + V}{|\bar{x}|}+\frac{|Q|}{|\bar{x}|} + \frac{4}{|\bar{x}|\cdot(1-p)^2}  +|Q| \leq \frac{i + V + 2|Q| + \frac{4}{(1-p)^2}}{|\bar{x}|}.
\]
\qed
\end{proof}

\subsection{Proof of Proposition~\ref{prop:char-fin}}\label{app:proof:prop:char-fin}

First, consider the membership problem for $Q_{<0}$. A decomposition
of $Q$ into maximal end components can be computed in polynomial time
using standard algorithms (see,~e.g.~\cite{Puterman:book}). By solving
the system $\mathcal{L}$ for individual MECs, we obtain the trends
$\bar{x}_C$ that in turn determine the set $H$. Finally, solving, in
polynomial time, the qualitative reachability of $H$ for every state
$q$ we obtain the set $Q_{<0}$.

It remains to prove that $Q_{\fin}=Q_{<0}$. We prove both inclusions separately.

\noindent
`$\supseteq$': Assume that $p\in Q_{<0}$.  First, observe that if $p$
belongs to a MEC $C$ satisfying $\bar{x}_C<0$ then, by
Proposition~\ref{prop:charact}, there is a counterless strategy which
stays in $C$ and terminates in finite expected time. In particular,
$\val(p(\ell))$ is finite and depends linearly on $\ell$.

Assume that a strategy $\sigma$ almost surely reaches $H$ from $p$ in
$\A_\M$.  As almost sure reachability is solved using memory-less
strategies in finite MDPs, we may assume that $\sigma$ is memory-less.
Denote by $\mathcal{H}$ the set of all configuration of the form
$q(\ell)$ where either $q\in H$, or $\ell=0$.  The strategy $\sigma$
induces a counter-less strategy $\sigma'$ in $\A^{\infty}_\M$ which
reaches $\mathcal{H}$ with probability one.  Moreover, using
$\sigma'$, $\mathcal{H}$ is reachable with a positive probability from
any configuration in at most $|Q|$ steps. This means that the expected
time to reach $\mathcal{H}$ is finite and the probability of reaching
a configuration of $\mathcal{H}$ with counter value at most $\ell$
before any other configuration of $\mathcal{H}$ is bounded by
$\frac{c}{d^\ell}$ for suitable constants $c,d>0$. As $\val(q(\ell))$
depends linearly on $\ell$ for every $q\in H$, we obtain that the
expected termination time for $p(k)$ is finite.

\noindent
`$\subseteq$': %
We proceed by contradiction. Assume that $Q_{fin}\smallsetminus
Q_{<0}\not = \emptyset$.
The following Lemma formalizes the crucial idea. %
\begin{lemma}\label{lem:comps}
  Assuming $Q_{fin}\smallsetminus Q_{<0}\not = \emptyset$, there is a
  MEC $C$ satisfying $C\subseteq Q_{fin}\smallsetminus Q_{<0}$ for
  which the following holds: if $s\btran{} t$ where $s\in C$ and
  $t\not\in C$, then $t\in Q\smallsetminus Q_{fin}$.
\end{lemma}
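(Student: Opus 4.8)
The plan is to argue by contradiction, starting from the assumption $Q_{\fin}\smallsetminus Q_{<0}\neq\emptyset$, and to extract from this set a MEC with the stated ``escape'' property. Consider the restriction of the finite-state MDP $\M_\A$ to the set $Q_{\fin}\smallsetminus Q_{<0}$. I first want to observe that this restricted graph contains at least one end component: this is a standard fact about MDPs --- any nonempty set of states that is ``closed enough'' contains a MEC --- but one has to check the set is large enough to support one. The key is that $Q_{<0}$ is by definition the set of states from which $H$ is reached with probability $1$; hence if $q\in Q_{\fin}\smallsetminus Q_{<0}$, then from $q$ there is no strategy reaching $H$ almost surely, which (by the theory of almost-sure reachability in finite MDPs) means every strategy leaves a positive-probability ``escape'' route that avoids $H$ forever inside $Q\smallsetminus Q_{<0}$. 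In particular the subgraph on $Q\smallsetminus Q_{<0}$ has the property that for every stochastic state $q$ in it that could leave it, leaving would have to go to a state still in $Q\smallsetminus Q_{<0}$ with probability $1$ — more carefully, stochastic states in $Q_{<0}$ whose successors are not all in $Q_{<0}$ cannot occur, so $Q_{<0}$ is closed under the ``all stochastic successors'' condition, and so is its complement in the relevant sense. From this, standard MEC-decomposition arguments yield a MEC $C$ of $\A$ with $C\subseteq Q\smallsetminus Q_{<0}$.

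Next I would pin down that this $C$ actually lies inside $Q_{\fin}$, not merely inside $Q\smallsetminus Q_{<0}$. Here is where I would use $Q_{\fin}$-membership being ``inherited upward'' in the reachability order: if $C$ contained a state outside $Q_{\fin}$, or if $C$ could only reach $Q_{\fin}$-states by first passing through a non-$Q_{\fin}$ state, we would want a contradiction with $C$'s role. Actually the cleaner route: among all MECs contained in $Q\smallsetminus Q_{<0}$, pick one, say $C'$; since $\bar{x}_{C'}\geq 0$ (otherwise $C'\subseteq H\subseteq Q_{<0}$, contradiction), Proposition~\ref{prop:charact}(A) shows $\val$ is infinite on all high-counter configurations within $\A_{C'}$ — but a strategy in the big OC-MDP can leave $C'$. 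So to force the value infinite we must ensure no escape to a ``good'' ($Q_{\fin}$) state. This suggests choosing $C$ minimally with respect to the reachability preorder on MECs restricted to $Q_{\fin}\smallsetminus Q_{<0}$: take a MEC $C\subseteq Q_{\fin}\smallsetminus Q_{<0}$ (existence: the subgraph on $Q_{\fin}\smallsetminus Q_{<0}$ is itself large enough to contain a MEC, by the same closure reasoning applied within $Q_{\fin}$, using that $Q_{\fin}=Q_{<0}$'s complement behaves well and that a run staying in $Q_{\fin}$ forever with counter bounded must loop in some end component) that is a sink among such MECs, i.e., every transition leaving $C$ either stays in $Q_{<0}$-reachable territory or lands outside $Q_{\fin}$. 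The first alternative is impossible: a state in $Q_{\fin}\smallsetminus Q_{<0}$ cannot have a transition into $Q_{<0}$ at all if it is non-deterministic (the controller could then reach $H$ a.s., putting it in $Q_{<0}$), and if stochastic, all its successors avoid $Q_{<0}$; combined with $C$ being $\subseteq$-maximal and a reachability-sink, every edge leaving $C$ must go to $Q\smallsetminus Q_{\fin}$.

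So the concrete steps in order are: (1) show the induced subgraph on $Q\smallsetminus Q_{<0}$, and in fact on $Q_{\fin}\smallsetminus Q_{<0}$, is closed under the MDP end-component conditions at stochastic states (using the characterization of $Q_{<0}$ via a.s.-reachability of $H$); (2) deduce it contains a MEC, and among all MECs inside $Q_{\fin}\smallsetminus Q_{<0}$ choose one, $C$, that is maximal w.r.t.\ the reachability preorder (i.e.\ no other such MEC is reachable from it); (3) take any edge $s\btran{} t$ with $s\in C$, $t\notin C$, and argue $t\notin Q_{<0}$ (else $s$ could reach $H$ with positive probability, and by closure of $C$ inside $Q\smallsetminus Q_{<0}$ combined with $\sigma$-optimality / a.s.\ arguments, so could the whole of $C$, contradicting $C\subseteq Q\smallsetminus Q_{<0}$); (4) argue $t\notin Q_{\fin}\smallsetminus Q_{<0}$ either: if $t$ were in $Q_{\fin}\smallsetminus Q_{<0}$, it lies in some MEC of that set, which is reachable from $C$, contradicting maximality of $C$ (and $t\notin C$). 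Hence $t\in Q\smallsetminus Q_{\fin}$, which is exactly the claim.

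I expect the main obstacle to be step (1)/(2): making precise why a MEC must exist inside $Q_{\fin}\smallsetminus Q_{<0}$ rather than merely inside $Q\smallsetminus Q_{<0}$, since $Q_{\fin}$ is defined semantically (finite value) while MEC existence is a graph-theoretic statement. The bridge is that $Q_{\fin}=Q_{<0}$ is precisely what we are trying to prove, so we cannot invoke it; instead I would characterize $Q_{\fin}$-membership directly — $q\in Q_{\fin}$ means there is a strategy terminating in finite expected time from $q(k)$ for every $k$ — and show that, under the contradiction hypothesis, the ``bottom'' MEC $C$ reachable inside $Q\smallsetminus Q_{<0}$ from a witness state of $Q_{\fin}\smallsetminus Q_{<0}$ must itself be in $Q_{\fin}$: otherwise the witness state could only terminate by escaping to $Q\smallsetminus Q_{\fin}$, whence its value would be infinite, contradicting membership in $Q_{\fin}$. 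Getting the quantifiers right here — ``for all $k$'' in the definition of $Q_{\fin}$ versus the exponential decay of escape probabilities at large counter values — is the delicate point, and I would handle it exactly as in the `$\supseteq$' direction above, reusing the bound $c/d^{\ell}$ on the probability of escaping at counter value $\leq\ell$.
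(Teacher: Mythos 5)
Your overall plan — first get a MEC inside $Q_{\fin}\smallsetminus Q_{<0}$, then isolate one whose only escapes go outside $Q_{\fin}$ — matches the paper's structure, and your first step (MEC existence via the observation that a strategy witnessing $p\in Q_{\fin}$ confines almost all runs to MECs in $Q_{\fin}$, which then can't all sit in $Q_{<0}$) is essentially the paper's argument. Your step~(3) is also sound once you add the remark that $s$ must be non-deterministic: since $C$ is an end component, a stochastic $s\in C$ cannot have a successor $t\notin C$, and for a non-deterministic $s$ an edge into $Q_{<0}$ would put $s$ itself in $Q_{<0}$.

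The gap is in steps~(2)/(4), where you want to pick $C$ ``maximal w.r.t.\ the reachability preorder'' and conclude that no edge from $C$ can land on a state of $Q_{\fin}\smallsetminus Q_{<0}$. Two problems. First, $t\in Q_{\fin}\smallsetminus Q_{<0}$ does not place $t$ in a MEC of that set; you would still need to argue that \emph{some} MEC of $Q_{\fin}\smallsetminus Q_{<0}$ is reachable from $t$ (this is patchable by rerunning your step~(1) from $t$, but you need to say so). Second, and more seriously, reachability between MECs is only a preorder, not a partial order: two distinct MECs $C$, $C'$ can be mutually reachable (the connecting paths may pass through stochastic states with successors outside $C\cup C'$, so $C\cup C'$ together with those paths need not form a larger end component). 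In that situation both $C$ and $C'$ are ``maximal'' in your preorder, yet the edge from $C$ lands in territory from which $C'\neq C$ is reachable, and maximality gives you no contradiction. The paper avoids this by a global argument: it assumes every MEC of $Q_{\fin}\smallsetminus Q_{<0}$ escapes into $Q_{\fin}$, builds one memoryless strategy $\pi$ that in every such MEC marches to the designated escape edge and takes it, and then examines the BSCCs of $M^\pi$. A BSCC that avoids $Q_{<0}$ is an end component lying inside $Q_{\fin}\smallsetminus Q_{<0}$, hence contained in a single MEC $C$; but by construction it also contains the escape target $t_C\notin C$, which is impossible. This argument handles mutually reachable MECs automatically, because the BSCC would then straddle two disjoint MECs, which is equally impossible. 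You would need to replace your maximality argument with something of this flavour — or else prove separately that, under the standing constraints, two MECs of $Q_{\fin}\smallsetminus Q_{<0}$ cannot be mutually reachable via paths in $Q_{\fin}\smallsetminus Q_{<0}$, which is not obvious and is in effect the content of the paper's BSCC contradiction.
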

\begin{proof}
 First, we prove that if $Q_{fin}\smallsetminus Q_{<0}\not = \emptyset$, then it contains at least one MEC.
Assume, to the contrary, that all MECs contained in $Q_{fin}$ are also contained in $Q_{<0}$. We claim that then $Q_{fin}\subseteq Q_{<0}$.
Indeed, consider $p\in Q_{fin}\smallsetminus Q_{<0}$.
Note that starting in $p$, almost every run eventually reaches a MEC no matter what strategy is used. Moreover, there is a strategy which almost surely stays within $Q_{fin}$ forever starting in $p$. Using such a strategy, almost all runs initiated in $p$ reach MECs contained in $Q_{fin}$ and hence also in $Q_{<0}$. Thus, by definition of $Q_{<0}$, we have $p\in Q_{<0}$ which contradicts $p\in Q_{fin}\smallsetminus Q_{<0}$.

If there is a MEC $C\subseteq Q_{fin}\smallsetminus Q_{<0}$ such that no transition $s\btran{} t$ satisfies $s\in C$ and $t\not\in C$, then we are done. 
Assume, to obtain a contradiction, that for every MEC $C\subseteq Q_{fin}\smallsetminus Q_{<0}$ there is $s_C\btran{} t_C$ such that $s_C\in C$ but $t_C\in Q_{fin}\smallsetminus C$. Then for every $t_C$ there is a strategy which stays within $Q_{fin}$.
Let us consider a strategy $\pi$ that does the following:
\begin{itemize}
\item in all states of every MEC $C$ satisfying $C\subseteq Q_{fin}\smallsetminus Q_{<0}$, the strategy $\pi$ strives to reach $s_C$ with probability one
\item in each $s_C$, the strategy $\pi$ takes the transition $s_C\btran{} t_C$ with probability one
\item in states of $Q_{fin}$ that do not belong to any MEC, the strategy $\pi$ stays in $Q_{fin}$.
\end{itemize}
Note that we may safely assume that $\pi$ is memory-less. Consider the Markov chain $M^{\pi}$ induced by $\pi$ on states of $Q_{fin}$.
There are two possibilities. First, every bottom strongly connected component (BSCC) of $M^{\pi}$ contains a state of $Q_{<0}$. Then $Q_{<0}$ is reachable with probability one using $\pi$ from states of $Q_{fin}\smallsetminus Q_{<0}$, a contradiction with definition of $Q_{<0}$. Assume that there is at least one BSCC of $M_{\pi}$ which does not contain states of $Q_{<0}$. However, then the BSCC contains only states of $Q_{fin}\smallsetminus Q_{<0}$. Thus, by definition of $\pi$, the BSCC must contain at least two MECs, a contradition with the definition of MEC.
\qed
\end{proof}

Now let $\ell$ be a counter value such that for every $q\in
Q\smallsetminus Q_{fin}$ we have that $\val(q(\ell))=\infty$.  Let
$\sigma$ be a strategy and consider $p(\ell+|Q|)$ where $p\in C$. We
prove that $p$ cannot belong to $Q_{fin}$ which contradicts
$C\subseteq Q_{fin}\smallsetminus Q_{<0}$.

There are two cases. First, assume that using $\sigma$, a
configuration of the form $q(k)$, where $k\geq \ell$ and $q\in
Q\smallsetminus C$, is reachable via configurations with counter
values at least $\ell$ whose control states belong to $C$. Then by Lemma \ref{lem:comps}, $q\in Q\smallsetminus Q_{fin}$ and thus the expected
termination time from $q(\ell)$ is infinite. It follows that the
termination time from $p(\ell+|Q|)$ using $\sigma$ is infinite as
well. Assume that there is no such a path, i.e. that the only way how
to leave $C$ from $p(\ell+|Q|)$ using $\sigma$ is to decrease the
counter value below $\ell$. But then the expected termination time
from $p(\ell+|Q|)$ using $\sigma$ is at least as large as
$\val(p(|Q|))$ in $\A_C$, which is infinite by
Proposition~\ref{prop:charact} due to $\bar{x}_C\geq 0$. In both cases
we obtain that $p\not\in Q_{fin}$, a contradiction.

Note that the number $l$ mentioned above can be bounded from above by
$|Q|$ by Lemma \ref{lem:infty-low-boundary}.

\subsection{Proof of Proposition~\ref{prop:lower-main}}\label{app:proof-lower-main}

First we introduce some notation: for any run $\omega$ we denote $\inf(\omega)$ the set of states that are visited infinitely often by $\omega$. For any MEC $C$ we denote $M_C = \{\omega\mid \inf(\omega)\subseteq C\}$. It is well known that under arbitrary strategy $\pi$ we have $\Pr{\pi}{}{\bigcup_{C \in \MEC(\A)} M_C} = 1$, i.e. that $\inf(\omega)$ is almost surely contained in some MEC.

For any state $q$ denote $\Sigma_q^{<0}$ the set of all strategies $\sigma$ with the property that $\Pr{\sigma}{q}{\{\omega \mid \omega \in M_C, \bar{x}_C \geq 0\}}=0$. Note that by Proposition \ref{prop:char-fin} we have $\Sigma_{q}^{<0}\neq\emptyset$ for all $q \in Q_{fin}$.

Let us start with part (B) of Proposition \ref{prop:lower-main}.
We want to describe a counterless strategy $\sigma$ that terminates ``quickly`` from any configuration $q(j)$ with $q \in Q_{fin}$. Part (B2) of Proposition \ref{prop:charact} gives us for every MEC $C$ counterless strategy $\sigma_C$ such that for any initial configuration $q(i)$ with $q \in C$ we have $\E^{\sigma_C}\conf{q}{i} \leq (i+U_C)/|\bar{x}_C|$, for some number $U_C$. Main idea behind construction of $\sigma$ is to stitch these strategies together in appropriate way.

We argue that the following should hold: %
  First, strategy $\sigma$ should be in $\Sigma_q^{<0}$ for all states $q \in Q_{fin}$. Otherwise, the finite Markov chain $\A^{\sigma}$ induced by $\sigma$ on states of $\A$ would have some bottom strongly connected component (BSCC) contained in MEC $C$ with $\bar{x}_C \geq 0$. By part (A) of Proposition \ref{prop:charact} this would mean that $\E^{\sigma} q(j) = \infty$ for some $j$. %

Second, strategy $\sigma$ should minimize the long-run average number of steps needed to decrease the counter value by one. Note that since $\bar{x}_C$ represents the minimal long-run average change in counter value in MEC $C$, the number $|\bar{x}_C|^{-1}$ represents exactly the long-run average time needed to decrease the counter by 1 in $C$, provided that \mbox{$\bar{x}_C<0$}. Thus, strategy $\sigma$ should minimize the weighted sum $\sum_{C \in \MEC(\A)} \Pr{\sigma}{q(i)}{M_C}\cdot |\bar{x}_C|^{-1}$ for any initial configuration $q(i)$ with $q\in Q_{fin}$. Note that the objective of minimizing $\sum_{C \in \MEC(\A)} \Pr{\sigma}{q(i)}{M_C}\cdot |\bar{x}_C|^{-1}$ does not depend in any way on counter values so it suffices to show that the sum is minimized for some (unspecified) initial counter value $i$.
More formally, for every state $q \in Q_{fin}$ there is unique number $t_q<0$ such that $|t_q|^{-1} = \inf_{\pi \in \Sigma_{q}^{<0}}\sum_{C \in \MEC(\A)} \Pr{\pi}{q(i)}{M_C}\cdot |\bar{x}_C|^{-1}$ for all $i$. We call $t_q$ the \emph{minimal trend} achievable from $q$. Our goal is to find counterless deterministic strategy $\sigma \in \Sigma^{<0}_{q}$ such that $\sum_{C \in \MEC(\A)} \Pr{\sigma}{q(i)}{M_C}\cdot |\bar{x}_C|^{-1} = |t_q|^{-1}$, for every $q \in Q_{fin}$ and every $i$.

Denote \mbox{$\bar{x}_0 = \max\{\bar{x}_C \mid C \in \MEC(\A),\bar{x}_C < 0\}$}.
In order to compute strategy $\sigma$ and numbers $t_q$, we transform $\A$ into a new finite-state MDP with rewards $\A_R$ by ``forgetting'' counter changes in $\A$ and defining a reward function $R$ on transitions in $\A$ as follows:
\[ R(s \btran{} t) = \begin{cases}
                      \frac{1}{\bar{x}_C} & \text{ if } s,t \in C \text{ and } \bar{x}_C < 0 \\
                      \frac{x_0^{-1}-1}{p_{\min}^{|Q|}} & \text{otherwise,}%
                     \end{cases}
\]
It is clear that $\A_R$ can be constructed in time polynomial in $\size{\A}$.
\begin{claim}\label{claim:fast-strategy-general}
In $\A_R$ the maximal average reward achievable from state $q \in Q_{fin}$ is equal to $t_q^{-1}$. 
Moreover, there is a memoryless deterministic strategy $\sigma_{R}$ in $\A_{R}$ such that for every state $q \in Q_{fin}$ we have $\sigma_{R}\in \Sigma_{q}^{<0}$ and $\sum_{C \in \MEC(\A)} \Pr{\sigma_R}{q}{M_C}\cdot |\bar{x}_C|^{-1} = |t_q|^{-1}$.
\end{claim}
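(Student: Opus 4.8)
The plan is to obtain the claim from the classical theory of average‑reward (``gain'') optimisation in finite MDPs applied to $\A_R$, the extra ingredient being that the large reward $\frac{\bar x_0^{-1}-1}{p_{\min}^{|Q|}}$ is a penalty heavy enough to forbid any gain‑optimal run from settling in a MEC with $\bar x_C\ge 0$. First I would record the shape of $\A_R$: since distinct MECs are disjoint, if $s,t\in C$ then $s,t$ lie in no other MEC, so every transition internal to a MEC $C$ carries the same reward, namely $1/\bar x_C$ if $\bar x_C<0$ and $\frac{\bar x_0^{-1}-1}{p_{\min}^{|Q|}}$ otherwise; as $C$ is an end component a strategy may remain in $C$ forever, so the gain of the sub‑MDP induced by $C$ is exactly this common value $g_C$. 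By the standard results on multichain average‑reward MDPs (\cite{Puterman:book}, Ch.~8--9) there is a single memoryless deterministic strategy $\sigma_R$ that is gain‑optimal from every state, the gain from a state $q$ equals $\max\{g_C\mid C\in\MEC(\A),\ C\text{ reachable from }q\text{ in }\M_\A\}$, and under any strategy almost every run eventually settles in some MEC $C$, whence its Cesàro average equals $g_C$.

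Next I would do the sign bookkeeping that makes the penalty bite. From $\bar x_C\ge -1$ for every MEC we get $\bar x_0\in[-1,0)$, hence $\bar x_0^{-1}\le -1$, and since $p_{\min}\le 1$ (so $p_{\min}^{|Q|}\le 1$) this gives $\frac{\bar x_0^{-1}-1}{p_{\min}^{|Q|}}\le\bar x_0^{-1}-1<\bar x_0^{-1}$. Moreover $\bar x_0$ is the least negative of the negative trends, so $\bar x_0^{-1}=\min\{1/\bar x_C\mid\bar x_C<0\}$; combining the two observations, $g_C<\bar x_0^{-1}\le g_{C'}$ whenever $\bar x_C\ge 0$ and $\bar x_{C'}<0$. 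Now fix $q\in Q_{\fin}=Q_{<0}$. By Proposition~\ref{prop:char-fin} the set $H$ is reachable from $q$, so some MEC $C_0$ with $\bar x_{C_0}<0$ is reachable from $q$; with the last inequality the maximum defining the gain from $q$ is attained at a negative‑trend MEC, i.e.\ that gain equals $\max\{1/\bar x_C\mid C\text{ a MEC},\ \bar x_C<0,\ C\text{ reachable from }q\}$. On the other hand $t_q^{-1}=\sup_{\pi\in\Sigma_q^{<0}}\sum_{C}\Pr{\pi}{q}{M_C}/\bar x_C$, and for $\pi\in\Sigma_q^{<0}$ the numbers $\Pr{\pi}{q}{M_C}$ are supported on negative‑trend MECs reachable from $q$, so this supremum is a convex combination of exactly the same numbers $1/\bar x_C$; its value is that same maximum — the upper bound is immediate, and the lower bound is realised by the memoryless deterministic strategy that heads straight to the reachable negative‑trend MEC maximising $1/\bar x_C$ and stays there, which indeed lies in $\Sigma_q^{<0}$. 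This proves the first assertion, $t_q^{-1}=\text{gain from }q$.

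Finally I would check that the gain‑optimal $\sigma_R$ above has the stated properties. For $q\in Q_{\fin}$ we have $\E^{\sigma_R}_q[\text{gain}]=t_q^{-1}\ge\bar x_0^{-1}$. Put $\beta=\sum_{C:\bar x_C\ge 0}\Pr{\sigma_R}{q}{M_C}$ and split the expected gain according to the MEC in which the run settles: every negative‑trend MEC carrying positive probability is reachable from $q$, so its gain is $\le t_q^{-1}$, and the corresponding part contributes at most $(1-\beta)t_q^{-1}$, while the remaining part contributes $\beta\cdot\frac{\bar x_0^{-1}-1}{p_{\min}^{|Q|}}<\beta\cdot\bar x_0^{-1}\le\beta\cdot t_q^{-1}$; if $\beta>0$ this forces $\E^{\sigma_R}_q[\text{gain}]<t_q^{-1}$, a contradiction. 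Hence $\beta=0$, i.e.\ $\sigma_R\in\Sigma_q^{<0}$, and then $\E^{\sigma_R}_q[\text{gain}]=\sum_{C}\Pr{\sigma_R}{q}{M_C}/\bar x_C=-\sum_{C\in\MEC(\A)}\Pr{\sigma_R}{q}{M_C}/|\bar x_C|$ equals $t_q^{-1}=-|t_q|^{-1}$, which is exactly the required identity. I expect no single step to be genuinely hard; the real care is in keeping the inequality orientations straight (all rewards in play are negative, and $\bar x_0^{-1}$ is simultaneously the smallest ``good'' gain and the threshold the penalty must undercut), and in observing that one and the same $\sigma_R$ serves all $q\in Q_{\fin}$ at once — which is automatic, since the gain‑optimal memoryless strategy is optimal from every state.
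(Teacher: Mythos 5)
Your overall plan — use the heavy penalty reward so that a gain‑optimal strategy must avoid the non‑negative‑trend MECs, then read the claim off the MEC decomposition of the Cesàro average — is also the paper's plan. But there is a genuine gap in how you characterize the optimal gain, and it is not a cosmetic one.

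You assert, attributing it to the standard multichain theory, that the optimal gain from $q$ equals $\max\{g_C \mid C\in\MEC(\A),\ C\ \text{reachable from}\ q\}$, and correspondingly that $t_q^{-1}=\max\{1/\bar{x}_C \mid \bar{x}_C<0,\ C\ \text{reachable from}\ q\}$. Neither is true. The controller cannot in general steer the run into the best reachable MEC with probability one, because from a stochastic state the outgoing distribution is fixed and may split mass between MECs of different gains. For instance, let $q$ be stochastic with two equiprobable edges into disjoint MECs $C_1,C_2$ with $\bar{x}_{C_1}=-1$ and $\bar{x}_{C_2}=-\tfrac{1}{10}$; then $|t_q|^{-1}=\tfrac12\cdot 1+\tfrac12\cdot 10=\tfrac{11}{2}$, which is a weighted average, not the minimum $1$. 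The correct statement is that the optimal gain is $\sup_{\pi}\sum_{C}\Pr{\pi}{q}{M_C}\,g_C$; you cannot replace the sup of weighted averages by a max. As a result, the inequality you invoke in your last paragraph, ``every negative‑trend MEC carrying positive probability is reachable from $q$, so its gain is $\le t_q^{-1}$,'' is also false — in the example above $g_{C_1}=1/\bar{x}_{C_1}=-1 > -\tfrac{11}{2}=t_q^{-1}$ — and this is precisely the bound that was supposed to force the contradiction when $\beta>0$.

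The paper's argument avoids both problems and you should compare it to see what's missing. It never claims $t_q^{-1}$ is a maximum of individual MEC gains. To get a usable lower bound on the optimal gain it uses only the crude uniform estimate: any $\pi\in\Sigma_q^{<0}$ has gain $\sum_{C}\Pr{\pi}{q}{M_C}(1/\bar{x}_C)\ge\bar{x}_0^{-1}$, because $1/\bar{x}_C\ge1/\bar{x}_0$ for every negative‑trend $C$. To get a usable upper bound on a bad $\sigma_R$ it uses memorylessness: if $\sigma_R\notin\Sigma_q^{<0}$ then in the induced chain some BSCC inside a non‑negative‑trend MEC is reachable from $q$ within $|Q|$ steps, so $\beta\geq p_{\min}^{|Q|}$ (a quantitative bound, not merely $\beta>0$). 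Since every reward in $\A_R$ is at most $-1$, every run's Cesàro average is at most $-1$, so the non‑penalty contribution is bounded by $-(1-\beta)$ rather than by $(1-\beta)t_q^{-1}$, and one gets $\E^{\sigma_R}_q[\mathrm{gain}]\le -(1-p_{\min}^{|Q|})+(\bar{x}_0^{-1}-1)<\bar{x}_0^{-1}$, a contradiction. Both ingredients — the $\beta\ge p_{\min}^{|Q|}$ bound from memorylessness, and the $\le-1$ per‑run bound replacing your $\le t_q^{-1}$ — are essential and absent from your proof; once $\sigma_R\in\Sigma_q^{<0}$ is established this way, the identity $\mathrm{gain}(q)=t_q^{-1}$ falls out afterwards, rather than being provable up front as you attempt.
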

\begin{proof}
 The existence of optimal memoryless deterministic strategy $\sigma$ for maximization of average reward follows from standard results on MDPs (see \cite{Puterman:book}). It is obvious that for any $q\in Q_{fin}$ and any strategy $\pi \in \Sigma_q^{<0}$ the average reward obtained with strategy $\pi$ in $\A_R$ is equal to $\sum_{C \in \MEC(\A)}  \Pr{\pi}{q}{M_C}\cdot|\bar{x}_C|^{-1}$. It thus suffices to prove that $\sigma_R \in \Sigma_q^{<0}$ for every $q \in Q_{fin}$. Denote $M^{\sigma_R}$ the finite Markov chain induced by $\sigma_R$ on states of $\A$. Assume, for the sake of contradiction, that $\sigma_R \not\in \Sigma_q^{<0}$ for some $q \in Q_{fin}$. Then there must be a BSCC $B$ of $M^{\sigma_R}$ reachable from $q$ that is contained in some MEC $C$ with $\bar{x}_C \geq 0$. In $M_{\sigma_R}$ there must be a path of length at most $|Q|$ from $q$ to $B$, which means that under $\sigma_R$ the probability of runs that have average reward $\frac{x_0^{-1}-1}{p_{\min}^{|Q|}}$ is at least $p_{\min}^{|Q|}$. Since no run in $\A_R$ has average reward greater than $-1$, it follows that average reward achieved from $q$ with $\sigma_R$ is at most $x_0^{-1}-1 -(1-p_{\min}^{|Q|}) < x_{0}^{-1}.$ But this is contradiction with $\sigma_R$ maximizing the average reward, since $\Sigma_q^{<0}\neq \emptyset$ and every strategy from $\Sigma_q^{<0}$ yields average reward at least $x_{0}^{-1}$. \qed
\end{proof}

Strategy $\sigma_R$ can be computed in polynomial time with standard algorithms (see, e.g., \cite{Puterman:book}). We can now construct the desired counterless strategy $\sigma$ as follows: denote $\A^{\sigma_R}$ the finite Markov chain induced by $\sigma_R$ on states of $\A$. Note that every bottom strongly connected component of $\A^{\sigma_R}$ is contained in exactly one MEC $C(B)$ of $\A$. Strategy $\sigma$ behaves in the same way as $\sigma_R$ until some BSCC $B$ of $\A^{\sigma_R}$ is reached. Then $\sigma$ starts to behave as $\sigma_{C(B)}$. It is easy to see that $\sigma \in \Sigma_q^{<0}$ for all states $q \in Q_{fin}$.

Clearly, for every MEC $C$ and every initial configuration $q(i)$ with $q \in Q_{fin}$ we have $\Pr{\sigma_R}{q}{M_C} = \Pr{\sigma}{q(i)}{M_C}$ and thus also $\sum_{C \in \MEC(\A)} \Pr{\sigma}{q(i)}{M_C}\cdot |\bar{x}_C|^{-1} = |t_q|^{-1}$ for every $i$. Note that numbers $t_q$  satisfy all conditions mentioned in the initial part of Proposition~\ref{prop:lower-main}. Moreover, we can prove the following upper bound on expected termination time under $\sigma$:

\begin{proposition}
\label{prop:gen-fast-bound}
There is a number $U \in \exp\left(\size{\A} ^{\calO(1)}\right)$ that is computable in time polynomial in $\size{\A}$ such that for any initial configuration $\conf{q}{i}$ with $q \in Q_{fin}$ and $i \geq |Q|$ we have $$\E^{\sigma} q(i) \leq \frac {i}{|t_q|}+U.$$
\end{proposition}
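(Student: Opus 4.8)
\noindent The plan is to imitate the decomposition used in the proof of Lemma~\ref{lem:sc-upper-bound-technical-lemma}, splitting a run into the ``transient'' phase, during which $\sigma$ still follows $\sigma_R$ and has not committed to a MEC, and the ``recurrent'' phase spent inside the committed MEC under some $\sigma_{C(B)}$. Formally, for a run $\omega$ let $T_1(\omega)$ be the first time $k$ such that either $\counter{k}(\omega)=0$ or $\cstate{k}(\omega)$ lies in some BSCC $B$ of the finite Markov chain $\A^{\sigma_R}$, and set $T_2 := T - T_1$ on the runs that enter such a BSCC strictly before terminating, and $T_2 := 0$ otherwise. Since every BSCC of $\A^{\sigma_R}$ lies in a MEC $C(B)$ with $\bar{x}_{C(B)}<0$ and $\sigma_{C(B)}$ terminates almost surely from every configuration of $\A_{C(B)}$, we have $T = T_1 + T_2$ almost surely, hence $\E^\sigma q(i) = \E^\sigma T_1 + \E^\sigma T_2$. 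I would bound the two summands separately.

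For $\E^\sigma T_1$: up to time $T_1$ the strategy $\sigma$ agrees with the counterless strategy $\sigma_R$, so the distribution of $T_1$ coincides with that of the hitting time of $\bigcup_B B$ in the finite Markov chain $\A^{\sigma_R}$. From every state some BSCC is reached within $|Q|$ steps with probability at least $p_{\min}^{|Q|}$, so $T_1$ is stochastically dominated by $|Q|$ times a geometric variable, and $\E^\sigma T_1 \leq K_1$ for a constant $K_1 \leq |Q|/p_{\min}^{|Q|}$, which is in $\exp(\size{\A}^{\calO(1)})$ and polynomial-time computable (one may instead compute $\E^\sigma T_1$ exactly as an expected hitting time in a finite Markov chain, or reuse the tail bound $\Pr{\sigma}{q(i)}{T_1\geq k}\leq 2p^k$ with $p=\exp(-p_{\min}^{|Q|}/|Q|)$ exactly as in Lemma~\ref{lem:sc-upper-bound-technical-lemma}).

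For $\E^\sigma T_2$: conditioning on the history up to the (almost surely finite) stopping time $T_1$, suppose the run enters a BSCC $B\subseteq C$ at time $T_1$ in control state $r$ with counter value $\ell=\counter{T_1}$. From that moment $\sigma$ behaves as $\sigma_{C}$ inside $\A_{C}$, which never leaves $C$, so the remaining run is distributed as a run of $\sigma_{C}$ started from $\conf{r}{\ell}$; since $\bar{x}_C<0$, Part~(B.2) of Proposition~\ref{prop:charact} applied to $\A_C$ (equivalently Lemma~\ref{lem:sc-upper-bound-technical-lemma}) gives $\E^\sigma(T_2\mid \text{this history}) \leq (\ell+U_C)/|\bar{x}_C|$ with $U_C \in\exp(\size{\A}^{\calO(1)})$ polynomial-time computable. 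Using $\counter{T_1}\leq i+T_1$, grouping the BSCCs according to the MEC containing them, and writing $\bar{x}_0=\max\{\bar{x}_C\mid \bar{x}_C<0\}$ so that $1/|\bar{x}_C|\leq 1/|\bar{x}_0|$ whenever $\Pr{\sigma}{q(i)}{M_C}>0$, taking expectations yields
\[
\E^\sigma T_2 \;\leq\; \sum_{C}\Pr{\sigma}{q(i)}{M_C}\cdot\frac{i+\E^\sigma(T_1\mid M_C)+U_C}{|\bar{x}_C|}\;\leq\;\frac{i}{|t_q|}+\frac{\E^\sigma T_1 + \max_C U_C}{|\bar{x}_0|},
\]
where the identity $\sum_C \Pr{\sigma}{q(i)}{M_C}/|\bar{x}_C| = 1/|t_q|$ established just before this proposition produces the leading term $i/|t_q|$, while $\sum_C\Pr{\sigma}{q(i)}{M_C}\,\E^\sigma(T_1\mid M_C)\leq \E^\sigma T_1$ and $\sum_C\Pr{\sigma}{q(i)}{M_C}\leq 1$ handle the remainder. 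Combining with $\E^\sigma T_1\leq K_1$ gives $\E^\sigma q(i)\leq i/|t_q| + U$ with $U := K_1(1+1/|\bar{x}_0|) + \max_C U_C/|\bar{x}_0|$, which is in $\exp(\size{\A}^{\calO(1)})$ and polynomial-time computable because $1/|\bar{x}_0|$ (by Lemma~\ref{lem:sol}), $K_1$, and each $U_C$ (by Lemma~\ref{lem:sc-upper-bound-technical-lemma}) are.

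The step I expect to be the main obstacle is the last one: a clean strong-Markov-style argument that, conditioned on the history up to the random time $T_1$, the continuation is governed by $\sigma_C$ from $\conf{r}{\ell}$, together with the averaging over which BSCC is hit and with what counter value, carried out so that the weights $\Pr{\sigma}{q(i)}{M_C}\cdot|\bar{x}_C|^{-1}$ stay aligned and collapse exactly to $1/|t_q|$. One also has to check that the runs terminating before committing to a MEC are absorbed harmlessly (via $T_2=0$) and that $\E^\sigma T_1<\infty$, so the decomposition $\E^\sigma q(i)=\E^\sigma T_1+\E^\sigma T_2$ is legitimate.
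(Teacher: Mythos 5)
Your proof is correct, and it uses the same high-level decomposition as the paper: split $T=T_1+T_2$ with $T_1$ the time to hit $D=\bigcup_B B$ (or counter $0$), bound $\E^\sigma T_1$ by the geometric tail argument, and apply Proposition~\ref{prop:charact}~(B.2) inside the committed MEC to control $T_2$. Where you genuinely deviate is in the bookkeeping for $\E^\sigma T_2$: the paper conditions on the exact counter level $l$ when $D$ is first hit, splits into $l\leq i+|Q|$ and $l>i+|Q|$, and sums the resulting series against the tail bound $\Pr{\sigma}{q(i)}{D_l}\leq 2p^{l-i}$; you instead observe the pointwise inequality $\counter{T_1}\leq i+T_1$ and absorb $\E^\sigma T_1$ directly into the additive constant. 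This collapses the paper's two-case analysis and the extra series summation into a single line, at the mild cost of having to phrase a strong-Markov-at-$T_1$ argument (and to note that runs terminating before reaching $D$ contribute $T_2=0$ and are harmless), which you correctly flag as the point needing care. The resulting constant $U=K_1(1+1/|\bar{x}_0|)+\max_C U_C/|\bar{x}_0|$ differs in form from the paper's $\frac{2|Q|+U'}{|\bar{x}_0|}+\frac{4}{|\bar{x}_0|(1-p)^2}$, but both are in $\exp(\size{\A}^{\calO(1)})$ and polynomial-time computable, and the identity $\sum_C\Pr{\sigma}{q(i)}{M_C}/|\bar{x}_C|=1/|t_q|$ is invoked in the same role in both proofs to produce the leading term $i/|t_q|$. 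Your version is somewhat cleaner; the paper's version makes the use of the exponential tail bounds on $\counter{T_1}-i$ explicit, which it then reuses in the proof of the matching lower bound, so neither is strictly preferable.
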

\begin{proof}
 The proof closely follows the proof of Lemma \ref{lem:sc-upper-bound-technical-lemma}. However, there is a new obstacle in a presence of components with different trends.

Since the strategy $\sigma_{\fast}$ is memoryless, its application on $\A$ yields a finite one-counter Markov chain $\A^{\sigma}$.
Denote $D$ the union of its bottom strongly connected components.
We can now write
\begin{equation}
 \E^{\sigma_{\fast}}q(i) = \E^{\sigma_{\fast}}(T_1 + T_2),
\end{equation}
where again $T_{1}(\omega)=k$ iff $k$ is the first point in time when $\omega$ hits either $D$ or reaches a configuration with a zero counter and $T_2$ is a time to hit a configuration with a zero counter after hitting $D$ ($T_2$ returns zero if the run never terminates or terminates before hitting $D$).

We will bound expectations of $T_1$ and $T_2$ separately.

The bound on $\E^{\sigma} T_1$ can be computed in exactly the same way as in Lemma \ref{lem:sc-upper-bound-technical-lemma}. Thus we can conclude that
\begin{equation}
 \label{nub1}
\E^{\sigma} T_1 \leq |Q| + \frac{2}{1-p},
\end{equation}
where $p=\exp(-{p_{\min}}^{|Q|} / |Q|)$.

Now we bound the expectations of $T_2$.
Recall that for any $l > 0$ %
we denote $D_{l}$ the set of all runs that terminate after reaching $D$ and have a value counter value exactly $l$ when they hit $D$ for the first time (and we denote $D_0$ the set of all runs that terminate before hitting $D$ or hit $D$ with counter value exactly 0).
Also recall, that $M_C$ denotes the set of all runs $\omega$ with $\inf(\omega) \subseteq C$ and that under arbitrary strategy $\pi$ we have $\sum_{C \in \MEC(\A)} \Pr{\pi}{}{M_C} = 1$. Finally, denote $D_{l}^C = M_C \cap D_{l}$.

As discussed in section \ref{sec:nscMDP}, we can apply Proposition \ref{prop:charact} to every MEC $C$ of $\A$ separately. Especially, by construction of $\sigma$ the following holds: for every MEC $C$ that contains some BSCC of $\A^{\sigma}$, the Proposition \ref{prop:charact} gives us number $U_C\in \exp(\size{\A}^{\calO(1)})$ such that $\E^{\sigma}p(j) \leq (j+U_C)/|\bar{x}_C| $, for every $p \in C$ and $j\geq 0$. Set $$U' = \max\{U_C \mid C \in\MEC(\A),~C \text{ contains some BSCC of } \A^{\sigma}\}.$$ Clearly we still have $U' \in \exp(\size{\A}^{\calO(1)})$.

We have
\begin{equation}
\label{nub2}
 \E^{\sigma} T_2 = \sum_{C \in \MEC{(\A)}}\sum_{l=0}^{\infty} \E^{\sigma}(T_2\mid D_{l}^C)\cdot \Pr{\sigma}{q(i)}{D_{l}^C}.
\end{equation}

As in proof of Lemma \ref{lem:sc-upper-bound-technical-lemma}, we can easily show that for any MEC $C$ and any $l \leq i + |Q|$ we have
\begin{equation}
 \label{nub6}
 \E^{\sigma}(T_2\mid D_{l}^C) \leq  \frac{i + |Q|+U'}{|\bar{x}_C|}.
\end{equation}

For every $C$ and every $ l \geq i + |Q|$ we have 
\begin{equation}
\label{nub7}
\E^{\sigma}(T_2\mid D_{l}^C) \leq \frac{i+U'}{|\bar{x}_C|} +  \frac{(l-i)}{|\bar{x}_C|}
\end{equation}
and $\Pr{\sigma}{q(j)}{D_l} \leq 2p^{l-i}$ (the latter holds by Lemma 23 of \cite{BKK:pOC-time-LTL-martingale}).

Recall that we have denoted $\bar{x}_0 = \max\{\bar{x}_C \mid C \in \MEC(\A),\bar{x}_C < 0\}$. Putting \eqref{nub6} and \eqref{nub7} together we obtain for any fixed $C \in\MEC(\A)$
\begin{align*}
\sum_{l=0}^{\infty}\E^{\sigma}(T_2\mid D_{l}^C)\cdot\Pr{\sigma}{q(i)}{D_{l}^C} &\leq \frac{i + |Q|+U'}{|\bar{x}_C|}\cdot\Pr{\sigma}{q(i)}{M_C} + \sum_{l=i+|Q|}^{\infty}\frac{(l-i)\cdot\Pr{\sigma}{q(i)}{D_l^C}}{|\bar{x}_C|} \nonumber \\ &\leq \frac{i + |Q|+U'}{|\bar{x}_C|}\cdot\Pr{\sigma}{q(i)}{M_C} + \sum_{l=i+|Q|}^{\infty}\frac{(l-i)\cdot\Pr{\sigma}{q(i)}{D_l^C}}{|\bar{x}_0|}.
\end{align*}

Moreover, from the definition of strategy $\sigma$ we know that \mbox{$\sum_{\C \in\MEC{\A}} \Pr{\sigma}{q(i)}{M_C}\cdot |\bar{x}_C|^{-1} = |t_q|^{-1}$.}
We can use this and continue from \eqref{nub2} as follows:
\begin{align}
 \E^{\sigma} T_2 &\leq \sum_{C \in \MEC{(\A)}}\left(\frac{i + |Q|+U'}{|\bar{x}_C|}\cdot\Pr{\sigma}{q(i)}{M_C} + \sum_{l=i+|Q|}^{\infty}\frac{(l-i)\cdot\Pr{\sigma}{q(i)}{D_l^C}}{|\bar{x}_0|} \right) \nonumber \\
 &= \sum_{C \in \MEC{(\A)}}\left(\frac{i + |Q|+U'}{|\bar{x}_C|}\cdot\Pr{\sigma}{q(i)}{M_C}\right) + \sum_{l=0}^{\infty}\frac{(l-i)\cdot\sum_{C \in\MEC(\A)}\Pr{\sigma}{q(i)}{D_l^C}}{|\bar{x}_0|}\nonumber \\
 &= \frac{i +|Q|+U'}{|t_q|} + \sum_{l=0}^{\infty}\frac{(l-i)\cdot\overbrace{\Pr{\sigma}{q(i)}{D_l}}^{\leq 2p^{l-i}}}{|\bar{x}_0|} \leq \frac{i +|Q|+U'}{|t_q|} + \frac{2}{|\bar{x}_0|\cdot(1-p)^2}. \label{nub8}
\end{align}

Combining \eqref{nub1} and \eqref{nub2} we can conclude that
\[
 \E^{\sigma}q(i) \leq \frac{i }{|t_q|} + \frac{ 2|Q| + U'}{|t_q|} + \frac{4}{|\bar{x}_0|\cdot(1-p)^2} \leq \frac{i }{|t_q|} + \frac{ 2|Q| + U'}{|\bar{x}_0|} + \frac{4}{|\bar{x}_0|\cdot(1-p)^2}.
\]

The inequality in Proposition \ref{prop:gen-fast-bound} thus holds for $U=\frac{ 2|Q| + U'}{|\bar{x}_0|} + \frac{4}{|\bar{x}_0|\cdot(1-p)^2}$. The desired asymptotic bound is again easy to check.
\qed
\end{proof}

It remains to prove part (B) of Proposition \ref{prop:lower-main} (with numbers $t_q$ being the minimal trends achievable from $q$). %

The following Claim shows, that in order to prove Proposition \ref{prop:lower-main} (B) it suffices to prove its validity for strategies in $\Sigma_{q}^{<0}$, because termination value under some arbitrarily fixed strategy can be approximated up to some exponential error by termination value under suitable strategy from $\Sigma_{q}^{<0}$.

 \begin{claim}
 There is a number $K_1 \in \exp\left(\size{\A}^{\calO(1)} \right)$ that is computable in time polynomial in $\size{\A}$, with the following property: for every strategy $\pi$ and any initial configuration $\conf{q}{i}$ with $i \geq |Q|$ there is a strategy $\pi' \in \Sigma^{<0}_q$ such that $\E^{\pi}q(i) \geq \E^{\pi'}q(i) - K_1$.
\end{claim}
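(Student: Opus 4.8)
The plan is to let $\pi'$ imitate $\pi$ for as long as the counter stays above $|Q|$, and to make it switch to a fixed ``safe'' strategy $\xi$ the instant the counter drops to the value $|Q|$. The reason for the threshold $|Q|$ is that switching away from $\pi$ at counter height $|Q|$ should cost only a constant (a bounded expected time to steer the run into a negative-trend MEC and terminate there), whereas switching at a large counter height could cost an amount growing linearly with that height. Throughout we may assume $q\in Q_{fin}$, so that $\Sigma^{<0}_q\neq\emptyset$; recall that $Q_{fin}=Q_{<0}$ by Proposition~\ref{prop:char-fin}.

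First I would fix the safe strategy $\xi$. Since every state of $Q_{<0}=Q_{fin}$ reaches $H$ with probability one, pick a memoryless strategy $\xi_0$ witnessing this, and then let $\xi$ be the memoryless strategy that plays $\xi_0$ at states of $Q_{fin}\setminus H$ and, at any state lying in a good MEC $C$ (one with $\bar{x}_C<0$), plays the counterless strategy $\sigma_C$ furnished by Proposition~\ref{prop:charact}~(B.2) for $\A_C$. From any $r\in Q_{fin}$ this $\xi$ reaches $H$ with probability one and thereafter keeps the run inside a single good MEC, hence $\xi\in\Sigma^{<0}_r$ for every $r\in Q_{fin}$. Moreover, in the Markov chain $\M^{\xi}$ the set $H$ is reached within $|Q|$ steps with probability at least $p_{\min}^{|Q|}$ from any state of $Q_{fin}$, so both the expected time to reach $H$ and the counter value upon reaching it are at most $|Q|+|Q|/p_{\min}^{|Q|}$; combining this with the bound $(j+U_C)/|\bar{x}_C|$ of Proposition~\ref{prop:charact}~(B.2) for entering a good MEC at height $j$, I obtain a number $K_1\in\exp(\size{\A}^{\calO(1)})$, computable in time polynomial in $\size{\A}$, such that $\E^{\xi}r(|Q|)\le K_1$ for all $r\in Q_{fin}$.

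Next I would define $\pi'$ and verify it. If $\E^{\pi}q(i)=\infty$, simply take $\pi'=\xi\in\Sigma^{<0}_q$, for which $\E^{\pi}q(i)=\infty\ge\E^{\pi'}q(i)-K_1$ holds trivially. Otherwise $\E^{\pi}q(i)<\infty$; let $\theta$ be the first time at which the counter equals $|Q|$. Since $i\ge|Q|$ and counter changes lie in $\{-1,0,+1\}$, every terminating run passes through counter value $|Q|$, and as $\E^{\pi}q(i)<\infty$ forces $T<\infty$ almost surely we get $\theta<T<\infty$ almost surely. Let $\pi'$ play like $\pi$ up to time $\theta$ and like $\xi$ from then on. The crucial step is to show $\cstate{\theta}\in Q_{fin}$ almost surely: if $\cstate{\theta}=r\notin Q_{fin}$ with positive probability, then conditioned on that event the $\pi$-continuation from $r(|Q|)$ would have expected remaining termination time at least $\val(r(|Q|))$, which is $\infty$ because $r\notin Q_{fin}$ (this is precisely the characterisation of $Q_{fin}$ used in Part~1 of Theorem~\ref{thm:upper-main}, via Lemma~\ref{lem:infty-low-boundary}), contradicting $\E^{\pi}q(i)<\infty$. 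Hence $\xi$ takes over from a control state in $Q_{fin}$, so almost surely $\inf(\omega)$ lies in a good MEC, i.e. $\pi'\in\Sigma^{<0}_q$. For the cost bound, $\pi'$ and $\pi$ induce the same law on histories up to $\theta$, so $\E^{\pi'}\theta=\E^{\pi}\theta\le\E^{\pi}T=\E^{\pi}q(i)$, and under $\pi'$ the termination time equals $\theta$ plus the termination time of $\xi$ launched from the configuration reached at time $\theta$; taking expectations and using $\E^{\xi}r(|Q|)\le K_1$ gives $\E^{\pi'}q(i)\le\E^{\pi}q(i)+K_1$, which is the claim.

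The only genuinely non-routine point is the observation that, unless $\E^{\pi}q(i)$ is infinite, $\pi$ cannot reach counter height $|Q|$ in a control state outside $Q_{fin}$; everything else is bookkeeping with the optional-stopping estimates and the MEC-wise bounds already in hand. A secondary point that needs a little care is arranging $\xi$ so that it simultaneously belongs to $\Sigma^{<0}_r$ for every $r\in Q_{fin}$ and has a uniformly exponential, polynomial-time computable expected termination time from counter height $|Q|$ — this is exactly where Propositions~\ref{prop:char-fin} and~\ref{prop:charact}~(B.2) are used.
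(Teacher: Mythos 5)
Your proof is correct and follows essentially the same route as the paper's: let $\pi'$ mimic $\pi$ until the counter first hits $|Q|$, observe that the state reached there is almost surely in $Q_{fin}$ (else $\E^{\pi}q(i)=\infty$ via Lemma~\ref{lem:infty-low-boundary}), and then switch to a fixed safe strategy from $\Sigma^{<0}$, paying at most a constant additive penalty. The one difference is in how the constant $K_1$ is obtained. The paper simply reuses the strategy $\sigma$ and the bound of Proposition~\ref{prop:gen-fast-bound}, which was just established, taking $K_1=(|Q|+U)/|\bar{x}_0|$. You instead build a fresh strategy $\xi$ by stitching a memoryless almost-sure-reachability-of-$H$ strategy $\xi_0$ together with the per-MEC strategies $\sigma_C$ of Proposition~\ref{prop:charact}~(B.2), and re-derive an exponential, polynomial-time computable bound on $\E^{\xi}r(|Q|)$ by a geometric-tail estimate. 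This makes your argument slightly more self-contained (it does not invoke Proposition~\ref{prop:gen-fast-bound}) at the cost of redoing a small computation that is already available; both choices yield a valid $K_1\in\exp(\size{\A}^{\calO(1)})$.
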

\begin{proof}
  Set $K_1 = (|Q|+U)/|\bar{x}_0|$, where $U$ is the constant from Proposition \ref{prop:gen-fast-bound}. Fix arbitrary strategy $\pi$. If $\E^{\pi}q(i) = \infty$, then the inequality clearly holds for any strategy $\pi' \in \Sigma_{q}^{<0}$. Otherwise, since $i\geq |Q|$, with $\pi$ we must almost surely reach a configuration of the form $p(|Q|)$. For every such reachable configuration we must have $p \in Q_{fin}$, since otherwise we would have $\E^{\pi}\conf{q}{i}=\infty$ by Lemma \ref{lem:infty-low-boundary}. Define new strategy $\pi'$ as follows: $\pi'$ behaves in the same way as $\pi$ until the configuration with counter height $|Q|$ is reached: then it starts to behave as strategy $\sigma$ from Proposition \ref{prop:lower-main}. Then clearly $\pi' \in \Sigma_q^{<0}$ and by Proposition \ref{prop:gen-fast-bound} the switch to strategy $\sigma$ in height $|Q|$ cannot delay the termination for more than $(|Q|+U)/|\bar{x}_0|$ steps. \qed
 \end{proof}

Under strategy $\pi \in \Sigma_{q}^{<0}$ we never reach state from $Q\setminus Q_{fin}$, if we start from $q$. We can thus safely remove all states from $Q\setminus Q_{fin}$, together with adjacent transitions, without influencing the behavior under strategies from $\Sigma_{q}^{<0}$. In the following we can without loss of generality assume that $Q=Q_{fin}$ and that all strategies are in $\Sigma_{q}^{<0}$, for every state $q$.

We will now finish the proof in two steps. First, we observe that there is only a small probability that the run revisits (i.e. leaves and then visits again) some MEC many times. Actually, this probability decays exponentially in number of revisits. We call a transition $r(j)\btran{}r'(j')$ in $\M_{\A}^{\infty}$ a \emph{switch} if there exists some MEC $C$ such that $|\{r,r'\}\cap C|=1$. For any run $\omega$ we denote $\sharp(\omega)$ the number of switches on $\omega$ and we set $\switch(\omega)=\sharp(\omega)+1$. That is, random variable $\switch$ counts the number of maximal time intervals in which $\omega$ either stays within a single MEC or outside any MEC.

\begin{lemma}
\label{lem:switches-bound}
 For every strategy $\pi$, every initial configuration $\conf{q}{i}$ and every $k \in \Nset$
\[
 \Pr{\pi}{\conf{q}{i}}{\switch = k} \leq 8\cdot |Q| \cdot c^{k},
\]
where $c=\exp\left(\frac{-p_{\min}^{|Q|}}{2|Q|} \right)$.
\end{lemma}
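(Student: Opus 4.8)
The plan is to reduce the bound on $\switch$ to a statement about the probability of staying inside (or outside) a given MEC for a long time before making the next switch, and then to iterate this bound $k$ times. First I would observe that each of the $\switch(\omega) = k$ maximal intervals is a period during which the run either stays inside a single MEC $C$ or stays entirely outside all MECs; the switch at the end of each interval (except possibly the last) is a transition $r(j)\btran{}r'(j')$ leaving that region. The key single-step estimate is the following: from any configuration, a switch (or termination) occurs within the next $|Q|$ steps with probability at least $p_{\min}^{|Q|}$. This is because from any control state one can reach, within $|Q|$ steps along a path of nonzero probability, either a state lying in a different MEC-region (every state not in a MEC has an escape to some MEC or between MECs, and every state in a MEC $C$ has, unless $C=Q$, an outgoing edge leaving $C$ — and if there is no such escape at all the run is confined to a region where termination happens, which we have already accounted for via $Q=Q_{fin}$ and $\Sigma_q^{<0}$). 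Hence the probability of \emph{not} switching for $m$ consecutive steps is at most $(1-p_{\min}^{|Q|})^{\lfloor m/|Q|\rfloor} \le \exp\bigl(-p_{\min}^{|Q|}\lfloor m/|Q|\rfloor\bigr)$.

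Next I would set up the counting. Let $L_1,\dots,L_k$ denote the lengths of the successive intervals (so $\sum_j L_j = T$ if the run terminates, and the event $\{\switch = k\}$ requires exactly $k-1$ switches). Conditioning successively on the history up to the start of each interval and using the single-step estimate above, the conditional probability that the $j$-th interval has length exactly $L_j = \ell_j$ — i.e.\ that no switch occurs for $\ell_j - 1$ steps and then one does — is at most $2\,c^{\,\ell_j}$ with $c = \exp\bigl(\frac{-p_{\min}^{|Q|}}{2|Q|}\bigr)$, absorbing the $\lfloor\cdot/|Q|\rfloor$ versus $/|Q|$ discrepancy and the final switch step into the constant $2$ and the halved exponent (this is the same style of estimate as Lemma~23 of~\cite{BKK:pOC-time-LTL-martingale}, which I would cite). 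Multiplying over the $k$ intervals and summing over all tuples $(\ell_1,\dots,\ell_k)$ with $\ell_j \ge 1$ gives
\[
\Pr{\pi}{\conf{q}{i}}{\switch = k}
\ \le\ \Bigl(\sum_{\ell \ge 1} 2\,c^{\,\ell}\Bigr)^{\!k-1}\cdot(\text{bound for the last, possibly infinite, interval}).
\]
Since $\sum_{\ell\ge 1} c^{\ell} = c/(1-c)$, and since for the relevant range of $p_{\min}$ and $|Q|$ one has $c$ small enough that $2c/(1-c) \le \text{const}$, the product collapses to something of the form $(\text{const})^{k}$; tracking the constants and the factor $|Q|$ coming from the choice of which MEC the run currently sits in (there are at most $|Q|$ MECs, and the first interval's starting region is determined by $q$ but a union bound over possible target regions at each switch contributes only a polynomial-in-$|Q|$ prefactor that can be folded into the base of the exponential) yields the claimed bound $8\cdot|Q|\cdot c^{k}$.

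The main obstacle I anticipate is making the ``a switch happens within $|Q|$ steps with probability $\ge p_{\min}^{|Q|}$'' claim fully rigorous against an adversarial strategy $\pi$: the controller might try to avoid switching indefinitely, and one has to argue that since $Q = Q_{fin}$ and $\pi \in \Sigma_q^{<0}$, the controller cannot stay forever in a non-switching region without either terminating or entering a MEC $C$ with $\bar{x}_C \ge 0$ (the latter being forbidden for strategies in $\Sigma_q^{<0}$). The cleanest way around this is to note that within any region (a single MEC, or the complement of all MECs) from which no switch is available, every run terminates — in the MEC case because not escaping means the MEC is all of the remaining state space and $\bar{x}_C<0$ forces a.s.\ termination, and in the complement case because a bottom component outside all MECs would itself be an end component, contradicting maximality — so the ``no switch for $\ell-1$ steps'' event is contained in ``$T \ge \ell - 1$ while confined'', which already enjoys the exponential decay. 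I would then phrase the whole argument in terms of this confinement event to sidestep case analysis on $\pi$.
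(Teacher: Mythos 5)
Your argument breaks down at the step where you multiply the per-interval bounds and sum over the interval lengths. You assert a conditional bound of the form $2\,c^{\ell_j}$ on the probability that the $j$-th interval has length $\ell_j$, and then write the probability that $\switch = k$ as at most $\bigl(\sum_{\ell \geq 1} 2 c^{\ell}\bigr)^{k-1}$ times a last factor. But $\sum_{\ell \geq 1} 2c^{\ell} = 2c/(1-c)$, and since $c = \exp\bigl(-p_{\min}^{|Q|}/(2|Q|)\bigr)$ is \emph{extremely close to $1$} (indeed $1-c$ is of order $p_{\min}^{|Q|}/(2|Q|)$, exponentially small in $|Q|$), this geometric sum is of order $|Q|/p_{\min}^{|Q|}$, which is exponentially \emph{large}. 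Raising it to the power $k-1$ gives a bound that grows, not decays, in $k$. Your claim that ``for the relevant range of $p_{\min}$ and $|Q|$ one has $c$ small enough that $2c/(1-c) \leq \text{const}$'' is false; $c$ is always close to $1$ here. More fundamentally, $\sum_{\ell} \Prb{\pi}{\cdot}(L_j = \ell \mid \text{history})$ is simply the conditional probability that a $j$-th switch happens at all, which can be as large as $1$; no bound on interval lengths, however tight, can make that sum small, so the product over $j$ cannot produce decay in $k$ this way.

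Your single-step estimate is also incorrect. You claim that every state in a MEC $C \neq Q$ has an outgoing edge leaving $C$, hence a switch or termination occurs within $|Q|$ steps with probability at least $p_{\min}^{|Q|}$. MECs can be sinks: a MEC $C$ may have every non-deterministic state's successors entirely inside $C$ (stochastic states' successors are inside $C$ by definition of end component). From a configuration in such a sink MEC with a large counter value $i$, there is no switch at all and no termination for roughly $i/|\bar{x}_C|$ steps on average, so the probability of a switch-or-termination within $|Q|$ steps is $0$. Your fallback observation — that confinement forces a.s.\ termination, so the ``no switch for $\ell-1$ steps'' event is contained in ``$T \geq \ell-1$'' which enjoys exponential decay — does not save the argument either: that decay is only uniform once the counter has come down near zero, and the rate degrades as the initial counter $i$ grows; it cannot give a bound independent of $i$, which the lemma requires.

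The estimate that actually makes the lemma work, and which the paper proves, is of a different nature: it is not about interval lengths but about the probability of \emph{re-entering} a MEC after leaving it. Specifically, if $q \notin C$ is reachable from $C$ but $C$ is a MEC, then under any strategy the probability of reaching $C$ back from $q$ is at most $1 - p_{\min}^{|Q|}$ (take a memoryless strategy maximizing this reachability probability; in the induced Markov chain there must be some BSCC $B$ from which $C$ is not a.s.\ reachable, or else $C$ could be enlarged, contradicting maximality; and $B$ is hit within $|Q|$ steps with probability at least $p_{\min}^{|Q|}$). This bound is independent of the counter value. From there, a run with $\switch(\omega) = k$ has $k-1$ switches, so by pigeonhole some particular MEC $C$ is left and re-entered at least about $k/(2|Q|)$ times; a union bound over the at most $|Q|$ MECs and the inequality $1-x \leq \exp(-x)$ then give the stated bound. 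This route is insensitive to the length of each interval — including the possibly infinite last one — precisely because it only tracks a multiplicatively decaying probability of returning, rather than trying to show intervals are short.
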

\begin{proof}
If $p_{\min}=1$, i.e. there are no (truly) stochastic states, then MECs are actually strongly connected components, $W(\omega)\leq 2\cdot |Q|$ for every run $\omega$, and the Lemma trivially holds. Otherwise, we have $p_{\min}\leq 1/2$.
We can use the following:

\begin{claim} %
Let $\A$ be arbitrary OC-MDP and let $C$ be a MEC of $\A$. Further, let $q \not\in C$ be any state that can be reached from $C$ with probability 1 (under some strategy). Then, under arbitrary strategy, the probability of reaching $C$ from any initial configuration of the form $q(i)$ is at most $1-p_{\min}^{|Q|}$.
\end{claim}
Let $\rho$ be the strategy that maximizes the probability of reaching $C$ from $q$ in $\A_{\cal{M}}$. From standard results on MDPs we may assume that $\rho$ is memoryless. Denote $M^{\rho}$ the finite Markov chain induced by $\rho$ on states of $\A_{\cal{M}}$. There must be at least one BSCC $B$ of $M^{\rho}$ reachable from $q$ such that in $\A_{M}$ the probability of reaching $C$ from any state of $B$ is less than 1 under any strategy (otherwise, there would be a strategy that almost surely reaches $C$ from $q$ -- a contradiction with $C$ being a MEC). In particular, sets $B$ and $C$ are disjoint. Thus, the probability of \emph{not reaching} $C$ from $q$ under $\rho$ is at least as large as probability of hitting $B$ in $M^{\rho}$. Since $\rho$ is memoryless, there is a run in $M^{\rho}$ that reaches $B$ in at most $|Q|$ steps. Thus, the probability of hitting $B$ is at least $p_{\min}^{|Q|}$. %
\vskip 0.3 cm
Let us now finish proof of the Lemma.
For any MEC $C$ and any $l \in \Nset$ denote $R_C^l$ the set of all runs that leave a MEC $C$ and then return to it for at least $l$ times. The claim shows that under any strategy $\pi$ we have $\Pr{\pi}{\conf{q}{i}}{R_C^l} \leq (1-p_{\min}^{|Q|})^l$. Now if $\switch(\omega)=k$ then $\omega$ must have revisited some MEC $C$ at least $\lfloor\frac{k}{2|Q|}-2\rfloor$ times, i.e. $\omega \in R_C^{\lfloor\frac{k}{2|Q|} -2\rfloor}$ for some MEC $C$. Thus $\Pr{\pi}{\conf{q}{i}}{\switch = k} \leq |Q|\cdot(1-p_{\min}^{|Q|})^{\lfloor\frac{k}{2|Q|}-2 \rfloor}$. Denote $\alpha = (1-p_{\min}^{|Q|})$.

We have $\lfloor\frac{k}{2|Q|}-2 \rfloor \leq \frac{k}{2|Q|}-3$ and thus $\Pr{\pi}{q(i)}{\switch = k} \leq |Q|\cdot \alpha^{\frac{k}{2|Q|}-3}=
|Q|\cdot \alpha^{\frac{k}{2|Q|}}/\alpha^3 $. Since $p_{\min}\leq 1/2$, we have $1/\alpha^3 \leq 8$. Moreover, from calculus we know that for any real number $x$ we have $1-x \leq \exp(-x)$. This gives us $\Pr{\pi}{\conf{q}{i}}{\switch = k} \leq 8\cdot|Q|\cdot \exp\left(-\frac{p_{\min}^{|Q|}}{2|Q|} \right)^k$, and the proof is finished. \qed
\end{proof}

The crucial idea behind the proof of Proposition \ref{prop:lower-main} (B) is now the following: whenever the system stays either in some MEC or outside any MEC for some period of time, we may approximate its behavior (up to some constant error) using the results of section \ref{sec:scMDP} and standard probabilistic computations, respectively. We show, that it is possible to use these approximations to approximate the behavior of the whole system. The error of this new approximation now depends on the average number of time intervals when run stays in some or outside any MEC. The following crucial proposition formalizes this idea.

\begin{proposition}
\label{prop:upper-general-slowbound}
There is a number $K \in \exp\left(\size{\A}^{\calO(1)} \right)$ that is computable in time polynomial in $\size{\A}$, such that
for every memoryless deterministic strategy $\pi$ and every initial configuration $\conf{q}{i}$ we have
\[
\E^{\pi}\conf{q}{i} \geq \frac{i}{|t_q|} - K\cdot \E^{\pi}W.
\]
\end{proposition}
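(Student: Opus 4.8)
The plan is to use the switch count $W$ to cut every run into $W$ consecutive \emph{segments}, a segment being a maximal block of steps during which the run stays inside one fixed MEC or stays outside every MEC, and to lower-bound the expected length of each segment by the expected counter decrease it performs, divided by the trend of the relevant MEC. Write $T=\sum_{j\geq 1}\tau_j$, where $\tau_j$ is the length of the $j$-th segment (with $\tau_j=0$ once the run has terminated), and let $\delta_j$ be the net decrease of the counter during segment $j$, so that $\sum_j\delta_j=i$ on every terminating run. We may assume $\E^{\pi}\conf{q}{i}<\infty$, since otherwise the inequality is trivial (and this also makes each $\tau_j$ finite in conditional expectation, as $T\geq\tau_j$); by the reduction made just before Lemma~\ref{lem:switches-bound} we may also assume $Q=Q_{fin}$ and that $\pi$ lies in $\Sigma_q^{<0}$.

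First I would prove a per-segment inequality conditioned on the history $\mathcal{F}_j$ up to the first step of segment $j$. If that segment lives inside a MEC $C$ with $\bar{x}_C<0$, then its end is a stopping time (the run leaves $C$ or hits counter value $0$), and applying the optional stopping theorem to the bounded-difference submartingale supplied by Proposition~\ref{prop:charact} for $\A_C$ — the one of the form $\counter{n}+\bar{\poten}_{\cstate{n}}^{C}-n\bar{x}_C$ — yields
\[
  \E^{\pi}(\tau_j\mid\mathcal{F}_j)\ \geq\ \frac{\E^{\pi}(\delta_j\mid\mathcal{F}_j)-V_C}{|\bar{x}_C|}.
\]
If instead the segment stays outside every MEC, or inside a MEC $C$ with $\bar{x}_C\geq 0$, then the finite Markov chain induced by the memoryless $\pi$ on that region has no bottom strongly connected component there (otherwise $\pi\notin\Sigma_q^{<0}$), so the region is left within $|Q|$ steps with probability at least $p_{\min}^{|Q|}$; hence $\E^{\pi}(\tau_j\mid\mathcal{F}_j)$, and therefore $\E^{\pi}(\delta_j\mid\mathcal{F}_j)\leq\E^{\pi}(\tau_j\mid\mathcal{F}_j)$, is bounded by a constant $\kappa\in\exp(\size{\A}^{\calO(1)})$. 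Summing the per-segment bounds over the random index $j$ by the tower property together with $\sum_{j}\Pr{\pi}{\conf{q}{i}}{W\geq j}=\E^{\pi}W$, and noting that the segments omitted from the sum below carry total expected drop at most $\kappa\,\E^{\pi}W$ while the $V_{C_j}$ also sum to $O(\E^{\pi}W)$, I obtain
\[
  \E^{\pi}\conf{q}{i}\ \geq\ \sum_{j:\ \bar{x}_{C_j}<0}\frac{\E^{\pi}[\delta_j]}{|\bar{x}_{C_j}|}\ -\ K_2\,\E^{\pi}W
\]
for a suitable $K_2\in\exp(\size{\A}^{\calO(1)})$, the sum ranging over segments spent inside a MEC of negative trend.

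The hard part is the final step: converting $\sum_{j}\E^{\pi}[\delta_j]/|\bar{x}_{C_j}|$ into $i/|t_q|$ up to an $O(\E^{\pi}W)$ error. The obstruction is that the trends $|\bar{x}_{C_j}|^{-1}$ met along a run need not individually dominate the optimal trend $|t_q|^{-1}$, which by definition is only an infimum over $\Sigma_q^{<0}$ of $\sum_C\Pr{\pi'}{\conf{q}{i}}{M_C}\cdot|\bar{x}_C|^{-1}$. I would resolve this by conditioning on the MEC in which the run is eventually absorbed and invoking the gain/bias data behind the numbers $t_q$ coming from the average-reward MDP $\A_R$ used to compute them: $\A_R$ supplies the gain $g^{\star}(q)=t_q^{-1}=-|t_q|^{-1}$, which is a supermartingale under every strategy (a martingale step at stochastic states, and $g^{\star}(q)=\max\{g^{\star}(r)\mid q\btran{}r\}$ at non-deterministic ones), together with a bias function $g\colon Q\to\Rset$ with $|g|\in\exp(\size{\A}^{\calO(1)})$. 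The aim is to show that $\Psi_n:=n-\counter{n}\,g^{\star}(\cstate{n})+g(\cstate{n})$ is a submartingale as long as the run stays inside one segment and drops by at most an $\exp(\size{\A}^{\calO(1)})$ amount at each of the at most $W$ switches — the $\counter{n}\bigl(g^{\star}(\cstate{n})-g^{\star}(\cstate{n+1})\bigr)$ term being non-negative in conditional expectation by the supermartingale property, and the remaining term (involving the per-step counter change) and the bias fluctuation being absorbed, inside a MEC segment, by the within-MEC potentials $\bar{\poten}^{C}$ of Proposition~\ref{prop:charact}, and at a switch by $|g|$ plus a fixed per-switch budget — after which optional stopping of $\Psi_n$ between step $0$ and termination gives $\E^{\pi}[T]\geq i|t_q|^{-1}-K\,\E^{\pi}W$. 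I expect the delicate points to be the honest treatment of MECs on which $g^{\star}$ is not constant (states with strictly better exits) when matching the per-step trend to the current MEC, and the uniform $\exp(\size{\A}^{\calO(1)})$ bound on the per-switch jump obtained from the already-available bounds on $V_C$, $1/|\bar{x}_C|$ and $|g|$; one may alternatively bypass a single global process and argue MEC by MEC along the topological order of $\MEC(\A)$, with the $O(\E^{\pi}W)$ error accumulating one $\exp(\size{\A}^{\calO(1)})$ term per visited MEC.
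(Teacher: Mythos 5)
Your segment decomposition and the two per-segment bounds (optional stopping inside a MEC with negative trend, and the exponential escape bound for non-MEC or non-negative-trend MEC regions) are in the same spirit as the paper's Lemma~\ref{lem:general-mec-azuma} and Corollary~\ref{cor:mdp-transient-time}, so the first half of your sketch is sound. From that point on, however, you take a genuinely different route from the paper: the paper never tries to assemble one global potential process. It first proves the statement for \emph{MEC-acyclic} OC-MDPs by induction on the height of MECs (Lemma~\ref{lem:mec-tree-lemma}), the key ingredient being the monotonicity $1/|t_q|\leq 1/|t_{q'}|$ for $C$-gates $q'$ and $1/|t_q|\leq 1/|\bar{x}_C|$, and then lifts the result to arbitrary OC-MDPs by unrolling $\A$ into the MEC-acyclic family $\A(k)$ and taking the limit $k\to\infty$ (using Lemma~\ref{lem:switches-bound} to control the tail $\Pr{\pi}{}{W\geq k}$). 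Your global-process approach, if it works, is more direct and bypasses both the induction and the unrolling/limit machinery; that is a real conceptual gain.

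The gap, though, is in the optional stopping step. Your process $\Psi_n=n-\counter{n}g^{\star}(\cstate{n})+g(\cstate{n})$ does not have bounded differences: at a switch the term $\counter{n}\bigl(g^{\star}(\cstate{n})-g^{\star}(\cstate{n+1})\bigr)$ is of order $\counter{n}$, and $\counter{n}$ is unbounded along the run. You observe correctly that this term is nonnegative in conditional expectation (so $\Psi$ can still be a submartingale up to the per-switch budget), but the standard forms of the optional stopping theorem that you are implicitly invoking require either pathwise bounded increments together with $\E T<\infty$, or uniform integrability of $\{\Psi_{T\wedge n}\}_n$ — neither is free here, and your sketch does not address it. The paper sidesteps exactly this by only ever applying optional stopping to processes like $\mar{n}=\counter{n}+\bar{\poten}_{\cstate{n}}-n\bar{x}$ inside a single MEC (Lemma~\ref{lem:submart}, Lemma~\ref{lem:general-mec-azuma}), where the increments are genuinely bounded. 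To make your argument rigorous you would need an explicit integrability step, e.g.\ the truncation $T\wedge n$ together with a domination argument using $\E^{\pi}T<\infty$ and $\E^{\pi}W<\infty$ (the latter from Lemma~\ref{lem:switches-bound}), plus the observation that $|\Psi_{T\wedge n}|\leq T(1+\max|g^{\star}|)+i\max|g^{\star}|+\max|g|+\kappa W$. A second, more minor point: your worry about $g^{\star}$ failing to be constant on a MEC is actually unfounded — $g^{\star}$ satisfies the gain optimality equation of $\A_R$, and since any two states of a MEC can reach each other and hence the same exits, $g^{\star}$ is constant on every MEC; the delicate bookkeeping you anticipate there does not arise, and your within-segment submartingale check goes through cleanly with the MEC-specific bias $g(p)=|g^{\star}_C|\bar{\poten}^C_p$.
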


Before we present the rather technical proof of Proposition \ref{prop:upper-general-slowbound}, let us make sure that it already implies Proposition \ref{prop:lower-main}.

Let $\conf{q}{i}$ be any initial configuration. Fix a memoryless deterministic strategy $\pi$ that minimizes the expected termination time from $\conf{q}{i}$. From Lemma \ref{lem:switches-bound} we have $\E^{\pi}\switch = \sum_{k=0}^{\infty}k\cdot \Pr{\pi}{\conf{q}{i}}{W=k} \leq 8\cdot |Q|\cdot \sum_{k=0}^{\infty}k\cdot c^k = \frac{8\cdot |Q|}{(1-c)^2}$. From calculus we now that for every $0 \leq x \leq 1$ it holds $1-\exp(-x)\geq x/2$ and thus we have $E^{\pi}\switch \leq \frac{32\cdot |Q|^2}{p_{\min}^{|Q|}}$. Denote this upper bound $K'$. Clearly $K' \in \exp\left(\size{\A}^{\calO(1)} \right)$ is computable in time polynomial in $\size{\A}$. By Proposition \ref{prop:upper-general-slowbound} we have
\[
 \val(\conf{q}{i}) = \E^{\pi}{\conf{q}{i}} \geq \frac{i}{|t_q|} - K\cdot K'.
\]
Since $K\cdot K' \in \exp\left(\size{\A}^{\calO(1)} \right)$, this proves Proposition \ref{prop:lower-main}, which is what we needed to finish the proof of Theorem \ref{thm:upper-main} in general case.

\subsection*{Proof of Proposition \ref{prop:upper-general-slowbound}}
\label{app-general-slowbound}

Recall, that in the following we assume $Q=Q_{fin}$.

First, we need to present some technical observations. %

The following lemma is a slight generalization of part (B1) of Proposition \ref{prop:charact}.
\begin{lemma}
\label{lem:general-mec-azuma}
Let $q(l)$ be any initial configuration such that $q\in C$, for some MEC $C$ of $\A$. Denote $\tleave$ the random variable that returns the first point in time when the run either terminates or reaches configuration of the form $r(j)$ with $r\not\in C$. Then under arbitrary deterministic strategy $\pi$ that satisfies $\E^{\pi}\conf{q}{l}<\infty$ we have $\E^{\pi}\tleave \geq \frac{l-V_{C}-1-\E^{\pi}\counter{\tleave}}{|\bar{x}_C|}$.%
\end{lemma}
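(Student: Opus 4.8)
The plan is to mimic the proof of part (B1) of Proposition~\ref{prop:charact}, but using the stopping time $\tleave$ in place of the termination time $T$, and restricting attention to the strongly connected OC-MDP $\A_C$ determined by the MEC $C$. First I would observe that $\tleave$ is indeed a stopping time relative to the filtration $\{\mathcal{F}_i\}_{i \geq 0}$, since for every $k$ the event $\{\tleave \leq k\}$ depends only on the prefix of the run of length~$k$: it is determined by whether the counter has hit zero or the control state has left~$C$ by step~$k$. Next I would note that up to time $\tleave$ the run stays inside configurations $r(j)$ with $r \in C$ and $j > 0$, so the submartingale $\{\mar{i}\}_{i \geq 0}$ built from the solution $(\bar{x}_C,(\bar{z}_r)_{r \in C})$ of the program $\sL$ for $\A_C$ (which exists by Lemma~\ref{lem:sol}) behaves exactly as in the strongly connected case on this initial segment; by Lemma~\ref{lem:submart} it is a submartingale with bounded differences. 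In particular $\mar{0} = l + \bar{z}_q$.

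The key step is to apply the optional stopping theorem to $\{\mar{i}\}_{i \geq 0}$ with the stopping time $\tleave$. For this I need $\E^{\pi} \tleave < \infty$; but $\tleave \leq T$ pointwise (leaving $C$ or terminating happens no later than termination), so the hypothesis $\E^{\pi}\conf{q}{l} < \infty$ gives $\E^{\pi}\tleave < \infty$, and together with bounded differences the optional stopping theorem applies and yields $\E^{\pi}(\mar{0}) \leq \E^{\pi}(\mar{\tleave})$. Now I unfold $\mar{\tleave}$: on the event $\tleave < \infty$ we have $\mar{\tleave} = \counter{\tleave} + \bar{z}_{\cstate{\tleave}} - \tleave\cdot\bar{x}_C$, and since $\bar{x}_C < 0$ this is $\counter{\tleave} + \bar{z}_{\cstate{\tleave}} + |\bar{x}_C|\cdot\tleave$. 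Taking expectations and using $\bar{z}_{\cstate{\tleave}} \leq \bar{z}_{\max}$ together with $\bar{z}_q \geq \bar{z}_{\min}$ and $V_C = \bar{z}_{\max} - \bar{z}_{\min}$, from $l + \bar{z}_q \leq \E^{\pi}\counter{\tleave} + \E^{\pi}\bar{z}_{\cstate{\tleave}} + |\bar{x}_C|\cdot\E^{\pi}\tleave$ I get
\[
\E^{\pi}\tleave \;\geq\; \frac{l + \bar{z}_q - \E^{\pi}\counter{\tleave} - \bar{z}_{\max}}{|\bar{x}_C|} \;\geq\; \frac{l - V_C - \E^{\pi}\counter{\tleave}}{|\bar{x}_C|},
\]
which is even slightly stronger than the claimed bound (the extra ``$-1$'' in the statement gives a harmless safety margin, presumably to absorb a boundary subtlety when the control state leaves $C$ on a $+1$ edge, so that $\counter{\tleave}$ is measured one step after the last configuration genuinely inside $C$).

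The main obstacle I anticipate is being careful about exactly which value $\counter{\tleave}$ records — whether it is the counter immediately before or immediately after the switch out of $C$ — and correspondingly which definition of the submartingale increment applies at the boundary step; this is precisely the sort of off-by-one issue that the ``$-1$'' term in the statement is designed to neutralise, so the honest thing is to pick the convention consistent with the definition of $\{\mar{i}\}$ (where $\mar{i}$ keeps track of $\counter{i}$ as long as the counter has been positive strictly before step~$i$) and verify that the optional stopping identity at time $\tleave$ then produces the stated inequality with the extra $-1$. A secondary minor point is that the optional stopping theorem as usually stated needs either a bound on the stopping time, uniform integrability, or bounded increments plus $\E\tleave < \infty$; the last variant is the one available here, and it is exactly the same justification already used for part (B1), so no new argument is required beyond citing it.
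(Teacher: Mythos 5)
Your overall strategy is correct -- apply the optional stopping theorem to a submartingale built from the LP solution for $\A_C$ at the stopping time $\tleave$ -- but you are missing the key technical device and your dismissal of the ``$-1$'' as a harmless safety margin is wrong. The process $\{\mar{i}\}$ you plan to reuse is built from $(\bar{x}_C,(\bar{\poten}_r)_{r\in C})$, where the potentials $\bar{\poten}_r$ are defined \emph{only} for $r\in C$. When a run leaves $C$ at time $\tleave$, we have $\cstate{\tleave}\notin C$, so $\mar{\tleave}=\counter{\tleave}+\bar{\poten}_{\cstate{\tleave}}-\tleave\cdot\bar{x}_C$ is simply not defined. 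Even if you extended the potentials arbitrarily to states outside $C$, the submartingale inequality at that final step would not follow: the LP constraints for $\A_C$ are indexed only by transitions that stay inside $C$, so there is no constraint controlling a transition $q\btran{}r'$ with $r'\notin C$. Your computation of $\E^\pi(\mar{\tleave})$ and the resulting bound without the $-1$ are therefore not justified by the argument you give.

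The paper resolves this by defining a \emph{modified} process $\{\maralt{i}\}$: for $i\leq\tleave$ with $\cstate{i}\in C$ it agrees with your $\mar{i}$, but in the boundary case $\cstate{i}\notin C$ it is set to $\counter{i}+1+\bar{\poten}_{\cstate{i-1}}-i\cdot\bar{x}_C$, i.e.\ it retains the potential of the \emph{last} state inside $C$ and adds an explicit $+1$ correction. That $+1$ is exactly what is needed to make the process a submartingale across the step that leaves $C$ (using $\counter{i+1}-\counter{i}\geq -1$ and $\bar{x}_C\leq 0$, so $\counter{i+1}-\counter{i}-\bar{x}_C+1\geq 0$), and it is also precisely what produces the $-1$ in the final bound after the optional stopping computation. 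So the $-1$ in the statement is not slack; it is the cost of pushing the submartingale one step past the boundary of $C$. Your proposal identifies the right theorem and the right verification that $\tleave$ is a stopping time with finite expectation, but it stops short of constructing the object (the corrected submartingale) that actually makes the argument go through.
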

\begin{proof}

Fix an arbitrary initial configuration $\conf{q}{l}$ and deterministic strategy $\pi$ with $\E^{\pi}\conf{q}{l}<\infty$.  %
Consider the following stochastic process $\{\maralt{i}\}_{i \geq 0}$:
\[
\maralt{i}
\coloneqq
\begin{cases}
\counter{i} +
\bar{\poten}_{\cstate{i}} - i\cdot \bar{x}_C
&
\text{if } \tleave\geq i \text{ and } \cstate{i}\in C,\\
\counter{i} + 1 + \bar{\poten}_{\cstate{i-1}} - i\cdot \bar{x}_C  
&
\text{if } \tleave\geq i \text{ and } \cstate{i}\not\in C,\\
\maralt{i-1} & \text{otherwise.}
\end{cases}
\]

We claim that $\{\maralt{i}\}_{i \geq 0}$ is a submartingale relative to the filtration $\{\mathcal{F}_i\}_{i\geq 0}$. The proof is again essentially the same as proof of Lemma 3 in \cite{BBEK:OC-games-termination-approx}. First, the value of $\maralt{i}(\omega)$ clearly depends only on finite prefix of $\omega$ of length $i$. Now let $u$ be any history of length $i$. If $\counter{j}(u)=0$ for some  $ 0\leq j< i$ or $\cstate{j} \not\in C$ for some  $ 0\leq j\leq i$ (i.e. if $\tleave(\omega)<i$ for all $\omega\in\runs{u}$), then clearly $\E^{\pi}(\maralt{i+1}\mid\runs{u})=\maralt{i}(u)$. 

Otherwise we denote $\conf{r}{j}$ the last configuration on $u$ and for every possible successor $\conf{r'}{j'}$ of $\conf{r}{j}$ in $\M^{\infty}_{\A}$ we set
\[
 p_{\conf{r'}{j'}} = \begin{cases} \pi(u)(\conf{r}{j}\btran{}\conf{r'}{j'}) & \text{if } r \in Q_1 \\
    \Prob(\conf{r}{j})(\conf{r}{j}\btran{}\conf{r'}{j'}) & \text{if } r \in Q_0.
\end{cases}
\]

Suppose that $r\in Q_1$ and that $\pi$ selects a transition to a configuration $\conf{r'}{j'}$ with $r' \not\in C$. Then 
\begin{align*}
\E^{\pi}(\maralt{i+1}\mid \runs{u}) &= \E^{\pi}(\counter{i+1} +1 + \bar{\poten}_{\cstate{i}} -(i+1) \cdot \bar{x}_{C} \mid \runs{u}) \\
&= \counter{i}(u) + \E^{\pi}(\underbrace{\counter{i+1} - \counter{i} - \bar{x}_{C}+1}_{\geq 0} + \bar{\poten}_{\cstate{i}}\mid \runs{u}) - i\cdot \bar{x}_C \\ 
&\geq \counter{i}(u) + \bar{\poten}_{\cstate{i}(u)} - i\cdot \bar{x}_C = \maralt{i}(u).
\end{align*}

On the other hand, if $r \in Q_0$ (in which case all successor configurations $\conf{r'}{j'}$ must satisfy $r'\in C$) or $r \in Q_1$ and $\pi$ selects transition that stays in $C$, then we have
\begin{align*}
 \E^{\pi}(\maralt{i+1}\mid \runs{u}) &= \E^{\pi}(\counter{i+1} + \bar{\poten}_{\cstate{i+1}} -(i+1) \cdot \bar{x}_{C} \mid \runs{u}) \\
 &= \counter{i}(u) + \E^{\pi}(\counter{i+1} - \counter{i} - \bar{x}_{C} + \bar{\poten}_{\cstate{i+1}}\mid \runs{u}) - i\cdot \bar{x}_C \\
 &= \counter{i}(u) \underbrace{- \bar{x}_C + \sum_{(r,k,r')\in\delta} p_{\conf{r'}{j'}}\cdot (k + \bar{\poten}_{r'})}_{\geq \bar{\poten}_{r} \text{ since } \left(\bar{x}_C,(\bar{\poten}_q)_{q\in C}\right) \text{ is a solution of $\mathcal{L}$} } - i\cdot \bar{x}_C \\
 &\geq \counter{i}(u) + \bar{\poten}_{\cstate{i}(u)} - i \cdot \bar{x}_C = \maralt{i}(u).
\end{align*}
Thus, $\{\maralt{i}\}_{i \geq 0}$ is indeed a submartingale. It is easy to see that $\{\maralt{i}\}_{i \geq 0}$ has bounded differences.

Clearly, the membership of every run $\omega$ in $\{\tleave \leq n\}$ depends only on finite prefix of $\omega$ of length $n$, and thus $\tleave$ is a stopping time relative to the filtration $\{\mathcal{F}_i\}_{i\geq 0}$. Also, for every run $\omega$ we have $\tleave(\omega) \leq T(\omega)$ and since we assume that $\E^{\pi}\conf{q}{i} < \infty$, we must also have $\E^{\pi}\tleave < \infty$. Thus the Optional stopping theorem applies and we have $\E^{\pi} \maralt{0}\leq \E^{\pi} \maralt{\tleave}$. But $\maralt{0} = l + \bar{\poten}_{q}$ and $\maralt{\tleave} \leq \E^{\pi}\counter{\tleave} + \max_{r \in C}\bar{\poten}_{r}+ 1 +|\bar{x}_C|\cdot \E^{\pi}\tleave$. This gives us $\E^{\pi}\tleave \geq (l + \bar{\poten}_{q} - \max_{r \in C}\bar{\poten}_{r} - 1 - \E^{\pi}\counter{\tleave})/|\bar{x}_{C}| \geq {(l-V_{C}-1-\E^{\pi}\counter{\tleave})}/{|\bar{x}_C|}$.
\qed

\end{proof} 

In the following we say that $q$ is a MEC state of $\A$ if it lies in some MEC of $\A$. Otherwise we say that $q$ is a non-MEC state.

We call state $q'$ a transient successor of state $q$ if both $q$ and $q'$ are non-MEC states and $q'$ is reachable from $q$ along a path that doesn't visit any MEC. We denote  $\trancard{\A}$ the maximal number of transient successors of any state in $\A$.

\begin{lemma}
Let $\A$ be arbitrary OC-MDP and let $q$ be arbitrary state of $\A$ that is not contained in any MEC. Then under arbitrary strategy $\pi$ the probability that, when starting in $q$, we will reach some MEC of $\A$ in at most $\trancard{\A}$ steps, is at least $p_{\min}^{\trancard{\A}}$.
\end{lemma}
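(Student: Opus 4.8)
The plan is to work entirely in the finite-state MDP $\M_\A$, because whether and after how many steps a run reaches a state of some MEC depends only on the sequence of control states, and under any strategy $\pi$ of $\A$ the distribution of this sequence is the same as in $\M_\A$ (the counter plays no role here). Let $A$ be the set of control states lying in no MEC, and for $q\in A$ let $R(q)$ be the set of its transient successors; note that $q\in R(q)$ and $|R(q)|\le\trancard{\A}$. The structural fact driving the argument is that \emph{$A$ contains no end component of $\A$}: every end component is contained in some MEC, hence consists of MEC states only, so it cannot be a subset of $A$.

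First I would establish that $Q\setminus A$ can be ``forced'' from every state, by the standard attractor construction. Let $Z_0=Q\setminus A$ and, for $j\ge 0$, put $Z_{j+1}=Z_j\cup\{q\in A\cap Q_0\mid q\btran{} r\text{ for some }r\in Z_j\}\cup\{q\in A\cap Q_1\mid r\in Z_j\text{ for every }r\text{ with }q\btran{} r\}$, and let $Z_\infty=\bigcup_j Z_j$. I claim $Z_\infty=Q$. If not, $Q\setminus Z_\infty\subseteq A$ is non-empty, every stochastic state of $Q\setminus Z_\infty$ has all its successors in $Q\setminus Z_\infty$, and every non-deterministic state of $Q\setminus Z_\infty$ has at least one successor there; selecting one such successor for each non-deterministic state gives a memoryless choice under which $Q\setminus Z_\infty$ is never left, so the finite Markov chain it induces on $Q\setminus Z_\infty$ has a bottom strongly connected component, which is strongly connected and closed under stochastic successors, i.e.\ an end component contained in $A$ --- contradicting the structural fact. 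Hence $Z_\infty=Q$, the rank $\mathrm{rk}(q)\coloneqq\min\{j\mid q\in Z_j\}$ is finite for every $q$, and (by the definition of $Z_{j+1}$) there is a memoryless choice $\tau$ at stochastic states such that for $q\in A\cap Q_0$ the state $\tau(q)$ is a successor of $q$ with $\mathrm{rk}(\tau(q))<\mathrm{rk}(q)$, while \emph{every} successor of a non-deterministic $q\in A$ has rank strictly smaller than $\mathrm{rk}(q)$.

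Next I would bound the escape time and convert it into a probability. Consider any run started in $q\in A$ that, at every stochastic state of $A$ it reaches without having previously left $A$, uses the transition $\tau$; as long as the run stays in $A$ its rank strictly decreases, so it enters $Q\setminus A$ after some $m$ steps, and the control states $q=q_0,q_1,\dots,q_{m-1}$ visited before leaving $A$ have pairwise distinct ranks and all belong to $R(q)$ (each is reachable from $q$ along a path through non-MEC states only), whence $m\le|R(q)|\le\trancard{\A}$; crucially this holds for \emph{every} choice made at the non-deterministic states, i.e.\ against every $\pi$. Now fix $\pi$ and an initial configuration $\conf{q}{i}$ with $q\in A$, and let $\mathcal{E}$ be the event that, at each of the first $\trancard{\A}$ steps at which the run sits in a stochastic state of $A$ reached without having left $A$, the transition taken is the one prescribed by $\tau$. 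At any such step the prescribed transition has probability at least $p_{\min}$ regardless of the past, since the distribution at a stochastic state depends neither on the history nor on $\pi$, and there are at most $\trancard{\A}$ such steps, so $\Pr{\pi}{\conf{q}{i}}{\mathcal{E}}\ge p_{\min}^{\trancard{\A}}$; and by the escape-time bound $\mathcal{E}$ implies that a MEC is reached within $\trancard{\A}$ steps, which is the claim.

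The only delicate point is the escape-time bound of the previous paragraph: getting $\trancard{\A}$ rather than the weaker $|Q|$ requires the observation that, before it first leaves $A$, the run only ever visits transient successors of the starting state, combined with the fact that following $\tau$ forces the visited ranks --- and hence these states --- to be pairwise distinct, so their number cannot exceed $|R(q)|\le\trancard{\A}$.
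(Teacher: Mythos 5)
Your proof is correct, and it takes a genuinely different route from the paper's. The paper constructs, by induction on the step count $i$, a family $H_i$ of state sets and shows that each set in $H_i$ grows monotonically, consists of transient successors of $q$, and eventually (by step $\trancard{\A}$) must contain a MEC state, essentially because otherwise the restriction of $\A$ to such a set would again be an OC-MDP with its own MEC. It then picks, for the given strategy $\pi$, a single history $u$ of length at most $\trancard{\A}$ reaching a MEC state and concludes $\Pr{\pi}{q}{\runs{u}} \geq p_{\min}^{\trancard{\A}}$. You instead use the standard attractor/rank construction on $\M_\A$: the fact that no end component lies inside $A$ forces the attractor of the MEC states to exhaust $Q$, and the ranks give a ``decrease-the-rank'' policy $\tau$ at stochastic $A$-states; since ranks strictly drop, the states visited in $A$ are pairwise distinct transient successors of $q$, giving the $\trancard{\A}$ bound on the escape time. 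Both approaches ultimately count distinct transient successors, but yours is a clean attractor argument rather than the paper's ad hoc set-family induction. One further point in your favour: your event $\mathcal{E}$ conditions only on the choices at stochastic $A$-states, which makes the bound $\Pr{\pi}{\conf{q}{i}}{\mathcal{E}} \geq p_{\min}^{\trancard{\A}}$ hold for arbitrary, possibly randomized, $\pi$. The paper's one-line estimate ``then clearly $\Pr{\pi}{q}{\runs{u}}\geq p_{\min}^{\trancard{\A}}$'' only works as stated if $\pi$ picks each nondeterministic edge of $u$ with probability $1$; it is correct in the deterministic setting where the lemma is later applied, but your formulation is tighter and matches the lemma's ``arbitrary strategy'' hypothesis.
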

\begin{proof}
We inductively define sets $H_0,H_1,\dots \subseteq 2^{|Q|}$. We set $H_0 = \{q\}$. Then, we construct $H_i$ from $H_{i-1}$ by initially setting $H_i = \emptyset$ and then performing the following operation \emph{for every set} $R \in H_{i-1}$: We find a state $q_R \in R$ such that $q_R$ is not contained in any MEC of $\A$ and $\{s \mid q_R \btran{} s\} \cap R = \emptyset$. If there is no such state in $R$, then we add $R$ to $H_i$. Otherwise:
 \begin{itemize}
 \item If $q_R$ is a stochastic state, then we set $R' = R \cup\{s \mid q_R \btran{} s  \}$ and add $R'$ to $H_i$. 
 \item If $q_R$ is a non-deterministic state, then we denote $\{s \mid q_R \btran{} s  \} = \{s_1,\dots,s_n\}$. After this, we create $n$ new sets $R_1 ,\dots R_n$, where $R_i = R \cup\{s_i\}$. Finally, we add sets $R_1,\dots,R_n$ to $H_i$.
 \end{itemize}
 For every $i$ and every $R \in H_i$ all the non-MEC states in $R$ are transient successors of $q$. Thus, $H_{\trancard{\A}}=H_{\trancard{\A}+1}$. We claim that every set $R\in H_{\trancard{\A}} $ must contain at least one MEC-state of $\A$. Assume, for the sake of contradiction, that there is some $R\in H_{\trancard{\A}}$ containing only non-MEC-states. Then $R$ satisfies the following: for every state $q \in R$, if $q$ is non-deterministic then there is at least one state $s \in R$ such that $q \btran{} s$; otherwise, if $q$ is stochastic, then $\{s \mid q \btran{} s\} \subseteq R$. This also means, that restriction of $\A$ to set $R$, i.e. the tuple $\A_R=(R,(R\cap Q_0,R\cap Q_1),\delta\cap (R\times \{+1,0,-1\}\times R),\{P_q\}_{q\in R\cap Q_0})$, is again a OC-MDP. As every OC-MDP, the $\A_R$ also contains at least one MEC $E$, which must be contained in some MEC of $\A$. This contradicts the assumption that $R$ contains only non-MEC states.%

 Now let $\pi$ be arbitrary strategy and $i \geq 0$. Denote $R_i(\pi)$ the set of states that are, when starting in $q$, reached under $\pi$ in at most $i$ steps. From the construction of $H_{i}$ it follows by straightforward induction, that there is some set $R \in H_{i}$ such that $R \subseteq R_i(\pi)$. In particular, there is some set $R \in H_{\trancard{\A}}$ such that $R \subseteq R_{\trancard{\A}}(\pi)$. Since $R$ must contain at least one MEC-state of $\A$, there is some history $u$ of length at most $\trancard{\A}$ such that $u$ reaches a MEC state and $\Pr{\pi}{q}{\runs{u}}>0$. Then clearly $\Pr{\pi}{q}{\runs{u}}\geq p_{\min}^{\trancard{\A}}$ and this proves the lemma. 
 \qed
\end{proof}

\begin{corollary}
\label{cor:mdp-transient-time}
 Let $q$ be an arbitrary state of $\A$. Denote $T_M$ the random variable on runs starting in $q$ that returns the first point in time, when some MEC of $\A$ is reached. Then for arbitrary strategy $\pi$ and every $k \geq 1$ we have $\Pr{\pi}{q}{T_M\geq k} \leq 4d^k$, where $d = \exp(-p_{\min}^{\trancard{\A}}/\trancard{\A} )$.
\end{corollary}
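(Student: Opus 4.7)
The plan is to iterate the preceding lemma. Write $k = j \cdot \trancard{\A} + r$ with $0 \le r < \trancard{\A}$, so that $j \ge k/\trancard{\A} - 1$. I will show by induction on $j$ that
\[
\Pr{\pi}{q}{T_M \ge j\cdot\trancard{\A}} \;\le\; \bigl(1-p_{\min}^{\trancard{\A}}\bigr)^{j}.
\]
The base case $j=0$ is trivial. For the induction step, condition on histories $u$ of length $j\cdot\trancard{\A}$ that have not yet hit a MEC. For each such $u$, the last state $\last{u}$ is a non-MEC state, and the residual strategy $\pi_u$ defined by $\pi_u(w) := \pi(u w)$ is again a strategy on $\A$ starting from $\last{u}$. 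By the preceding lemma, under $\pi_u$ some MEC is reached within $\trancard{\A}$ steps with probability at least $p_{\min}^{\trancard{\A}}$. Summing over the (countably many) non-terminating histories $u$ and using total probability yields the induction step.

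From this bound I will pass to a clean exponential estimate. Using $1-x \le \exp(-x)$ for $x = p_{\min}^{\trancard{\A}} \in [0,1]$,
\[
\bigl(1-p_{\min}^{\trancard{\A}}\bigr)^{j}
\;\le\; \exp\!\bigl(-p_{\min}^{\trancard{\A}}\cdot j\bigr)
\;\le\; \exp\!\bigl(-p_{\min}^{\trancard{\A}}\cdot (k/\trancard{\A} - 1)\bigr)
\;=\; \exp\!\bigl(p_{\min}^{\trancard{\A}}/\trancard{\A}\bigr)\cdot d^{k}.
\]
Since $p_{\min}^{\trancard{\A}}/\trancard{\A} \le 1$ (both $p_{\min}\le 1$ and $\trancard{\A}\ge 1$), the constant $\exp(p_{\min}^{\trancard{\A}}/\trancard{\A})$ is at most $e < 4$, giving the stated bound $\Pr{\pi}{q}{T_M \ge k} \le 4 d^{k}$.

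The only subtle point is the iteration step, because $\pi$ can be history-dependent: the lemma is stated for an arbitrary strategy starting at a fixed state, so it must be applied to the residual strategies $\pi_u$ rather than to $\pi$ itself. Once this conditioning is set up, the rest of the argument is a routine combination of the geometric decay with the conversion to base $d$. If $q$ is already a MEC state then $T_M = 0$ and the statement is vacuous for $k \ge 1$, so we only need the non-MEC case, which is exactly what the lemma covers.
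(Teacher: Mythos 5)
Your overall route --- iterate the preceding lemma and then pass early to the exponential form via $1-x\le\exp(-x)$ --- is a genuinely cleaner variant of the paper's computation: it absorbs the corner case $p_{\min}=1$ (which the paper has to split off to avoid raising $0$ to a negative power in $(1-p_{\min}^{\trancard{\A}})^{k/\trancard{\A}-2}$), and it avoids the slack in the paper's estimate $\lfloor(k-1)/\trancard{\A}\rfloor\ge k/\trancard{\A}-2$.

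However, the inductive claim as you state it,
\[
\Pr{\pi}{q}{T_M \ge j\trancard{\A}} \;\le\; \bigl(1-p_{\min}^{\trancard{\A}}\bigr)^{j},
\]
is \emph{false}. Take $\trancard{\A}=1$, $j=1$, and $q$ a non-MEC state: then $T_M\ge 1$ holds surely, so the left side is $1$, while the right side is $1-p_{\min}<1$. The off-by-one arises in the induction step: conditioning on histories $u$ of length $j\trancard{\A}$ that have not hit a MEC is conditioning on $T_M>j\trancard{\A}$, not $T_M\ge j\trancard{\A}$, and the preceding lemma applied at $\last{u}$ then bounds $\Pr{\pi}{q}{T_M>(j+1)\trancard{\A}}$, not $\Pr{\pi}{q}{T_M\ge(j+1)\trancard{\A}}$. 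The repair is to prove exactly what the conditioning delivers, namely $\Pr{\pi}{q}{T_M>j\trancard{\A}}\le(1-p_{\min}^{\trancard{\A}})^{j}$, and then choose $j=\lfloor(k-1)/\trancard{\A}\rfloor$ so that $T_M\ge k$ implies $T_M>j\trancard{\A}$. One still has $j\ge k/\trancard{\A}-1$, so your computation goes through: $\Pr{\pi}{q}{T_M\ge k}\le\exp(-p_{\min}^{\trancard{\A}}(k/\trancard{\A}-1))=\exp(p_{\min}^{\trancard{\A}})\cdot d^{k}\le e\cdot d^{k}<4d^{k}$. Finally, note the algebra slip: $\exp(-p_{\min}^{\trancard{\A}}(k/\trancard{\A}-1))$ equals $\exp(p_{\min}^{\trancard{\A}})\cdot d^{k}$, not $\exp(p_{\min}^{\trancard{\A}}/\trancard{\A})\cdot d^{k}$; the constant bound $<4$ then comes from $p_{\min}^{\trancard{\A}}\le 1$, not from $p_{\min}^{\trancard{\A}}/\trancard{\A}\le 1$. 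With these two corrections your proof is sound and arguably tidier than the paper's, which instead uses $(1-p_{\min}^{\trancard{\A}})^{-2}\le4$ (requiring $p_{\min}\le 1/2$, hence a separate treatment of $p_{\min}=1$).
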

\begin{proof}
If $p_{\min}=1$ then $\Pr{\pi}{q}{T_M > \trancard{\A}}=0$ and thus the Lemma trivially holds. Otherwise we have $p_{\min} \leq 1/2$.
From the previous lemma we immediately see that \mbox{$\Pr{\pi}{q}{T_M \geq k} \leq (1-p_{\min}^{\trancard{\A}})^{\lfloor\frac{k-1}{\trancard{\A}} \rfloor}$}.
We can now compute
\begin{align*}
 \Pr{\pi}{q}{T_M \geq k} &\leq (1-p_{\min}^{\trancard{\A}})^{\lfloor\frac{k-1}{\trancard{\A}} \rfloor}
 \leq (1-p_{\min}^{\trancard{\A}})^{\frac{k}{\trancard{\A}} -2 } = \frac{(1-p_{\min}^{\trancard{\A}})^{\frac{k}{\trancard{\A}}}}{(1-p_{\min}^{\trancard{\A}})^2} \leq 4(1-p_{\min}^{\trancard{\A}})^{\frac{k}{\trancard{\A}}} \leq 4d^k.
\end{align*}
\qed

\end{proof}

Let $r$ and $r'$ be two states of $\A$ that lie in the same MEC $C$. Then clearly $t_r = t_{r'}$. We will denote $t_C$ the common value $t_r$ of all $r$ in $C$.

We now prove Proposition \ref{prop:upper-general-slowbound} for MEC-acyclic OC-MDPs. We say that a OC-MDP $\A$ is MEC-acyclic if there is no cycle in $\A$ containing states from two different MECs. Equivalently, one can say that $\A$ is MEC-acyclic if no run in $\A$ returns to some MEC once it leaves this MEC. The \emph{height} of a state $q$ in MEC-acyclic OC-MDP $\A$, which we denote $\height{q}$, is the maximal number of MECs visited by any path starting in $q$. The height of a given MEC $C$ is the common height of all its states.

For \emph{any} OC-MDP $\A$ we denote $\size{\A_{\max}} = \max\{\size{\A_C}\mid C \in \MEC(\A)\}$. 

\begin{lemma}
\label{lem:mec-tree-lemma}
 Let $\A$ be a MEC-acyclic OC-MDP. Then there is a number $K = \exp\left(\size{\A_{\max}}^{\calO(1)} \right)\cdot \calO(\trancard{\A}/p_{\min}^{\trancard{\A}})$ such that the following holds for every memoryless deterministic strategy $\pi$ and every initial configuration $\conf{q}{i}$:
\begin{equation}
\label{eq:mec-ac-lower}
 \E^{\pi}T \geq \frac{i}{|t_q|} - K\cdot \E^{\pi}\switch.
\end{equation}
Moreover, $K$ is computable in time polynomial in  $\size{\A_{\max}} \cdot \log(p_{\min})\cdot \trancard{\A}$ by algorithm that takes as an input number $\trancard{\A}$ and set of strongly connected OC-MDPs \mbox{$\{\A_C\mid C \in\MEC(\A)\}$.}%
\end{lemma}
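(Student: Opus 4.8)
The plan is to reduce at once to the nontrivial case $\E^{\pi}\conf{q}{i}<\infty$ and then to exhibit a single potential $\Phi$ on configurations of $\M^{\infty}_{\A}$ such that the time‑shifted process $\Phi(\conf{\cstate{n}}{\counter{n}})+n$, corrected by a bounded penalty paid at every switch and at every step through a non‑MEC state, is a submartingale; optional stopping at $T$ then gives \eqref{eq:mec-ac-lower}. Throughout we use the standing assumption $Q=Q_{\fin}$, so the minimal trend $t_q$ is defined for every $q$, together with the following Bellman‑type facts about the $t_q$'s, each proved by exhibiting the obvious witnessing strategy and using MEC‑acyclicity (so that a MEC, once left, is never re‑entered): \textbf{(a)} for a non‑MEC $q\in Q_1$ one has $|t_q|^{-1}=\min_{q\btran{}q'}|t_{q'}|^{-1}$, and for a non‑MEC $q\in Q_0$ the value $|t_q|^{-1}$ is the $P_q$‑weighted average of the $|t_{q'}|^{-1}$ over the successors $q'$; \textbf{(b)} if $q$ lies in a MEC $C$ and some edge leaves $C$ to a state $s$, then $|t_q|^{-1}\le|t_s|^{-1}$; \textbf{(c)} if a MEC $C$ with $\bar x_C<0$ is reachable from $q$ then $|t_q|^{-1}\le|\bar x_C|^{-1}$, and in particular $|\bar x_C|\le|t_C|$ whenever $\bar x_C<0$. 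One also records $1\le|t_q|^{-1}\le|\bar x_0|^{-1}$, where $\bar x_0:=\max\{\bar x_C\mid C\in\MEC(\A),\,\bar x_C<0\}$, and sets $V_{\max}:=\max_{C\in\MEC(\A)}V_C$; both $|\bar x_0|^{-1}$ and $V_{\max}$ lie in $\exp(\size{\A_{\max}}^{\calO(1)})$ and are computable in polynomial time from $\{\A_C\}_{C\in\MEC(\A)}$ by Lemma~\ref{lem:sol} and Proposition~\ref{prop:charact}.

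Define $\Phi(\conf{q}{j}):=(j+\bar{\poten}^{C}_q)/|t_C|$ when $q$ lies in a MEC $C$ (here $(\bar x_C,(\bar{\poten}^{C}_q)_{q\in C})$ is the solution of $\sL$ for $\A_C$ from Lemma~\ref{lem:sol}, shifted so that all $\bar{\poten}^{C}_q\ge0$), and $\Phi(\conf{q}{j}):=j/|t_q|$ when $q$ is a non‑MEC state; thus $\Phi(\conf{q}{j})\ge j/|t_q|$ always and $0\le\Phi(\conf{q}{0})\le V_{\max}/|\bar x_0|$. Let $W_n$ be the number of switches among the first $n$ edges of the run, let $L_n:=|\{m<n\mid\cstate{m}\text{ is a non-MEC state}\}|$, and put
\[
Z_n\ :=\ \Phi(\conf{\cstate{n}}{\counter{n}})\ +\ n\ +\ \beta\cdot W_n\ +\ |\bar x_0|^{-1}\cdot L_n, \qquad \beta:=(1+2V_{\max})/|\bar x_0|.
\]

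The core of the argument is that $\{Z_n\}$ is a submartingale with respect to $\{\mathcal{F}_n\}$, which I would establish by a case distinction on the edge $\conf{\cstate{n}}{\counter{n}}\btran{}\conf{\cstate{n+1}}{\counter{n+1}}$ (each case including the possibility that the edge decreases the counter to $0$). If the edge stays inside a single MEC $C$, the inequalities defining $\sL$ for $\A_C$ together with $|\bar x_C|\le|t_C|$ (fact (c)) yield $\E^{\pi}(\Phi(\conf{\cstate{n+1}}{\counter{n+1}})\mid\mathcal F_n)\ge\Phi(\conf{\cstate{n}}{\counter{n}})-1$, exactly compensating the increment of $n$ (this is the one‑step submartingale property underlying Lemma~\ref{lem:general-mec-azuma}). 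If both $\cstate{n}$ and $\cstate{n+1}$ are non‑MEC states, fact (a) gives $\E^{\pi}(\Phi(\conf{\cstate{n+1}}{\counter{n+1}})\mid\mathcal F_n)\ge\Phi(\conf{\cstate{n}}{\counter{n}})-|\bar x_0|^{-1}$, which is compensated by the increment of $|\bar x_0|^{-1}L_n$. If the edge is a switch, $\Phi$ itself may jump by an amount proportional to the arbitrarily large current counter $\counter{n}$; however, by facts (a) and (b) the coefficient of $\counter{n}$ changes only in the favourable direction \emph{in conditional expectation}, while the remaining additive terms (the $\bar{\poten}^{\bullet}$'s and the $\pm1$ counter step) are bounded by $V_{\max}/|\bar x_0|$, so $\E^{\pi}(\Phi(\conf{\cstate{n+1}}{\counter{n+1}})\mid\mathcal F_n)\ge\Phi(\conf{\cstate{n}}{\counter{n}})-\beta$, compensated by the increment of $\beta W_n$. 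Checking that this $\beta$ works in every sub‑case (leaving a MEC into a non‑MEC state, into another MEC, entering a MEC from a non‑MEC state) is the main obstacle; the crucial point is precisely that the potential jump at a switch is large but its conditional expectation is not, a cancellation that the Bellman equalities for $Q_0$‑states and the monotonicity fact (b) are exactly designed to provide.

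Finally, $T$ is an $\{\mathcal F_n\}$‑stopping time and $\{Z_{n\wedge T}\}$ is uniformly integrable: $|Z_{n\wedge T}|$ is dominated by $(i+T+V_{\max})/|\bar x_0|+T+\beta\cdot\switch+|\bar x_0|^{-1}L_T$, an affine function of the integrable quantities $T$ and $\switch$ plus constants, integrable because $\E^{\pi}T<\infty$ by assumption, because $\switch$ is deterministically $\calO(|Q|)$‑bounded for MEC‑acyclic $\A$ (or, in general, by Lemma~\ref{lem:switches-bound}), and because $\E^{\pi}L_T\le\E^{\pi}\switch\cdot\calO(\trancard{\A}/p_{\min}^{\trancard{\A}})$, the last bound following from Corollary~\ref{cor:mdp-transient-time} applied to the residual strategy at the start of each maximal sojourn among non‑MEC states (each such sojourn has expected length $\le8\trancard{\A}/p_{\min}^{\trancard{\A}}$, and there are at most $\switch$ of them). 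Optional stopping then gives $\Phi(\conf{q}{i})=\E^{\pi}Z_0\le\E^{\pi}Z_T=\E^{\pi}\Phi(\conf{\cstate{T}}{0})+\E^{\pi}T+\beta\,\E^{\pi}\switch+|\bar x_0|^{-1}\E^{\pi}L_T$; using $\Phi(\conf{q}{i})\ge i/|t_q|$, $\E^{\pi}\Phi(\conf{\cstate{T}}{0})\le V_{\max}/|\bar x_0|$ and $\switch\ge1$ to absorb the additive constants, we obtain $\E^{\pi}T\ge i/|t_q|-K\cdot\E^{\pi}\switch$ with $K=\beta+V_{\max}/|\bar x_0|+|\bar x_0|^{-1}\cdot\calO(\trancard{\A}/p_{\min}^{\trancard{\A}})$. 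Since $\beta$, $V_{\max}$ and $|\bar x_0|^{-1}$ lie in $\exp(\size{\A_{\max}}^{\calO(1)})$ and are computable in polynomial time from $\trancard{\A}$ and $\{\A_C\}_{C\in\MEC(\A)}$, this $K$ is of the required form $\exp(\size{\A_{\max}}^{\calO(1)})\cdot\calO(\trancard{\A}/p_{\min}^{\trancard{\A}})$ and is computable within the stated resources; notably it does not depend on the numbers $t_q$ themselves.
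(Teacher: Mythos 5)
Your argument is correct in essence, but it takes a genuinely different route from the paper's. The paper proves \eqref{eq:mec-ac-lower} by induction on the MEC-acyclic height of $q$, decomposing $T = T_1+T_2$ at each level, handling MEC states via Lemma~\ref{lem:general-mec-azuma} (a local optional-stopping argument tied to $\A_C$'s submartingale) and non-MEC states via Corollary~\ref{cor:mdp-transient-time}, with the bookkeeping term $\lowerrem{l}{j}{C}$ carrying the inductive hypothesis across levels. You instead build a single global process $Z_n=\Phi_n+n+\beta W_n+|\bar x_0|^{-1}L_n$, verify a one-step submartingale inequality by local case analysis on the edge (inside a MEC: the $\sL$ inequalities plus $|\bar x_C|\le|t_C|$; among non-MEC states: the Bellman equalities for $|t_\cdot|^{-1}$; at a switch: monotonicity of $|t_\cdot|^{-1}$ along crossings, with the bounded additive $\bar\poten$-terms absorbed into $\beta$), and invoke optional stopping once at $T$. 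This eliminates the layered induction entirely; the MEC-acyclicity is used only to justify the Bellman facts about $t_q$ and, implicitly, that $\switch$ is deterministically bounded. The bounds one extracts are of the same order (your $K$ uses $\E^{\pi}L_T\le\E^{\pi}\switch\cdot\calO(\trancard{\A}/p_{\min}^{\trancard{\A}})$ in place of the paper's $1/(1-d)^2$ term), and your uniform-integrability argument for optional stopping---domination of $|Z_{n\wedge T}|$ by an affine function of the integrable $T$, $\switch$, $L_T$---is sound; crucially the unbounded upward jumps of $\Phi$ at switches are not an obstacle, because domination rather than bounded increments is used.

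One inaccuracy worth flagging: the first half of your fact \textbf{(c)} as stated---that $|t_q|^{-1}\le|\bar x_C|^{-1}$ whenever a MEC $C$ with $\bar x_C<0$ is reachable from $q$---is false in general (take $q$ stochastic with branches of probability $1/2$ to MECs with trends $-1$ and $-0.1$; then $|t_q|^{-1}=5.5>1=|\bar x_{C_1}|^{-1}$). What you actually use is only the special case $q\in C$, which gives $|t_C|^{-1}\le|\bar x_C|^{-1}$, i.e.\ $|\bar x_C|\le|t_C|$, and that follows directly from the fact that the ``stay in $C$ under $\sigma_C$ forever'' strategy belongs to $\Sigma_q^{<0}$. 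So the proof survives, but you should replace (c) by this weaker, correct statement. You would also need to spell out the switch sub-cases a bit more carefully than the sketch does; in particular, when $\cstate{n}$ is a non-MEC \emph{stochastic} state, whether the next edge is a switch is not $\mathcal F_n$-measurable, so the case split should be on $\cstate{n}$ and its type rather than on the realized edge, with the $|\bar x_0|^{-1}L_n$ penalty (not $\beta W_n$) covering the non-MEC stochastic step. That adjustment is routine and does not change the conclusion.
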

\begin{proof}
Recall that we denote $d = \exp(-p_{\min}^{\trancard{\A}}/\trancard{\A} )$ and set $$K=\max\left\{\frac{4}{(1-d)^2\cdot|\bar{x}_0|},\frac{1+\max_{C \in \MEC(\A)}V_{C}}{|\bar{x}_0|}  \right\}.$$

The asymptotic upper bound on $K$ is easy to check, since numbers $\bar{x}_0$ and $V_C$ for $C \in \MEC(\A)$ are computed by solving linear program $\mathcal{L}$ for MECs of $\A$; also recall that $1/(1-d)^2 \leq 4\trancard{\A}^2/{p_{\min}^{2\trancard{\A}}}$ by standard calculus computation. This also shows that $K$ can be computed in time polynomial in $\size{\A_{\max}} \cdot \log(p_{\min})\cdot \trancard{\A}$ if we know numbers $\trancard{\A}$ and $p_{\min}$ and OC-MDPs $\A_C$ for every $\MEC$ $\C$ of $\A$.

Note that in \emph{every} OC-MDP we have $\E^{\pi}\switch<\infty$ under any strategy $\pi$ (by Lemma \ref{lem:switches-bound}).
Therefore, both inequalities trivially hold if $\E^{\pi}q(i) = \infty$. From now on we will assume that $\E^{\pi}q(i) < \infty$. In particular, we assume that under $\pi$ the configuration with zero counter is reached almost surely from $q(i)$.
We proceed by induction on $\height{q}$. For every height we will prove the inequality separately for $q$ being a non-MEC state  and  MEC-state, respectively.

To start the induction, suppose that $q$ lies in MEC $C$ of height $1$. But then there are no transitions leaving $C$. In particular, we have $\E^{\pi}\switch=1$. From part (B1) of Proposition \ref{prop:charact} and from $K \geq \frac{V_{C}}{|\bar{x}_0|}$ we have 
\begin{align*}
 \E^{\pi} q(i) &\geq \frac{i - V_{C}}{|\bar{x}_C|} \geq  \frac{i}{|t_q|} - K.
\end{align*}
The second equality holds because for state $q$ that lies in MEC $C$ with no outgoing transitions we have $t_q = \bar{x}_C$. 

Suppose now that $q$ is a non-MEC-state of height $h$ and that \eqref{eq:mec-ac-lower} holds for all MEC-states of height at most $h$.
 
Denote $F_C$ the event that the first MEC encountered on a run is $C$. Note that all MECs with $\Pr{\pi}{q(i)}{F_C}>0$ have height at most $h$. Denote $D$ the union of all MECs $C$ with $\Pr{\pi}{q(i)}{F_C}>0$. Similarly to previous proofs we can write $\E^{\pi}q(i) = E^{\pi} (T_1 + T_2)$ where $T_1$ returns the first point in time when the run hits either $D$ or a configuration with a zero counter and $T_2$ returns time to hit a configuration with a zero counter after hitting $D$ (or 0, if the run terminates before hitting $D$ or never hits $D$ at all). Since both these random variables are non-negative, it suffices to prove the required bound \eqref{eq:mec-ac-lower} for $\E^{\pi} T_2$. %

As in previous proofs, we use the notation $D_{m}$ (for $m>0$) for the set of all runs that do not terminate before reaching $D$ and at the same time they reach $D$ with counter value $m$. (Also recall that we denote $D_0$ set of runs that terminate before or in the exact moment of reaching $D$.) %
Moreover we denote $D_{m}^C$ the event $F_C \cap D_{m}$. Finally, we denote 
 \[
 \lowerrem{l}{j}{C}:=\frac{j}{|t_C|} - K \cdot(\E^{\pi}(\switch\mid D_{l}^C)-1).
  \]

Clearly $\sum_{\substack{C \in \MEC(\A), l\geq 0}} \Pr{\pi}{\conf{q}{i}}{D^C_l} = \sum_{C \in \MEC(\A)}\Pr{\pi}{\conf{q}{i}}{F_C}=1$.

We have
\begin{equation}
 \label{eq:gen-lb1}
\E^{\pi} T_2 = \sum_{C \in \MEC(\A)} \Pr{\pi}{q(i)}{F_C}\cdot \E^{\pi}(T_2\mid F_C).
\end{equation}

We can write 
\begin{align}
 \Pr{\pi}{q(i)}{F_C}\cdot\E^{\pi}(T_2\mid F_C) &= \sum_{l=0}^{\infty}\E^{\pi}(T_2\mid D_{l}^C)\cdot\Pr{\pi}{q(i)}{D_{l}^C}\label{eq:gen-lb2}.
\end{align}

By induction hypothesis we have for every $l \geq 0$%
\begin{align}
 \E^{\pi}(T_2\mid D^C_{l}) &\geq \frac{l}{|t_C|} - K\cdot( \E^{\pi}(\switch\mid D_l^C) -1 ) = \lowerrem{l}{l}{C}.\label{eq:gen-lb3}
\end{align}

Especially for every $l\geq i$ we have
\begin{equation}
 \label{eq:gen-lb5}
 \E^{\pi}(T_2\mid D^C_{l}) \geq \frac{i}{|t_C|} - K\cdot( \E^{\pi}(\switch\mid D_l^C) -1 )= \lowerrem{l}{i}{C}.
\end{equation}

Further, if we denote $g_l = i-l$ then for $l<i$ we can write
\begin{align}
  \E^{\pi}(T_2\mid D^C_{l}) &\geq \frac{(i-g_l)}{|t_C|} - K\cdot( \E^{\pi}(\switch\mid D_l^C) -1 ) = \lowerrem{l}{i}{C} - \frac{g_l}{|t_C|}. \label{eq:gen-lb6}
\end{align}

We can now plug \eqref{eq:gen-lb5} and \eqref{eq:gen-lb6} into \eqref{eq:gen-lb2} and compute
\begin{align}
 \E^{\pi} T_2 &\geq \sum_{C \in \MEC(\A)}\left(\sum_{l=i}^{\infty}\left(\lowerrem{l}{i}{C}\cdot\Pr{\pi}{q(i)}{D_{l}^C}\right) + \sum_{l=0}^{i-1} \left(\left(\lowerrem{l}{i}{C} - \frac{g_l}{|t_C|}\right)\cdot \Pr{\pi}{q(i)}{D_{l}^C} \right)\right) \nonumber \\ 
  &=\sum_{C \in \MEC(\A)}\left(\sum_{l=0}^{\infty}\left(\lowerrem{l}{i}{C}\cdot\Pr{\pi}{q(i)}{D^C_l} \right)- \sum_{l=0}^{i-1}\frac{g_l}{|t_C|}\cdot\Pr{\pi}{q(i)}{D_{ l}^C} \right) \nonumber \\
  &=i \cdot \underbrace{\sum_{C \in \MEC(\A)}\left(\frac{\Pr{\pi}{q(i)}{F_C}}{|t_C|}\right)}_{\geq \frac{1}{|t_q|}} - K\cdot \sum_{\substack{C \in \MEC(\A),\\ l \geq 0}}\left(  \E^{\pi}(\switch\mid D_l^C) -1 \right)\cdot \Pr{\pi}{q(i)}{D_l^C}\nonumber \\ &\text{\ \ \ }-
  \sum_{C \in\MEC(\A)}\sum_{l=0}^{i-1}\left(\frac{g_l}{|t_C|}\cdot\Pr{\pi}{q(i)}{D_{ l}^C} \right)\nonumber \\
  &\geq \frac{i}{|t_q|} - K\cdot \E^{\pi}\switch + K  -\sum_{C\in \MEC{(\A)}}\sum_{l=0}^{i-1}\left(\frac{g_l}{|\bar{x}_0|}\cdot\Pr{\pi}{q(i)}{D_{ l}^C}\right)\nonumber \\
  &= \frac{i}{|t_q|} - K\cdot \E^{\pi}\switch + K  -\frac{\sum_{l=0}^{i-1}\left(g_l \cdot\Pr{\pi}{q(i)}{D_{ l}}\right)}{|\bar{x}_0|}. \label{eq:gen-lb4}
\end{align}

From Corollary \ref{cor:mdp-transient-time} we have 
\begin{equation}
\label{eq:gen-lb7}
 \frac{\sum_{l=0}^{i-1}\left(g_l\cdot\Pr{\pi}{q(i)}{D_{ l}}\right)}{|\bar{x}_0|}\leq \frac{4\cdot\sum_{l=0}^{i-1}g_l d^{g_l}}{|\bar{x}_0|}  = \frac{ 4\cdot\sum_{g_l=1}^{i}g_l d^{g_l}}{|\bar{x}_0|} \leq \frac{\frac{4}{(1-d)^2}}{|\bar{x}_0|} \leq {K},
\end{equation}
since no run in $D_l$ , for $l<i$, can hit $D$ in less than $g_l$ steps.

This gives us $K - \sum_{l=0}^{i-1}\frac{g_l}{|\bar{x}_0|}\cdot \Pr{\pi}{q(i)}{D_{ l}}\geq 0$ and together with \eqref{eq:gen-lb4} we have
\begin{align*}
 \E^{\pi} T_2 &\geq \frac{i}{|t_q|} - K \cdot \E^{\pi}\switch, %
\end{align*}
which proves that \eqref{eq:mec-ac-lower} holds for $q$.

Suppose now that $q$ lies in MEC $C$ of height $h$ and that \eqref{eq:mec-ac-lower} holds for all states of height $h-1$. The inequality \eqref{eq:mec-ac-lower} especially holds for all states $q' \in Q_{fin}\setminus C$ such that there is a transition from $p$ to $q'$ for some $p \in C$. We will call every such state $q'$ a \emph{$C$-gate} and denote $G(C)$ the set of all $C$-gates. From the definition of $t_q$ it follows that $\frac{1}{|t_q|} \leq \frac{1}{|\bar{x}_C|}$ and $\frac{1}{|t_q|} \leq \frac{1}{|t_{q'}|}$ for any $C$-gate $q'$. 

We can again express $T$ as a sum of $T_1$ and $T_2$, where $T_1$ returns the first point in time when the run visits configuration $r(l)$ with either $r\not\in C$ or $l=0$, and $T_2$ returns time to visit a configuration with a zero counter after leaving $C$ (or 0, if the run terminates before leaving $C$ or never leaves $C$ -- formally we again have $T_2(\omega)=-T_1(\omega) + T(\omega)$ if $T_1(\omega)<\infty$ and $T_2(\omega)=0$ otherwise). 
From Lemma \ref{lem:general-mec-azuma} we have
\begin{equation}
\label{eq:gen-lb8}
\E^{\pi}(T_1) \geq \frac{i-V_{C}-1-\E^{\pi}\counter{T_1}}{|\bar{x}_C|} \geq \frac{i-\E^{\pi}\counter{T_1}}{|t_q|}-\frac{V_{C}+1}{|\bar{x}_0|}.
\end{equation}

Now consider $T_2$. For state $q'$ not contained in $C$ we denote $F^{q'}_l$ the set of all runs $\omega$ that visit configuration $q'(l)$ when they leave $C$ for the first time, i.e. $\omega \in F^{q'}_l$ iff $\cstate{T_1}(\omega)=q'$ and $\counter{T_1}(\omega)=l$. Note that for every $q'$ such that $\Pr{\pi}{\conf{q}{i}}{F^{q'}_l}>0$ we must have $q' \in G(C)$. If we denote $F_l = \bigcup_{q' \in G(C)} F^{q'}_l$, then it is easy to see that $\E^{\pi} \counter{T_1} = \sum_{l \in \Nset}l\cdot \Pr{\pi}{\conf{q}{i}}{F_l}$. Finally, denote $\leave_C$ the event that the run leaves $C$ at least once (i.e. $\omega \in \leave_C$ iff $\omega \in D_l^{q'}$ for some $l$ and $q'$).
We have
\begin{align}
\label{eq:gen-lb9}
 \E^{\pi}T_2 &= \sum_{\substack{q' \in G(C),\\l \geq 0}} \E^{\pi}(T_2\mid F^{q'}_l)\cdot\Pr{\pi}{\conf{q}{i}}{F^{q'}_l} \nonumber \\ 
  &\geq \sum_{\substack{q' \in G(C),\\l \geq 0}} \left(\left(\frac{l}{|t_{q'}|}-K\cdot(\E^{\pi}(\switch \mid F^{q'}_l)-1)\right)\cdot\Pr{\pi}{\conf{q}{i}}{F^{q'}_l}\right)\nonumber \\
  &\geq \sum_{\substack{q' \in G(C),\\l \geq 0}} \left(\frac{l}{|t_{q}|}-K\cdot(\E^{\pi}(\switch \mid F^{q'}_l)-1)\right)\cdot\Pr{\pi}{\conf{q}{i}}{F^{q'}_l}\nonumber\\ 
 &= \frac{\sum_{l\geq 0}\left( l\cdot\Pr{\pi}{\conf{q}{i}}{F_l} \right) }{|t_q|} -K\cdot\left(\E^{\pi}(\switch\mid \leave_C)\cdot \Pr{\pi}{\conf{q}{i}}{\leave_C} -\Pr{\pi}{\conf{q}{i}}{\leave_C}\right) \nonumber \\
 &= \frac{\E^{\pi}\counter{T_1}}{|t_q|}-K\cdot\left(\E^{\pi}(\switch\mid \leave_C)\cdot \Pr{\pi}{\conf{q}{i}}{\leave_C} -\Pr{\pi}{\conf{q}{i}}{\leave_C}\right),
\end{align}
where the inequality on the second line follows from induction hypothesis.

Denote $\overline{\leave_C}$ the complement of $\leave_C$. We trivially have $\E^{\pi}(\switch\mid \overline{\leave_C})\geq 1$.
Putting \eqref{eq:gen-lb8} and \eqref{eq:gen-lb9} together we obtain
\begin{align*}
 \E^{\pi}\conf{q}{i} &\geq \frac{i}{|t_q|} -K\cdot \E^{\pi}(\switch\mid \leave_C)\cdot\Pr{\pi}{\conf{q}{i}}{\leave_C}  + K\cdot \Pr{\pi}{\conf{q}{i}}{\leave_C} - \underbrace{\frac{V_{C}+1}{|\bar{x}_0|}}_{\leq K} \\
  &\geq \frac{i}{|t_q|} -K\cdot \E^{\pi}(\switch\mid \leave_C)\cdot\Pr{\pi}{\conf{q}{i}}{\leave_C} -K \cdot {1} \cdot (1-\Pr{\pi}{\conf{q}{i}}{\leave_C}) \\
  &\geq \frac{i}{|t_q|} -K\cdot \left(\E^{\pi}(\switch\mid \leave_C)\cdot\Pr{\pi}{\conf{q}{i}}{\leave_C} + \E^{\pi}(\switch\mid \overline{\leave_C})\cdot\Pr{\pi}{\conf{q}{i}}{\overline{\leave_C}} \right) \\
  &= \frac{i}{|t_q|} - K \cdot \E^{\pi}\switch.
\end{align*}
Thus, \eqref{eq:mec-ac-lower} indeed holds for $q$. \qed
\end{proof}

We will now finish the proof of Proposition \ref{prop:upper-general-slowbound} for arbitrary OC-MDP with $Q=Q_{fin}$.

To achieve this, for arbitrary OC-MDP $\A$ and any natural number $k$ we define a new MEC-acyclic OC-MDP $\A(k)$ of height $k+1$; we will augment states of $\A$ with additional information, that will allow us to remember number of visits of MECs. Once we know that we have left a MEC for the $k$-th time, we allow to switch to a new state with a counter-decreasing self-loop. 
To be more specific, call the transition $q \btran{} q'$ a \emph{crossing}, if there exists a MEC $C$ such that $q \in C$, $q'\not\in C$. %
Then for $\A=(Q,(Q_0,Q_1),\delta,P)$ we set $\A(k) = (Q^k,(Q^k_0,Q^k_1),\delta^k,P^k)$, where $Q^k = \{(q,l)\mid q \in Q, 1 \leq l \leq k \} \cup \{\bot\}$, and 
\begin{align*}
\delta^k&=\{((q,l),i,(q',l)) \mid (q,i,q')\in \delta \text{ and $(q,i,q')$ is not a crossing }\}\\  &\cup \{((q,l),i,(q',l-1))\mid l>1,~(q,i,q') \in \delta \text{ is a crossing } \}\\&\cup \{((q,1),i,\bot)\mid \exists q' \text{ such that }  ~(q,i,q')\in \delta \text{ is a crossing } \}\\ &\cup \{(\bot,-1,\bot)\}.
\end{align*}
Partition of states $(Q^k_0,Q^k_1)$ and probability distribution $P^k$ is derived from $\A$ in obvious way, we just specifically put $\bot \in Q_0^k$. 

Slightly abusing the notation we denote ${t}_{q^k}$ the minimal trend achievable from state $(q,k)$ in $\A(k)$.

For every deterministic strategy $\pi$ in $\A$ there is naturally corresponding deterministic strategy $\pi(k)$ in ${\A}(k)$, formally defined as follows: for any history $\bar{H}=(q_0,l_0)(j_0)\dots(q_m,l_m)(j_m)$ in ${\A}(k)$ we denote $q'(j')$ configuration of $\A$ such that $\pi(q_0(j_0)\dots q_m(j_m))$ selects transition leading to configuration $q'(j')$; then we define $(\pi(k))(H)$ to select transition leading to configuration $c$ of ${\A}(k)$ such that
\[
 c = \begin{cases}
					   \bot(j') & \text{if } l_m = 1 \text{ and } q_m \btran{} q' \text{ is a crossing,} \\
                                           (q',l_m - 1)(j') & \text{if } l_m > 1 \text{ and } q_m\btran{} q' \text{ is a crossing,} \\
					   (q',l_m)(j') & \text{otherwise. }
                                           \end{cases}
\]

To differentiate between computations in $\A$ and ${\A}(k)$, 
we again slightly abuse notation and denote $\mathbb{P}^{\pi(k)}$ and ${\E}^{\pi(k)}$ the probability and expected value, respectively, computed in $\A(k)$ under strategy $\pi(k)$. Note that if $\pi$ is memoryless deterministic, then $\pi(k)$ is also memoryless deterministic.

It is clear that for any strategy $\pi$ in $\A$ and any $k \geq 1$ we have $\E^{\pi}\conf{q}{i} \geq {\E}^{{\pi}(k)} \conf{(q,k)}{i}$. We can thus use the Lemma \ref{lem:mec-tree-lemma} to show that for any memoryless deterministic strategy $\pi$ and any $k \geq 1$ we have 
\begin{equation}\label{eq:near-lower-bound}\E^{\pi} q(i) \geq \frac{i}{|{t}_{q^k}|}-K\cdot \E^{\pi(k)}_{(q,k)(i)}\switch,\end{equation} 
for a suitable number $K$. Note that for every $k$ the MECs of $\A(k)$ are exactly copies of MECs of $\A$ (with the exception of MEC $\{\bot\}$). It is also easy to see that $\trancard{\A(k)} \leq |Q|$, for every $k$, and that $p_{\min}$ is the same in $\A$ and $\A(k)$ for every $k$. By Lemma \ref{lem:mec-tree-lemma} this means that $K\in \exp\left(\size{\A} ^{\calO(1)}\right) $ can be chosen the same for every $k$ and that it can be computed by a polynomial-time algorithm that takes $\A$ as its input. (This is important observation: we do not have to construct any MEC-acyclic OC-MDP in order to compute $K$.)%

To finish the proof of Proposition \ref{prop:upper-general-slowbound} it suffices to show that $$\lim_{k \rightarrow \infty}\left( \frac{i}{|{t}_{q^k}|}-K\cdot \E^{\pi(k)}_{(q,k)(i)}\switch \right)\geq \frac{i}{|t_q|} - K\cdot \E^{\pi}_{q(i)}W.$$ This is done in following two lemmas.

\begin{lemma}
 We have $\lim_{k \rightarrow \infty} \frac{1}{|t_{q^k}|} = \frac{1}{|t_q|}$.
\end{lemma}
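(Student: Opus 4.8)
Write $f_k(q) := 1/|t_{q^k}|$. The plan is to show that $(f_k(q))_{k\geq 1}$ is non-decreasing, bounded above by $1/|t_q|$, and has limit exactly $1/|t_q|$. The starting point is that the MECs of $\A(k)$ are precisely the level-$l$ copies $C^{(l)}$ of the MECs $C$ of $\A$ (for $1\leq l\leq k$) together with the sink MEC $\{\bot\}$, with $\bar{x}_{C^{(l)}}=\bar{x}_C$ and $\bar{x}_{\{\bot\}}=-1$. Since $\bar{x}_C\geq -1$ for every MEC $C$ (Lemma~\ref{lem:sol}), $1/|\bar{x}_C|\geq 1$, so $\bot$ carries the smallest possible ``trend reciprocal'' $1$. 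Recall also that $1/|t_q| = \inf_{\pi\in\Sigma_q^{<0}}\sum_{C\in\MEC(\A)}\Pr{\pi}{q(i)}{M_C}/|\bar{x}_C|$ and analogously for $\A(k)$, the infimum being attained by a memoryless deterministic strategy as in Claim~\ref{claim:fast-strategy-general}.

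For monotonicity I would simulate any $\pi\in\Sigma_{(q,k+1)}^{<0}$ on $\A(k+1)$ by a strategy on $\A(k)$ that ``runs out of levels one crossing earlier'': couple the two runs via the level shift until the simulating run hits $\bot$, which occurs exactly when the original run has performed $k$ crossings; runs that under $\pi$ settle at level $1$ in some $C^{(1)}$ instead settle at $\bot$, their contribution dropping from $1/|\bar{x}_C|\geq 1$ to $1$, while every other run contributes identically. Hence the simulating value does not exceed $\pi$'s, so $f_k(q)\leq f_{k+1}(q)$. The same cut-off applied to the memoryless optimal strategy $\sigma$ of $\A$, viewed on $\A(k)$ and truncated at $\bot$, gives $f_k(q)\leq 1/|t_q|$ (runs with $<k$ crossings are unchanged; runs with $\geq k$ crossings end at $\bot$, contributing $1\leq 1/|\bar{x}_C|$). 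Thus $(f_k(q))$ converges to some $L\leq 1/|t_q|$.

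It remains to prove $L\geq 1/|t_q|$, where the essential input is Lemma~\ref{lem:switches-bound}: a crossing is in particular a switch, so $N_{\mathrm{cross}}\leq\switch$ and, under \emph{any} strategy, the probability of at least $k$ crossings from $q$ is at most $\gamma_k:=\frac{8|Q|}{1-c}c^{k}\to 0$. Since an $\A(k)$-run reaches $\bot$ iff it performs $\geq k$ crossings, $\Pr{\pi^k}{(q,k)(i)}{\text{reach }\bot}\leq\gamma_k$ for every strategy $\pi^k$ on $\A(k)$ (couple the $\A(k)$-run with its projection to $\A$, forgetting the level). Now fix $\eps>0$, choose $k$ with $\gamma_k\max_r 1/|t_r|\leq\eps$, take an optimal $\pi^k$ for $\A(k)$ from $(q,k)$ (value $f_k(q)$), and let $\hat\pi\in\Sigma_q^{<0}$ on $\A$ mimic $\pi^k$ and, when the simulated run reaches $\bot$, switch to the optimal $\A$-strategy from the current state. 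Relative to $\pi^k$ the only change is that the mass $\Pr{\pi^k}{(q,k)(i)}{\text{reach }\bot}$ now contributes $1/|t_r|\leq\max_r 1/|t_r|$ rather than $1$, so $\hat\pi$'s value is at most $f_k(q)+\gamma_k\max_r 1/|t_r|\leq L+\eps$. Hence $1/|t_q|\leq L+\eps$ for all $\eps>0$, so $L=1/|t_q|$ and $\lim_{k}1/|t_{q^k}|=1/|t_q|$.

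The main obstacle is this last step. One should \emph{not} expect $f_k(q)$ to be eventually constant: because $\bot$ has the least possible trend $-1$, an optimal strategy on $\A(k)$ may deliberately drive probability toward $\bot$, so $f_k(q)$ typically approaches $1/|t_q|$ only in the limit. The argument works only because Lemma~\ref{lem:switches-bound} bounds this probability by $\gamma_k\to 0$ \emph{uniformly} over strategies; the remaining ingredients — the couplings and the checks that the constructed strategies lie in the relevant classes $\Sigma_q^{<0}$ — are routine.
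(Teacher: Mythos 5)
Your proof is correct and takes essentially the same route as the paper: you bound the deficit $1/|t_q| - 1/|t_{q^k}|$ by (a constant times) the probability of reaching $\bot$ in $\A(k)$, which you control via Lemma~\ref{lem:switches-bound} and the observation that hitting $\bot$ forces $\geq k$ switches, and the key construction — mimic the optimal $\A(k)$-strategy on $\A$ and switch to a fast $\A$-strategy upon (simulated) arrival at $\bot$ — is exactly the paper's $\rho'$. The monotonicity of $k\mapsto 1/|t_{q^k}|$ that you establish first is a nice extra, but the paper only needs (and you only really use) the trivial upper bound $1/|t_{q^k}|\leq 1/|t_q|$; the constant $\max_r 1/|t_r|$ you use in place of the paper's $1/|\bar{x}_0|-1$ is immaterial since both are fixed and finite.
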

\begin{proof}
 For any $k \geq 1$ we clearly have $|t_q|^{-1} \geq |t_{q^k}|^{-1}$, so it suffices to prove that $\lim_{k \rightarrow \infty} \frac{1}{|t_{q^k}|} \geq \frac{1}{|t_q|}$.
 Fix arbitrary $k\geq 1$.

Consider the ``fast`` counterless strategy $\rho^k$ from Proposition \ref{prop:lower-main}, that realizes the minimal trend $t_{q^k}$ in $\A(k)$. We define a new strategy $\rho'$ in $\A$ as follows: Initially, $\rho'$ behaves exactly as $\rho^k$, simply omitting the information on current depth stored in states of ${\A}(k)$. When strategy $\rho^k$ prescribes to switch to state $\bot$, the strategy $\rho'$ starts to behave as the ``fast`` counterless strategy $\sigma$ in $\A$ from Proposition \ref{prop:lower-main}. 

Denote $\reach_k(\bot)$ the event that run in $\A(k)$ reaches state $\bot$.
Simple computation, which uses the fact that, apart from $\{\bot\}$, all MECs of $\A(k)$ are copies of MECs of $\A$, reveals that 
\[
 \underbrace{\sum_{C \in \MEC(\A)}\frac{\Pr{\rho'}{q(i)}{M_C}}{|\bar{x}_C|}}_{\geq \frac{1}{|t_q|}} - \underbrace{\sum_{{C} \in \MEC(\A(k))}\frac{\Pr{\rho^k}{(q,k)(i)}{M_{C}}}{|\bar{x}_{{C}}|}}_{=\frac{1}{|t_{q^k}|}} \leq \Pr{\rho^k}{(q,k)(i)}{\reach_k(\bot)}\cdot\left(\frac{1}{|\bar{x}_0|} -1\right).
\]

From the construction of $\A(k)$ it easily follows that $\Pr{\rho^k}{(q,k)}{\reach_k(\bot)} \leq \Pr{\rho'}{q}{W\geq k}$. By Lemma \ref{lem:switches-bound} we have that $\Pr{\rho'}{q}{W\geq k} \rightarrow 0$ as $k \rightarrow \infty$. This gives us 
\begin{align*}
 \frac{1}{|t_q|} - \lim_{k \rightarrow \infty} \frac{1}{|t_{q^k}|} \leq 0,
\end{align*}
which proves the lemma. \qed

\begin{lemma}
 We have $\lim_{k \rightarrow \infty} \E^{\pi(k)}_{(q,k)(i)}\switch = \E^{\pi}_{q(i)}\switch$.
\end{lemma}
\begin{proof}
Fix arbitrary $k \geq 1$.
We have $\E^{\pi(k)}_{(q,k)(i)}\switch = \sum_{l \geq 1} l\cdot \Pr{\pi(k)}{\conf{(q,k)}{i}}{W=l}$ and \mbox{$\E^{\pi}_{q(i)}\switch=\sum_{l \geq 1} l\cdot \Pr{\pi}{\conf{q}{i}}{W=l}$}. From the construction of $\A(k)$ it easily follows that for all $l \leq k$ we have $\Pr{\pi(k)}{\conf{(q,k)}{i}}{W=l}=\Pr{\pi}{\conf{q}{i}}{W=l}$ and thus
\begin{align*}
 |\E^{\pi}_{q(i)}\switch - \E^{\pi(k)}_{(q,k)(i)}\switch| &\leq \sum_{l=k}^{\infty}l\cdot |\Pr{\pi}{\conf{q}{i}}{W=l} - \Pr{\pi(k)}{\conf{(q,k)}{i}}{W=l}|  \\ 
  &\leq \sum_{l=k}^{\infty}l\cdot \Pr{\pi}{\conf{q}{i}}{W=l} + \sum_{l=k}^{\infty}l\cdot \Pr{\pi(k)}{\conf{(q,k)}{i}}{W=l}.
\end{align*}
From Lemma \ref{lem:switches-bound} we have that $\Pr{\pi}{\conf{q}{i}}{W=l}\leq b\cdot c^l$ for suitable numbers $b$ and \mbox{$0<c<1$}. Moreover, 
$\Pr{\pi(k)}{\conf{(q,k)}{i}}{W=l}=0$ for all $l\geq 2\cdot(k+1)$. Also, since $\Pr{\pi(k)}{\conf{(q,k)}{i}}{W=l}=\Pr{\pi}{\conf{q}{i}}{W=l}$ for $l \leq k$, we have $\sum_{l=k}^{\infty} \Pr{\pi(k)}{\conf{(q,k)}{i}}{W=l} = \sum_{l=k}^{\infty} \Pr{\pi}{\conf{q}{i}}{W=l}\leq b\cdot \sum_{l=k}^{\infty}c^l$. Thus we can write $|\E^{\pi}_{q(i)}\switch - \E^{\pi(k)}_{(q,k)(i)}\switch|\leq b\cdot \sum_{l=k}^{\infty}  c^l + 2\cdot(k+1)\cdot b \cdot \sum_{l=k}^{\infty}c^l\leq 3\cdot(k+1)\cdot b \cdot \sum_{l=k}^{\infty}c^l $. From standard results on power series we know that $$\lim_{k \rightarrow \infty}(k+1) \cdot \sum_{l=k}^{\infty} c^l \leq \lim_{k \rightarrow \infty}\sum_{l=k}^{\infty}(l+1)\cdot c^l= 0$$ and thus also $\lim_{k \rightarrow \infty}|\E^{\pi}_{q(i)}\switch - \E^{\pi(k)}_{(q,k)(i)}\switch|=0$. This proves the lemma. \qed
\end{proof}

\end{proof}

\subsection{Proofs of Section~\ref{sec-lower}}
\label{app-lower}

\tikzstyle{max}=[thick,draw,minimum size=1.4em,inner sep=0em]
\tikzstyle{ran}=[circle,thick,draw,minimum size=1.4em,%
    inner sep=0em]
\tikzstyle{ent}=[circle,thick,fill=lightgray,draw,minimum size=1.4em,%
    inner sep=0em]
\tikzstyle{tran}=[thick,draw,->,>=stealth]
\tikzstyle{ac}=[circle,thick,draw,color=mybrown,minimum size=1ex,%
    inner sep=0em,fill]
\tikzstyle{act}=[circle,thick,draw,color=mybrown,minimum size=1ex,%
    inner sep=0em,fill]

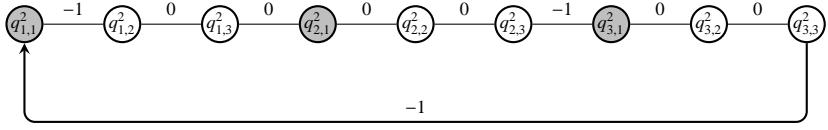
\begin{figure}[t]
\centering
\begin{tikzpicture}[x=1.3cm,y=1.3cm,font=\scriptsize]
\foreach \i/\s/\j/\k in {1/ent/1/1,2/ran/1/2,3/ran/1/3,4/ent/2/1,5/ran/2/2,%
   6/ran/2/3,7/ent/3/1,8/ran/3/2,9/ran/3/3}{%
     \node(q\i) at (\i,0) [\s] {$q^2_{\j,\k}$}; 
}
\foreach \i/\j/\l in {1/2/-1, 2/3/0, 3/4/0, 4/5/0, 5/6/0, 6/7/-1, 7/8/0,%
   8/9/0}{%
     \draw (q\i)  to [tran]  node[above] {$\l$}  (q\j);  
}
\draw (q9) [tran, rounded corners] -- +(0,-1) -- node[above] {$-1$} +(-8,-1)  
   -- (q1);
\end{tikzpicture}
\caption{The gadget for $x_2$ when $n=2$. Shadow states are the entry 
  points.}
\label{fig:hard}
\end{figure}

\begin{lemma}
\label{lem-lower}
  Given a propositional formula $\varphi$ in CNF, one can compute 
  a OC-MDP $\A$, a configuration $\conf{p}{K}$ of $\A$, and a number
  $N$ in time polynomial in $\size\varphi$ such that
  \begin{itemize}
  \item $N \leq |Q| \cdot K$, where $Q$ is the set of control 
     states of $\A$;
  \item if $\varphi$ is satisfiable, then $\val(\conf{p}{K}) = N-1$;
  \item if $\varphi$ is not satisfiable, then $\val(\conf{p}{K}) = N$.
  \end{itemize}
\end{lemma}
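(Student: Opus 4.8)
The plan is to reduce from the satisfiability problem for CNF formulas. Let $\varphi$ have variables $x_1,\dots,x_n$ and clauses $C_1,\dots,C_m$, let $p_1<\dots<p_n$ be the first $n$ primes (so each $p_i\in\calO(n\log n)$ and all the products $\prod_{j\le i}p_j$ have encoding size polynomial in $n$), and put $P=\prod_{i=1}^n p_i$. Following the technique of encoding propositional assignments into counter values from \cite{Kucera:OC-FS-weak-bisimilarity-TCS}, a truth assignment $\nu$ will be represented by the unique residue $r_\nu$ modulo $P$ with $p_i\mid r_\nu$ iff $\nu(x_i)=1$ (existence by the Chinese remainder theorem). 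The OC-MDP $\A$ will be built so that \emph{every} transition either decrements the counter by $1$ or leaves it unchanged. Consequently every terminating run starting in $\conf{p}{K}$ uses exactly $K$ decrementing transitions, so its length equals $K$ plus the number of ``no-op'' transitions it traverses; minimizing the expected termination time is therefore the same as minimizing the expected number of no-op transitions, and it suffices to design $\A$ so that this minimum equals a fixed constant $s$ (depending only on $\varphi$) when $\varphi$ is satisfiable, and $s+1$ when it is not. Then $N:=K+s+1$ will work.

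The automaton $\A$ will run in two stages. In the \emph{commitment} stage the controller --- starting from $\conf{p}{K}$, where $K$ is a suitable multiple of $P$ that is much larger than $nP$ --- passes through $n$ small gadgets, the $i$-th of which lets it decrement the counter by a multiple of $\prod_{j<i}p_j$ of its own choosing, thereby fixing the residue of the counter modulo $p_i$ to $0$ or to $1$ without disturbing the residues modulo $p_1,\dots,p_{i-1}$ already fixed (possible since $\gcd(\prod_{j<i}p_j,p_i)=1$). Each such gadget has size polynomial in $\size\varphi$ and uses only decrementing transitions; after the whole stage the counter is $\equiv r_\nu\pmod P$ for the assignment $\nu$ chosen by the controller, and the controller is past the point where it can freely readjust this residue. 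In the \emph{verification} stage the remaining descent of the counter to $0$ is forced, and it is wired so that the committed assignment is read back and $\varphi$ is evaluated under it, using the divisibility-testing gadgets indicated in Figure~\ref{fig:hard} (composed along the clause structure of $\varphi$, using their $p_i$ entry points). The wiring is arranged so that the descent uses no extra no-op transition precisely when $\nu$ satisfies $\varphi$; if $\nu$ falsifies some clause, exactly one additional no-op transition is charged, at the first falsified clause, after which the run is redirected straight to the terminating descent so that no further penalty can accumulate.

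Granting this construction, correctness is immediate. If $\varphi$ is satisfiable, the controller commits to a satisfying $\nu$ and incurs no penalty, so $\val(\conf{p}{K})\le K+s$; and no strategy can do better, because the controller's only genuine choice is the committed assignment (choosing which literal of a clause to ``blame'' cannot help a clause that $\nu$ falsifies), so every run has length at least $K+s$, whence $\val(\conf{p}{K})=K+s=N-1$. If $\varphi$ is unsatisfiable, every committed $\nu$ falsifies some clause, so exactly one penalty is unavoidable and also sufficient, giving $\val(\conf{p}{K})=K+s+1=N$. Finally, by construction of $\A$ no configuration is visited twice along a terminating run (the counter never increases, and the no-op transitions within a fixed counter value are acyclic); since every visited configuration other than the terminating one has counter value in $\{1,\dots,K\}$, there are at most $|Q|\cdot K$ of them, so $N\le|Q|\cdot K$. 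All of $P$, $K$, $s$, $N$ have encoding size polynomial in $\size\varphi$, and $\A$ is clearly constructible in polynomial time.

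The technical heart --- and the main obstacle --- is the design and exact step-count analysis of the verification gadgets of Figure~\ref{fig:hard}. Such a gadget must make the run's behavior depend on the residue of an \emph{unbounded} counter value modulo a fixed prime $p_i$; since a one-counter automaton's control state cannot reflect such a residue unless the counter actually becomes small, the reading necessarily consumes counter, and the gadgets for the individual literals and clauses must be interlocked (via the entry points in Figure~\ref{fig:hard}) so that the single forced descent of the counter to zero correctly evaluates $\varphi$ under the committed assignment. Carrying this out while simultaneously (i) keeping $\A$ of polynomial size, (ii) using only counter updates in $\{-1,0\}$, and --- most delicately --- (iii) making the number of no-op transitions differ by \emph{exactly} one between the satisfied and the falsified case, so that $\val(\conf{p}{K})$ jumps by precisely $1$ rather than by some larger constant, is where the work lies; I would do it by combining the assignment-encoding construction of \cite{Kucera:OC-FS-weak-bisimilarity-TCS} with the gadgets depicted in Figure~\ref{fig:hard}.
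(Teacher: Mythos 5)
Your proposal sketches a plausible high-level plan, but the construction that actually carries the lemma is missing, and you say so yourself (``The technical heart --- and the main obstacle --- is the design and exact step-count analysis of the verification gadgets \dots{} I would do it by combining \dots''). That is precisely the content of the lemma, so what you have is an outline, not a proof. Two concrete gaps stand out. First, you never explain where stochasticity enters, yet the paper's construction depends on it in an essential way: the state $q_\varphi$ branches to the $n$ clause states each with probability $1/n$, and this averaging is exactly what turns ``at least one clause falsified'' into an \emph{expected-time} penalty of exactly $1$ (the falsified clause costs $n$ extra steps, weighted by $1/n$). A purely deterministic ``first falsified clause is charged one no-op'' scheme, as you describe, has no obvious reason to produce a gap of exactly $1$, and you do not argue it. Second, your architecture is genuinely different from the paper's and the difference matters. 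You propose a one-shot \emph{commitment} stage in which the controller pre-decrements the counter to fix a residue, followed by a forced verification descent. The paper instead makes the current counter value itself encode the assignment $\nu_k$ (via the optimal entry point of each $\pi_i$-gadget shifting with $k \bmod \pi_i$), and crucially gives the controller a \emph{retry loop} (the $d_1,\dots,d_n$ chain, costing $n{+}1$ steps to lower $k$ by one) so it can search for a $k$ with $\nu_k(\varphi)$ true. The correctness argument is an induction on $k$ over this retry structure. Your commitment stage, by contrast, would have to let the controller decrement by an arbitrary multiple of $\prod_{j<i}p_j$ at each $i$, which already requires a nontrivial gadget you do not exhibit, and you would then need to prove that the controller gains nothing by ``cheating'' --- e.g.\ by committing to an inconsistent residue or by stopping the commitment early --- none of which is addressed.

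To be fair, parts of your setup are sound and match the paper in spirit: using the first $m$ primes and CRT to encode assignments into residues of the counter is exactly the technique of \cite{Kucera:OC-FS-weak-bisimilarity-TCS} that the paper reuses; restricting counter updates to $\{-1,0\}$ so that run length $= K + \#(\text{no-ops})$ is also how the paper's gadgets behave; and your $|Q|\cdot K$ bound via ``no configuration repeats on a terminating run'' is a reasonable way to get the first bullet of the lemma. But without the actual gadget of Figure~\ref{fig:hard}, the exact identity $\val(\conf{q^i_{j,1}}{k}) = k(n{+}1)-n$ at the optimal entry point (versus $k(n{+}1)$ elsewhere), the probabilistic $q_\varphi$ state, the $d$-chain retry mechanism, and the induction over $k$ that ties them together, there is no argument here that the satisfiable/unsatisfiable gap is exactly $1$. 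You should either reproduce (or redesign) the gadget and carry out the step-count induction, or acknowledge that the lemma is not proved.
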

\begin{proof}
  Let $\varphi \equiv C_1 \wedge \cdots \wedge C_n$ where
  $C_1,\ldots,C_n$ are clauses over propositional variables
  $x_1,\ldots,x_m$. We may safely assume that $n \geq 5$. 
  Let $\pi_1,\ldots,\pi_m$ be the first $m$
  prime numbers. For every $x_i$, where $1 \leq i \leq m$, we 
  construct the gadget shown in Fig.~\ref{fig:hard}. That is, we
  fix $\pi_i \cdot (n+1)$ fresh stochastic control states 
  $q^i_{j,\ell}$, where $1 \leq j \leq \pi_i$ and $1 \leq \ell \leq n+1$,
  and connect them by transitions in the following way:
  \begin{itemize}
  \item $q^i_{1,1} \ltran{-1} q^i_{1,2}$,\quad 
     $q^i_{1,\ell} \ltran{0} q^i_{1,\ell+1}$ for all $2 \leq \ell \leq n$, 
     \quad $q^i_{1,n+1} \ltran{0} q^i_{2,1}$;\\[-1.5ex]
  \item for all $2 \leq j \leq \pi_i$ we include the following transitions:
     \begin{itemize}
        \item $q^i_{j,\ell} \ltran{0} q^i_{j,\ell+1}$ for all $1 \leq \ell \leq n$,
        \item $q^i_{j,n+1} \ltran{-1} q^i_{j',1}$, where $j'$ is either
           $j+1$ or $1$ depending on whether $j < \pi_i$ or not, respectively. 
     \end{itemize}
  \end{itemize}
  Since each $q^i_{j,\ell}$ has exactly one successor, all of the above
  transitions have probability one. Also note that the total size of
  the constructed gadgets is polynomial in $\size\varphi$ 
  because $\sum_{i=1}^m \pi_i$
  is $\calO(m^2\, \log m)$ (see, e.g., \cite{BS:book-number}).

  The control states of the form $q^i_{j,1}$, where $1 \leq j \leq \pi_i$,
  are called the \emph{entry points} for $x_i$. Note that in $q^i_{j,1}$,
  the counter is decremented in just one transition, while in the
  other entry points we need $n+1$ transitions to decrement the counter.

  An important technical observation about the entry points is the 
  following: For every $k \geq 1$ and $1 \leq i \leq n$, there is
  exactly one \emph{optimal} entry point $q^i_{j,1}$ such that
  $\val(\conf{q^i_{j,1}}{k}) = k(n+1) - n$, and for the other entry
  points $q^i_{j',1}$ we have that $\val(\conf{q^i_{j',1}}{k}) = k(n+1)$.
  To see this, consider the (unique) $k'$ such that $1 \leq k' \leq \pi_i$
  and $k = k' + c \cdot \pi_i$ for some $c \geq 0$. We put
  $j = 1$ if $k' =1$, otherwise $j = \pi_i - k' +2$. Now one can 
  easily verify (with the help of Fig.~\ref{fig:hard}) that 
  $\val(\conf{q^i_{j,1}}{k}) = k(n+1) - n$, and 
  $\val(\conf{q^i_{j',1}}{k}) = k(n+1)$ for the other entry points 
  $q^i_{j',1}$. 

  Every $k \geq 1$ encodes a unique assignment 
  $\nu_k : \{x_1,\ldots,x_m\} \rightarrow \{\mathit{true},\mathit{false}\}$
  defined as follows: For every $1\leq i \leq m$ we put 
  $\nu_k(x_i) = \mathit{true}$ iff $q^i_{1,1}$ is the optimal entry 
  point for~$k$. Also observe that for every assignment
  $\nu : \{x_1,\ldots,x_m\} \rightarrow \{\mathit{true},\mathit{false}\}$ 
  there is some $k \leq \prod_{i = 1}^{m} \pi_i$ such that $\nu = \nu_k$.

  We proceed by encoding the structure of $C_1,\ldots,C_n$. For each clause
  \mbox{$C_\ell \equiv y_{i_1} \vee \cdots \vee y_{i_t}$}, where every $y_{i_h}$
  is either $x_{i_h}$ or $\neg x_{i_h}$, we fix a fresh non-deterministic
  control state $c_\ell$ and add the following transitions for every 
  $1 \leq h \leq t$:
  \begin{itemize}
  \item if $y_{i_h} \equiv x_{i_h}$, then we add a transition 
     $c_\ell \ltran{0} q^{i_h}_{1,1}$;  
  \item if $y_{i_h} \equiv \neg x_{i_h}$, then we add a transition 
     $c_\ell \ltran{0} q^{i_h}_{j,1}$ for every $2 \leq j \leq \pi_{i_h}$.
  \end{itemize}
  Using the definition of $\nu_k$ and the above observation about the entry 
  points, we immediately obtain that, for all $1 \leq \ell \leq n$ 
  and $k > 1$,
  \begin{itemize}
  \item $\nu_k(C_\ell) = \mathit{true}$ \ iff \ 
     $\val(\conf{c_\ell}{k}) = k(n{+}1) - n + 1$;
  \item $\nu_k(C_\ell) = \mathit{false}$ \ iff \  
     $\val(\conf{c_\ell}{k}) = k(n{+}1) + 1$.
  \end{itemize}
  Now, we add a fresh stochastic control state $q_\varphi$ such that
  $q_\varphi \ltran{0} c_\ell$ for every $1 \leq \ell \leq n$. The
  probability of each of these transitions is $1/n$. For every $k \geq 1$
  we have that
  \begin{itemize}
  \item if $\nu_k(C_\ell) = \mathit{true}$, then 
     $\val(\conf{q_\varphi}{k}) = k(n{+}1) - n + 2$;
  \item if $\nu_k(C_\ell) = \mathit{false}$, then at least one
     clause is false, which implies 
     \[
        \val(\conf{q_\varphi}{k}) 
           \quad \geq \quad  
        \frac{n-1}{n} \bigg ( k(n{+}1) - n + 2 \bigg ) 
                + \frac{1}{n} \bigg (k(n{+}1) + 2 \bigg ) 
           \quad = \quad
        k(n{+}1) - n + 3.
    \]
  \end{itemize}
  The construction of $\A$ is completed by adding a non-deterministic
  control state $p$ and a family of stochastic control states
  $d_1,\ldots,d_{n}$, where the transitions are defined as follows
  (here we need that $n \geq 5$):
  \begin{itemize}
  \item $p \ltran{0} c_\varphi$,\quad $p \ltran{0} d_1$,
  \item $d_4 \ltran{-1} d_5$, \quad $d_{n} \ltran{0} p$;,
  \item $d_j \ltran{0} d_{j+1}$ for all $1 \leq j < n$, $j \neq 4$.
  \end{itemize}
  Let $\sigma$ be a pure memoryless strategy in $\M^{\infty}_\A$ such that
  \begin{itemize}
  \item in every configuration of the form $\conf{c_\ell}{k}$, the
    strategy $\sigma$ selects a transition to some optimal entry 
    point for~$k$. If all transitions lead to non-optimal
    entry points, any of them can be selected;
  \item in a configuration of the form $\conf{p}{k}$, the strategy
    $\sigma$ selects either the transition leading to $\conf{q_\varphi}{k}$
    or the transition leading to $\conf{d_1}{k}$, depending on whether
    $\nu_k(\varphi) = \mathit{true}$ or not, respectively.
  \end{itemize}
  Obviously, $\sigma$ is optimal in all configurations of the form
  $\conf{c_\ell}{k}$, and hence it is also optimal in all configurations
  of the form $\conf{q_\varphi}{k}$. By induction on $k$, we show that 
  $\sigma$ is optimal in $\conf{p}{k}$, and $\val(\conf{p}{k})$ equals either 
  $k(n{+}1) -n +3$ or 
  $k(n{+}1)-n+4$, depending on whether $\nu_{k'}(\varphi) = \mathit{true}$
  for some $1\leq k' \leq k$ or not, respectively. 
  \begin{itemize}
  \item $\mathbf{k=1}$. If $\nu_{1}(\varphi) = \mathit{true}$, then
    $\E^\sigma \conf{p}{1} = 4$. Further, it cannot be that
    $\E^{\sigma'} \conf{p}{1} < 4$ for any pure strategy $\sigma'$, because
    \begin{itemize}   
    \item if $\sigma'$ selects the transition from $\conf{p}{1}$ to 
       $\conf{d_1}{1}$, then inevitably $\E^{\sigma'} \conf{p}{1} = 5$;
    \item  if $\sigma'$ selects the transition from $\conf{p}{1}$ to 
       $\conf{q_\varphi}{1}$, then $\E^{\sigma'} \conf{p}{1}$ cannot be 
       less than~$4$ because $\sigma$ plays optimally in $\conf{q_\varphi}{1}$. 
     \end{itemize}
    If $\nu_{1}(\varphi) = \mathit{false}$, then $\E^\sigma \conf{p}{1} = 5$,
    and this outcome cannot be improved by playing the transition from
    $\conf{p}{1}$ to $\conf{q_\varphi}{1}$ because $\sigma$ is optimal
    in $\conf{q_\varphi}{1}$ and $\E^\sigma \conf{q_\varphi}{1} \geq 4$.

    Hence, $\sigma$ is optimal in $\conf{p}{1}$ and $\val(\conf{p}{1})$
    is either $4$ or $5$ depending whether $\nu_{1}(\varphi) = \mathit{true}$
    or not, respectively.
  \item \textbf{Induction step.} Let us consider a configuration
    $\conf{p}{k{+}1}$. If $\nu_{k+1}(\varphi) = \mathit{true}$, then
    $\E^\sigma \conf{p}{k{+}1} = (k{+}1)(n{+}1)-n+3$. Since $\sigma$
    plays optimally in $\conf{q_\varphi}{k{+}1}$, this outcome cannot
    be improved by any pure strategy $\sigma'$ which selects the transition
    from $\conf{p}{k{+}1}$ to  $\conf{q_\varphi}{k{+}1}$. If $\sigma'$
    selects the transition from $\conf{p}{k{+}1}$ to $\conf{d_1}{k{+}1}$,
    then $\conf{p}{k}$ is inevitably reached in exactly $n+1$ transitions.
    By induction hypothesis, this leads to the outcome at least
    $(n{+}1) + k(n{+}1)-n+3 = (k{+}1)(n{+}1)-n+3$. Hence, $\sigma$ is
    optimal and $\val(\conf{p}{k{+}1}) = (k{+}1)(n{+}1)-n+3$.

    If $\nu_{k+1}(\varphi) = \mathit{false}$, then (by applying induction
    hypothesis) $\E^\sigma \conf{p}{k{+}1}$ is equal either
    to $(n{+}1) + k(n{+}1) -n +3$ or to $(n{+}1) + k(n{+}1) -n + 4$, depending
    on whether $\nu_{k'}(\varphi) = \mathit{true}$ for some $1 \leq k' \leq k$
    or not, respectively. In both cases, this yields the desired outcome
    which cannot be improved by using the transition from $\conf{p}{k{+}1}$
    to $\conf{q_\varphi}{k{+}1}$, because then the outcome is inevitably
    at least $(k{+}1)(n{+}1)- n + 4$. 
  \end{itemize}
Now, it suffices to put $K = \prod_{i=1}^m \pi_i$ and $N = K(n{+}1)-n+4$.
Since $\pi_i$ is $\calO(i \log(i))$, the encoding size
of $\A$ is polynomial in $\size\varphi$, and the length of the binary
encoding of $K$ and $N$ is also polynomial in $\size\varphi$.
\qed
\end{proof}

By Lemma~\ref{lem-lower}, the existence of an algorithm which computes 
$\val(\conf{p}{k})$ up to an \emph{absolute} error strictly less than 
$1/2$ in time $\calO(f)$ implies the existence of an algorithm 
for \textsc{SAT} and \textsc{UNSAT} whose time complexity is
$\calO(f \circ p)$, where $p$ is a polynomial. The same can be said 
about an algorithm which computes $\val(\conf{p}{k})$ up to a
\emph{relative} error strictly less than $1/(2 \cdot |Q| \cdot k)$,
where $Q$ is the set of control states of~$\A$. Also note that 
stochastic states in $\A$ have outgoing edges whose probability
is $1$ or $1/n$, but it is trivial to modify the construction so that
all of these probabilities are equal to $1/2$. So, Lemma~\ref{lem-lower}
proves Theorem~\ref{thm-hard} for configurations of the form 
$\conf{q}{i}$. Now we show that we can even take $i=1$.

\begin{figure}[t]
\centering
\begin{tikzpicture}[x=1.3cm,y=1.3cm,font=\scriptsize]
\foreach \i/\j in {4/0,3/1,2/2,1/3,0/4}{%
     \node(q\i) at (\j,0) [ran] {$p_{\i}$};
}
\foreach \i/\j in {4/3, 3/2, 2/1, 1/0}{%
     \draw [->] (q\i) to [tran]  node[above] {$\frac{1}{2}$}  (q\j);
}
\foreach \i/\j in {3/2, 2/1, 1/0}{%
     \draw [->] (q\i) to [tran, bend left=60]  node[above] {$\frac{1}{2}$}  (q4);
}
\draw [->] (q4) to [tran, loop left] node[left] {$\frac{1}{2}$} (q4);
\draw [->] (q0) to [tran, loop right] node[right] {$1$} (q0);
\end{tikzpicture}
\caption{The example gadget $\G_4$.}
\label{fig:long-chain}
\end{figure}
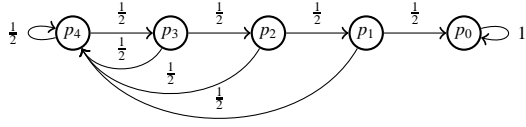

Let us consider the following OC-MDP $\G_k$: the set of control
states is $\{p_0,\ldots,p_k\}$, all of these states are stochastic, 
and there a transition from $p_i$ to $p_{i-1}$ and 
$p_k$ for all $i \geq 1$. All transitions increment the counter 
by $1$ and have probability $\frac{1}{2}$. The
state $p_0$ is a dead-end with a self-loop.
An example for $k=4$ is given in Figure~\ref{fig:long-chain}.

\begin{lemma}
\label{lem:high-prob}
  With probability higher than $\frac{1}{4}$, a run initiated in
  $\conf{p_k}{1}$ visits a configuration $\conf{p_0}{i}$ where
  $i \geq 2^k$.
\end{lemma}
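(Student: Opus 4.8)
The plan is to reduce the statement to a clean estimate on a geometric sum of geometric random variables. Since every transition of $\G_k$ increments the counter by one, the first time a run initiated in $\conf{p_k}{1}$ visits $p_0$ its configuration is $\conf{p_0}{1+\tau}$, where $\tau$ denotes the number of transitions taken to reach $p_0$ (which is finite almost surely, as $p_0$ is the unique absorbing state and is reachable from $p_k$). Consequently the event ``some $\conf{p_0}{i}$ with $i\geq 2^k$ is visited'' contains the event $\{\tau\geq 2^k-1\}$ up to a null set, so it suffices to prove $\mathbb{P}(\tau\geq 2^k-1)>\tfrac14$, where $\mathbb{P}$ is the probability measure on runs from $\conf{p_k}{1}$.

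First I would bound $\tau$ from below by $S$, the number of time steps the run spends in state $p_k$ before absorption in $p_0$; clearly $\tau\geq S$ since each of the first $\tau$ steps is spent at some $p_i$ with $i\leq k$. The key structural observation is that $S$ is a \emph{geometric sum of i.i.d.\ geometric variables}. A maximal block of consecutive steps at $p_k$ has length $G\sim\mathrm{Geom}(\tfrac12)$ on $\{1,2,\dots\}$, because at $p_k$ each step independently stays (probability $\tfrac12$) or moves to $p_{k-1}$ (probability $\tfrac12$). After leaving $p_k$ to $p_{k-1}$, the run reaches $p_0$ before returning to $p_k$ exactly when it performs $k-1$ consecutive decrements $p_{k-1}\to\cdots\to p_0$, which has probability $2^{-(k-1)}$; otherwise it returns to $p_k$ and a fresh block begins. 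The block lengths depend only on the coin tosses made at $p_k$, whereas the number $N$ of blocks depends only on the coin tosses made at $p_1,\dots,p_{k-1}$; hence $N\sim\mathrm{Geom}(2^{1-k})$ is independent of the i.i.d.\ sequence $G_1,G_2,\dots$, and $S=\sum_{i=1}^{N}G_i$.

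Next I would compute the law of $S$ exactly. Writing $p=2^{1-k}$ (so $p/2=2^{-k}$) and using $\mathbb{P}\bigl(\sum_{i=1}^{n}G_i\leq m\bigr)=\mathbb{P}\bigl(\mathrm{Bin}(m,\tfrac12)\geq n\bigr)$, a swap of summation order and summation of a geometric series yield, for every $m\geq 0$,
\[
 \mathbb{P}(S\leq m)\;=\;2^{-m}\sum_{j=1}^{m}\binom{m}{j}\bigl(1-(1-p)^{j}\bigr)\;=\;1-2^{-m}(2-p)^{m}\;=\;1-(1-2^{-k})^{m},
\]
where the middle step is just the binomial theorem applied to $\sum_j\binom{m}{j}(1-p)^j=(2-p)^m$. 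Therefore
\[
 \mathbb{P}(\tau\geq 2^{k}-1)\;\geq\;\mathbb{P}(S\geq 2^{k}-1)\;=\;(1-2^{-k})^{2^{k}-2}.
\]

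Finally I would check $(1-2^{-k})^{2^{k}-2}>\tfrac14$ for all $k\geq 1$: for $k=1$ the left-hand side is $1$, and for $k\geq 2$ the elementary inequality $\ln(1-x)>-x/(1-x)$ (valid for $0<x<1$, e.g.\ since $-\ln(1-x)=\int_0^x(1-t)^{-1}\,dt<x/(1-x)$) with $x=2^{-k}$ gives
\[
 (2^{k}-2)\ln(1-2^{-k})\;>\;-(2^{k}-2)\cdot\frac{2^{-k}}{1-2^{-k}}\;=\;-\,\frac{1-2^{1-k}}{1-2^{-k}}\;>\;-1,
\]
because $0\leq 1-2^{1-k}<1-2^{-k}$. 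Hence $(1-2^{-k})^{2^{k}-2}>e^{-1}>\tfrac14$, completing the argument. The main obstacle is the distributional identity for $S$: one must resist cruder bounds such as a union bound over length-$k$ windows of decrement choices, or the estimate $\tau\geq N$, which respectively yield probabilities tending to $0$ and to $e^{-2}<\tfrac14$. Keeping \emph{both} the number of visits to $p_k$ and the mean-$2$ block lengths is exactly what makes $\mathbb{E}[S]=2^{k}$ and pushes the bound past $\tfrac14$.
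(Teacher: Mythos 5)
Your proof is correct. The paper's own argument is much shorter: it asserts that the per-step probability of terminating is at most $2^{-k}$ (since a run can only reach $p_0$ by a straight descent of $k$ transitions from $p_k$, otherwise it restarts) and concludes directly that the probability of not reaching $p_0$ in $i$ steps is at least $(1-2^{-k})^i$, then evaluates at $i=2^k$ and uses monotonicity of $(1-1/n)^n$. That inference is not quite valid as written, since the conditional per-step hitting probability is $0$ or $1/2$ depending on the current state, not uniformly $\le 2^{-k}$; what makes the inequality true is precisely the renewal structure at $p_k$ that you isolate. Your argument supplies the missing justification: you identify $S$, the number of time steps the run spends at $p_k$, observe $\tau\ge S$, and compute $S\sim\mathrm{Geom}(2^{-k})$ exactly via the geometric-sum-of-geometrics identity, which yields $\mathbb{P}(\tau>i)\ge(1-2^{-k})^i$ rigorously. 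Two small remarks. First, the geometric-compound computation can be shortcut: each of the $S$ visits to $p_k$ ends in absorption (before the run returns to $p_k$) with conditional probability exactly $\tfrac12\cdot 2^{-(k-1)}=2^{-k}$, so $S\sim\mathrm{Geom}(2^{-k})$ follows directly from the strong Markov property without decomposing into blocks $G_i$ and the variable $N$. Second, by using the threshold $2^k-1$ rather than $2^k$ you get $(1-2^{-k})^{2^k-2}>e^{-1}$ for all $k\ge 1$, which strictly improves on the paper's $(1-2^{-k})^{2^k}$, a bound that only reaches $\tfrac14$ exactly at $k=1$ and hence technically misses the strict inequality claimed in the lemma for that boundary case.
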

\begin{proof}
  Notice that the probability of terminating in one step is less or equal to
  $2^{-k}$, because in order to reach $p_0$ from $p_k$ the process has to take a
  sequence of $k$ transitions, as otherwise it restarts at $p_k$. Therefore, the
  probability that the process does not reach $p_0$ in $i$ steps is greater or
  equal to $(1-2^{-k})^i$. For $i = 2^k$ we have that this value is
  $(1-2^{-k})^{2^k}$, but it is well-known that the sequence
  $(1-\frac{1}{n})^n$ is increasing in $n$ and converges to $\frac{1}{e}$. As
  for $n=2$ this expression is equal $\frac{1}{4}$, for $k \geq 1$ we get that
  the probability of visiting $p_0$ with the counter value higher 
  than $2^k$ is at least $\frac{1}{4}$.
\qed
\end{proof}

\noindent
We also need the following lemma:
\begin{lemma}
\label{lem:prime-prod}
  $\prod_{i=1}^m \pi_i \leq 2^{m^2}$, where $\pi_m$ is the $m$-th 
  smallest prime number.
\end{lemma}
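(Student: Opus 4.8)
The plan is to bound each prime individually and then take the product. The only nontrivial input will be Bertrand's postulate: for every integer $n\geq 1$ there is a prime strictly between $n$ and $2n$ (for $n=1$ take $p=2$), which the paper may freely invoke exactly as it already invokes \cite{BS:book-number} for the asymptotics of $\sum_{i}\pi_i$. Applying Bertrand's postulate with $n=\pi_i$ yields a prime in the interval $(\pi_i,2\pi_i)$, and since $\pi_{i+1}$ is the least prime exceeding $\pi_i$, this gives $\pi_{i+1}< 2\pi_i$ for every $i\geq 1$.

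From here I would argue by induction on $i$ that $\pi_i\leq 2^i$: the base case is $\pi_1=2=2^1$, and the inductive step is $\pi_{i+1}<2\pi_i\leq 2\cdot 2^i=2^{i+1}$. Multiplying these bounds over $i=1,\dots,m$ gives
\[
\prod_{i=1}^m \pi_i \;\leq\; \prod_{i=1}^m 2^{i}\;=\;2^{\,\sum_{i=1}^m i}\;=\;2^{\,m(m+1)/2}\;\leq\;2^{\,m^2},
\]
where the final inequality holds because $m(m+1)/2\leq m^2$ for every $m\geq 1$. This completes the argument.

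There is essentially no obstacle here: the statement is a routine consequence of Bertrand's postulate. The only ``design choice'' is whether to avoid citing Bertrand, e.g.\ via the elementary primorial estimate $\prod_{p\leq x}p\leq 4^{x}$; but deducing the claimed bound from that estimate would require the stronger fact $\pi_m\leq m^2/2$, which is itself not available without a Chebyshev-type bound. Hence the Bertrand route is both the shortest and the cleanest, and it is consistent with the number-theoretic facts already used in the proof of Lemma~\ref{lem-lower}.
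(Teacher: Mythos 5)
Your proof is correct and follows essentially the same route as the paper: both use Bertrand's postulate to establish $\pi_i\leq 2^i$ (the paper via disjoint intervals $(2,4),(4,8),\ldots$, you via the induction $\pi_{i+1}<2\pi_i$) and then multiply to get $2^{m(m+1)/2}\leq 2^{m^2}$.
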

\begin{proof}
  Of course $\pi_1 = 2$. Bertrand's postulate states that for every $k
  > 1$ there is at least one prime number $p$ such that $k < p <
  2k$. From this we know that there is at least one prime in the
  following disjoint intervals $(2,4)$, $(4,8)$, $(8,16)$, $\ldots$
  which gives us an estimate on the $\pi_i \leq 2^i$. Therefore,
  $\prod_{i=1}^{m} \pi_i \leq \prod_{i=1}^m 2^i = 2^{m(m+1)/2} \leq
  2^{m^2}$ for all $m \geq 1$.  \qed
\end{proof}

\noindent
With the help of Lemma~\ref{lem:high-prob} and Lemma~\ref{lem:prime-prod}, 
we can now prove the following:

\begin{lemma}
\label{lem-increase}  Given a propositional formula $\varphi$ in CNF, 
  one can compute a OC-MDP $\B$ that uses only probabilities 
  $\frac{1}{2}$ on transitions such that being able to approximate 
  $\val(\conf{q}{1})$ up to the absolute error $\frac{1}{8}$ or the
  relative error $2^{-|Q|}$, where $|Q|$ is the number of control
  states of $\B$, suffices to establish whether $\varphi$ is satisfiable
  or not.
\end{lemma}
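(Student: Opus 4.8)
The plan is to prepend to the OC-MDP $\A$ of Lemma~\ref{lem-lower} a short ``amplifying'' chain of the type $\G_k$ in Figure~\ref{fig:long-chain}, which, starting from counter value~$1$, pushes the counter up to an exponential value with probability more than~$\tfrac14$ before handing the run over to~$\A$. Concretely, first apply Lemma~\ref{lem-lower} to $\varphi$ to obtain $\A$, the configuration $\conf{p}{K}$ with $K = \prod_{i=1}^m \pi_i$, and $N$; recall from the proof of Lemma~\ref{lem-lower} that for \emph{every} $j \ge 1$ one has
\[
  \val(\conf{p}{j}) \;=\; (n+1)j - n + 3 + e_j, \qquad e_j \in \{0,1\},
\]
where $e_j = 0$ iff $\nu_{k'}(\varphi) = \mathit{true}$ for some $1 \le k' \le j$. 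Since every assignment is $\nu_{k'}$ for some $k' \le K$, this gives $e_j = 1$ for all $j$ when $\varphi$ is unsatisfiable, and $e_j = 0$ for all $j \ge K$ when $\varphi$ is satisfiable. I would then fix $k := m^2$, so $2^k \ge K$ by Lemma~\ref{lem:prime-prod}, let $\B$ be $\A$ together with a fresh copy of $\G_k$ in which the self-loop on $p_0$ is replaced by a single $0$-transition $p_0 \ltran{0} p$ (turning $p_0$ deterministic), and take $q := p_k$ as the initial control state. As noted after Lemma~\ref{lem-lower}, the probabilities $1$ and $1/n$ occurring in $\A$ can be made $\tfrac12$ by standard gadgets, so $\B$ uses only probability~$\tfrac12$ and $\size\B$ is polynomial in $\size\varphi$.

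The second step is to compute $\val(\conf{q}{1})$ exactly. A run from $\conf{p_k}{1}$ performs a choice-free random walk on $\{p_1,\dots,p_k\}$, incrementing the counter at every step, until it first reaches $p_0$; a routine first-step analysis gives expected hitting time $2^{k+1}-2$, so $p$ is entered after $j$ steps with a counter value $j$ of mean $\E[j] = 2^{k+1}-1$. Since the walk involves no non-determinism, splitting the termination time at the first visit to $p$ yields
\[
  \val(\conf{q}{1}) \;=\; \E[j] + \sum_j q_j\,\val(\conf{p}{j}) \;=\; D + s,
\]
where $q_j$ is the probability of entering $p$ with counter $j$, $s := \sum_j q_j e_j$, and $D := (n+2)(2^{k+1}-1) - n + 3$ is a constant computable in time polynomial in $\size\varphi$ (its binary length is $O(m^2 + \log n)$). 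If $\varphi$ is unsatisfiable then $e_j \equiv 1$, hence $s = 1$ and $\val(\conf{q}{1}) = D + 1$. If $\varphi$ is satisfiable then $e_j = 0$ for every $j \ge K$, hence $s \le \mathbb{P}[j < K] \le \mathbb{P}[j < 2^k]$, and by Lemma~\ref{lem:high-prob} the last quantity is $< \tfrac34$, so $\val(\conf{q}{1}) < D + \tfrac34$. Thus $\val(\conf{q}{1})$ equals $D+1$ if $\varphi$ is unsatisfiable and is strictly below $D + \tfrac34$ otherwise --- a gap of more than~$\tfrac14$.

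Finally I would extract the hardness statement. Any constant $\nu$ with $|\nu - \val(\conf{q}{1})| \le \tfrac18$ satisfies $\nu < D + \tfrac78$ when $\varphi$ is satisfiable and $\nu \ge D + \tfrac78$ when it is not; since $D$ is computable in polynomial time, the test ``$\nu < D + \tfrac78$'' decides satisfiability, which settles the absolute-error part. For the relative error, note that $\val(\conf{q}{1}) \le D + 1$ lies far below $2^{|Q|}$ --- its binary length is $O(m^2 + \log n)$, whereas $|Q| \ge (k+1) + \pi_1(n+1) > m^2 + 2(n+1)$ and $n \ge 5$ --- so $(D+1)\,2^{-|Q|} < \tfrac18$; hence a relative error of $2^{-|Q|}$ forces an absolute error below $\tfrac18$, and the same test applies. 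I expect the only genuinely delicate point to be the second step: one must pin down exactly the amplifier's expected running time and expected exit-counter, and --- crucially --- argue that the ``more than $\tfrac14$ of the probability mass exits with counter $\ge K$'' property of Lemma~\ref{lem:high-prob} survives the above decomposition, so that the $\{0,1\}$-valued satisfiability signal $e_j$ still produces a genuine constant gap rather than being swamped by the huge additive term~$D$.
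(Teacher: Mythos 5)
Your proof is correct and follows essentially the same route as the paper's: prefix the $\G_{m^2}$ amplifier of Figure~\ref{fig:long-chain} to the $\A$ of Lemma~\ref{lem-lower}, split the expected termination time at the first entry to $p$, and invoke Lemma~\ref{lem:high-prob} to show that more than a quarter of the probability mass exits the amplifier with counter at least $2^{m^2}\ge K$, yielding a gap exceeding $1/4$. Your explicit $D+s$ decomposition (isolating the $\{0,1\}$-valued signal $e_j$ from the large linear-in-$j$ term) is a somewhat cleaner bookkeeping of the same computation, but the construction, the key lemmas, and the conversion to absolute and relative error hardness are all the same as in the paper.
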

\begin{proof}
    Let $\varphi$ be an arbitrary CNF formula, we construct a
  polynomially sized OC-MDP $\B$ with probabilities on transitions
  equal $\frac{1}{2}$, such that $\varphi$ is not satisfiable iff the
  optimal termination time from one of the control states and counter
  value $1$ is equal to $(n+2)(2^{m^2+1}-1) - 6$, where $n$ and $m$
  are the number of clauses and variables in $\varphi$,
  respectively. We will build $\B$ by combining the gadget 
  $\G_{m^2}$ (see Fig.~\ref{fig:long-chain}), where $m$ is the number 
  of propositional
  variables in $\varphi$, with the OC-MDP $\A$ that we obtain from
  Lemma~\ref{lem-lower} for $\varphi$. We let the initial state of
  $\B$ be $p_{m^2}(1)$ and the initial control state $p$ of $\A$
  replaces the control state $p_0$ in $\G_{m^2}$. Let $x_k$
  denote the probability that $\A$ will be initiated at $p(k+1)$ in
  $\B$, which is the same as saying that $\A$ executes $k$
  transitions before reaching control state $p$. Of course $\sum_k x_k
  = 1$ and thanks to Lemma~\ref{lem:high-prob} we have $\sum_{k\geq
    2^{m^2}} x_k > \frac{1}{4}.$

  Assume that $\varphi$ is not satisfiable. We know that the expected
  termination time from $p(k)$ in $\A$ is equal to $k(n+1) - n + 4$
  for every $k$, where $n$ is the number of clauses in
  $\varphi$. Therefore $\val(p_{m^2}(1)) = \sum_k x_k \left(k + k(n+1)
    - n + 4\right)$. Let us consider a Markov chain $M$ with positive
  rewards obtained from $\G_{m^2}$ by ignoring the counter completely
  and assigning reward $n+2$ to each transition. Notice that the
  expected total reward before $M$ terminates is equal to $v := \sum_k
  x_k \cdot k(n+2)$, so $\val(p_{m^2}(1)) - v = \sum_k x_k (n - 4) = n
  - 4$. It is quite straightforward to compute $v$ to be
  $(n+2)(2^{m^2+1}-2)$, and so in the end get that $\val(p_{m^2}(1)) =
  (n+2)(2^{m^2+1}-1) - 6$.

  Next, assume that $\varphi$ is satisfiable. Let $k'$ be the smallest
  number such that the assignment to the propositional variables
  corresponding to $k'$ in the proof of Lemma~\ref{lem-lower}
  satisfies $\varphi$. We know that $k' \leq \prod_{i=1}^m \pi_m$
  which is $\leq 2^{m^2}$ thanks to Lemma~\ref{lem:prime-prod}. We
  also know that for all $k < k'$ we have $\val(p(k)) = k(n+1) - n +
  4$ and for all $k \geq k'$ we have $\val(p(k)) = k(n+1) - n +
  3$. Therefore in this case $\val(p_{m^2}(1)) = \sum_{k < k'} x_k
  \left(k(n+1) - n + 4\right) + \sum_{k \geq k'} x_k \left(k(n+1) - n
    + 3\right) = \sum_k x_k (k(n+1) - n + 4) - \sum_{k \geq k'} x_k
  \leq (n+2)(2^{m^2+1}-1) - 6 - \sum_{k \geq 2^{m^2}} x_k \leq
  (n+2)(2^{m^2+1}-1) - 6 - \frac{1}{4}$, where the last step follows
  from Lemma~\ref{lem:high-prob}. Notice that the number of control
  states in $\B$ is $|Q| \geq m^2 + \sum_m \pi_m (n+1)$, so
  $\frac{1}{8} ((n+2)(2^{m^2+1}-1) - 6) \leq 2^{-|Q|}$.  
\qed
\end{proof}

\end{document}